\theoremstyle{definition}
\newtheorem{proposition}{Proposition}
\renewcommand{\H}{\mathcal{H}}
\newcommand{\E}{\mathcal{E}}
\newcommand{\V}{\mathcal{V}}
\newcommand{\T}{\mathcal{T}}
\newcommand{\Q}{\mathcal{Q}}
\newcommand{\F}{\mathcal{F}}
\newcommand{\C}{\mathcal{C}}
\newcommand{\D}{\mathcal{D}}
\newcommand{\U}{\mathcal{U}}
\begin{document}

\title{\uppercase{Local Decomposition of Hexahedral Singular Nodes into Singular Curves}}
\author{Paul Zhang$^1$ \and Judy (Hsin-Hui) Chiang$^2$ \and Xinyi (Cynthia) Fan$^3$ \and Klara Mundilova$^4$}
\date{$^1$Massachusetts Institute of Technology, Cambridge, Massachusetts, U.S.A. pzpzpzp1@mit.edu \\
$^2$Univeristy of Illinois at Urbana-Champaign, Urbana-Champaign, Illinois, U.S.A. hsinhui2@illinois.edu\\
$^3$ Davidson College, Davidson, North Carolina, U.S.A. cyfan@davidson.edu\\
$^4$Massachusetts Institute of Technology, Cambridge, Massachusetts, U.S.A. kmundil@mit.edu}
% \date{}

%% \abstract{*} and \keywords{*} must be before \maketitle.
\abstract{
Hexahedral (hex) meshing is a long studied topic in geometry processing with many fascinating and challenging associated problems. Hex meshes vary in complexity from structured to unstructured depending on application or domain of interest. Fully structured meshes require that all interior mesh edges are adjacent to exactly four hexes. Edges not satisfying this criteria are considered singular and indicate an unstructured hex mesh. Singular edges join together into singular curves that either form closed cycles, end on the mesh boundary, or end at a singular node, a complex junction of more than two singular curves. While all hex meshes with singularities are unstructured, those with more complex singular nodes tend to have more distorted elements and smaller scaled Jacobian values. In this work, we study the topology of singular nodes. We show that all eight of the most common singular nodes are decomposable into just singular curves. We further show that all singular nodes, regardless of edge valence, are locally decomposable. Finally we demonstrate these decompositions on hex meshes, thereby decreasing their distortion and converting all singular nodes into singular curves. With this decomposition, the enigmatic complexity of 3D singular nodes becomes effectively 2D. 
%This allows us to draw equivalences between singular graphs that may at first appear disparate. 

% A fully structured (grid-based) hex mesh is commonly used with finite differences to solve various PDE but can only model rectilinear domains. Unstructured hex meshes are required to model more complex domains such as those with curved boundaries or nontrivial topology. 
% The majority of recently developed hexahedral meshing algorithms target unstructured meshes for their greater flexibility. 
% Unstructured hex meshing inevitably results in the creation of singularities: mesh edges adjacent to other than four hexes. These edges join together to form curves that must end either on the boundary of the mesh or at a singular node, a junction of multiple singular curves. 
% In this work, we study the topology of singular nodes. We prove that all singular nodes are in fact locally decomposable and provide an algorithm by which one may remove singular nodes from an unstructured hex mesh. This results in meshes where all singularities are singular curves starting and ending on the mesh boundary. We perform this study primarily for topological interest but also point out that the complexity of singular nodes are often tied to factors like geometric quality of the mesh or smoothness of a parameterization placed on the mesh. 
}

\keywords{ hexahedral mesh, singular graph, computational geometry}

\maketitle
\thispagestyle{empty}
\pagestyle{empty}

\section{Introduction}
Hexahedral meshes are commonly used to model complex geometries and to solve numerical PDEs. The results they produce with tri-linear basis functions are often superior to those produced with linear basis functions on tetrahedral meshes \cite{weingarten1994controversy,cifuentes1992performance}. They can be preferable to other types of meshes for their natural local coordinate systems and have been shown to perform better with quadratic basis functions in the context of nonlinear elasto-plastic simulation \cite{Brett1995}. Due to the persistent demand for hex meshes, a variety of methods have been developed to generate them. 

Of particular relevance to our work are frame field based methods, where a smooth boundary aligning coordinate system is computed over the domain, followed by parameterization and hex extraction \cite{ray_practical_2016,Lyon:2016:HRH,cubecover,huang_boundary_2011,solomon_boundary_2017}. Frame field based hex meshing has especially elucidated the significance of singularities within a hex mesh since frame fields inevitably contain singular structures that are reflected in the resulting mesh. The singularities of a frame field and hex mesh typically consist of a set of singular curves that join up in space at singular nodes. These nodes and curves form the singular graph of a field or mesh as illustrated in \autoref{fig:singular_def}.
% Singularities in a frame field form curves in space that translate into sequences of singular edges in a hex mesh. Frame field singular curves either end on the boundary of the domain or at junctions where multiple singular curves collide. 
Since hex meshable singularities are a subset of frame field singularities, much attention has been devoted to the restriction and correction of singularities in frame field computation \cite{li2012, Jiang:2013:AHM}. Other works have derived conditions and algorithms to compute frame fields obeying singular constraints \cite{liu_singularity-constrained_2018, corman_symmetric_2019}. 
% If the singularities of a frame field are meshable, then singular curves(nodes) of the corresponding hex mesh mimic the singular curves(nodes) of the frame field. 

Various works have also targeted the enumeration or simplification of hex mesh singularities. \cite{gao2017robust, gao2015hexahedral, xu2021singularity} derive algorithms to simplify the singular structures of hex meshes with collapse operations on a coarsened mesh. \cite{liu_singularity-constrained_2018} provide an enumeration algorithm for all hex mesh singular nodes, as well as an exhaustive list of the most practically relevant singular node types. While this list only contains eight singular nodes, they already form complex junctions that are challenging to parse or manipulate. 

In contrast to singular nodes, singular curves are simpler to understand, since their local structure is only a 2D singularity 
% (that of a quad mesh or cross field) 
extruded into 3D. 
% \cite{cubecover,liu_sco} note the bijection between hex mesh singular nodes and triangulations of a sphere by which singular nodes can be understood more easily. Despite this simplification
In the fully 2D setting, cross field and quadrilateral (quad) mesh singularities are significantly easier to visualize, and in the case of parameterized cross fields, singularities are governed completely by a few simple conditions \cite{campen2019seamless}. In quad meshes, singular vertex pairs are shown to move almost fluidly within a quad mesh \cite{peng2011connectivity}. None of these results obviously translates to hex meshes, where 3D singular nodes exist as junctions of multiple 2D singularities.

% Other works simplify quad 
% \cite{whisker1999}
%Motivated by enhanced performance in simulation, significant effort has been devoted toward automatic generation of high-quality hexahedral meshes \cite{nieser_cubecover-_2011}. These algorithms use a variety of constructions and mathematical approaches, from frame fields \cite{huang_boundary_2011,solomon_boundary_2017,palmer2020} to polycubes \cite{fang_all-hex_2016, tarini_polycube-maps_2004,huang__2014,hu2016centroidal},
%to octrees \cite{qian2012automatic}, to topology constraints .

In this paper, we investigate the structure of hex mesh singular nodes. We uncover that singular nodes can be simplified by pulling their constituent elements apart
% singular curves apart 
% through simple sheet inflation operations one can effectively cut singular nodes apart
into singular curves. Our results show theoretically and empirically that singular nodes can be removed from a hex mesh thus reducing the complexity of any local neighborhood in a hex mesh. 
% These results also suggest that the right frame field energy should result in no singular nodes at all as singular curves introduce lower distortion to the mesh than singular nodes.
Our contributions are as follows:
\begin{itemize}
	\item We show by construction that all eight of the most practically relevant singular nodes are decomposable into just singular curves.
	\item We show that all singular nodes, regardless of valence, are locally decomposable.
	\item We apply our decompositions to hex meshes demonstrating that entire singular graphs can be separated into independent singular curves.
\end{itemize}

\section{Preliminaries}

Our work is motivated by the following question. What if a singular node is formed when singular curves just barely skim past each other?
% skirt by
% graze each other as they pass by?
% skim past each other?  
% If that were the case could they be untangled by inserting sheets between them and inflating the sheets, 
% fully separate as if untangling thread?
% What if singular curves never fully meet at a singular node, but instead only come geometrically close to each other?
%What if singular nodes are not a topological requirement of unstructured meshes but instead are merely a set of singular curves that come geometrically close to each other? 
If that were the case, then we could separate the curves with a sheet and increase its thickness to force the curves away from each other, thereby untangling the singular node. 
% For example, we could insert sheets between the singular curves 
% One might say that the singular node was decomposed into curves by inflating the sheet.
% To treat this idea more formally,
% we begin with the following definitions.
To formalize this idea, 
% of singular decomposition, 
we begin with the following definitions.

\subsection{Singular Vertices, Curves, and Nodes}
We denote a hex mesh as $\{\V,\H\}$ where $\V$ is a list of vertices embedding the mesh and $\H$ is a list of hexes of the mesh. Let $\F$ denote the quadrilateral faces of the mesh, $\E$ denote the edges of the mesh, and $\deg(e\in\E)$ denote the number of hexes adjacent to edge $e$, i.e., its degree or valence. A \emph{singular edge} is an interior edge $e$ satisfying $\deg(e) \neq 4$. 
We will not treat cases where $\deg(e) \leq 2$ since these are not typically accepted as valid hex meshes.
We will also not consider singular boundary edges in this paper but refer the interested reader to \cite{liu_singularity-constrained_2018} for the corresponding definition. 
For our purposes, one can ignore boundary singularities or push them all to the interior by adding one layer of padding to the hex mesh boundary. 
A \emph{singular vertex} is a vertex of the mesh that is adjacent to any singular edge. A \emph{singular node} is a vertex of the mesh that is adjacent to more than two singular edges. 
A \emph{singular curve} is an alternating sequence of singular edges and singular vertices that either forms closed cycles or, ends at a singular node or boundary vertex.
Note that we have chosen to deviate from the language of \cite{liu_singularity-constrained_2018} by distinguishing singular nodes from singular vertices. Singular nodes are reserved for the junctions of multiple singular curves and will be the primary focus of this work. 
\autoref{fig:singular_def} depicts a summary of this terminology. 

\begin{figure}
    \centering
    \includegraphics[width=.49\columnwidth]{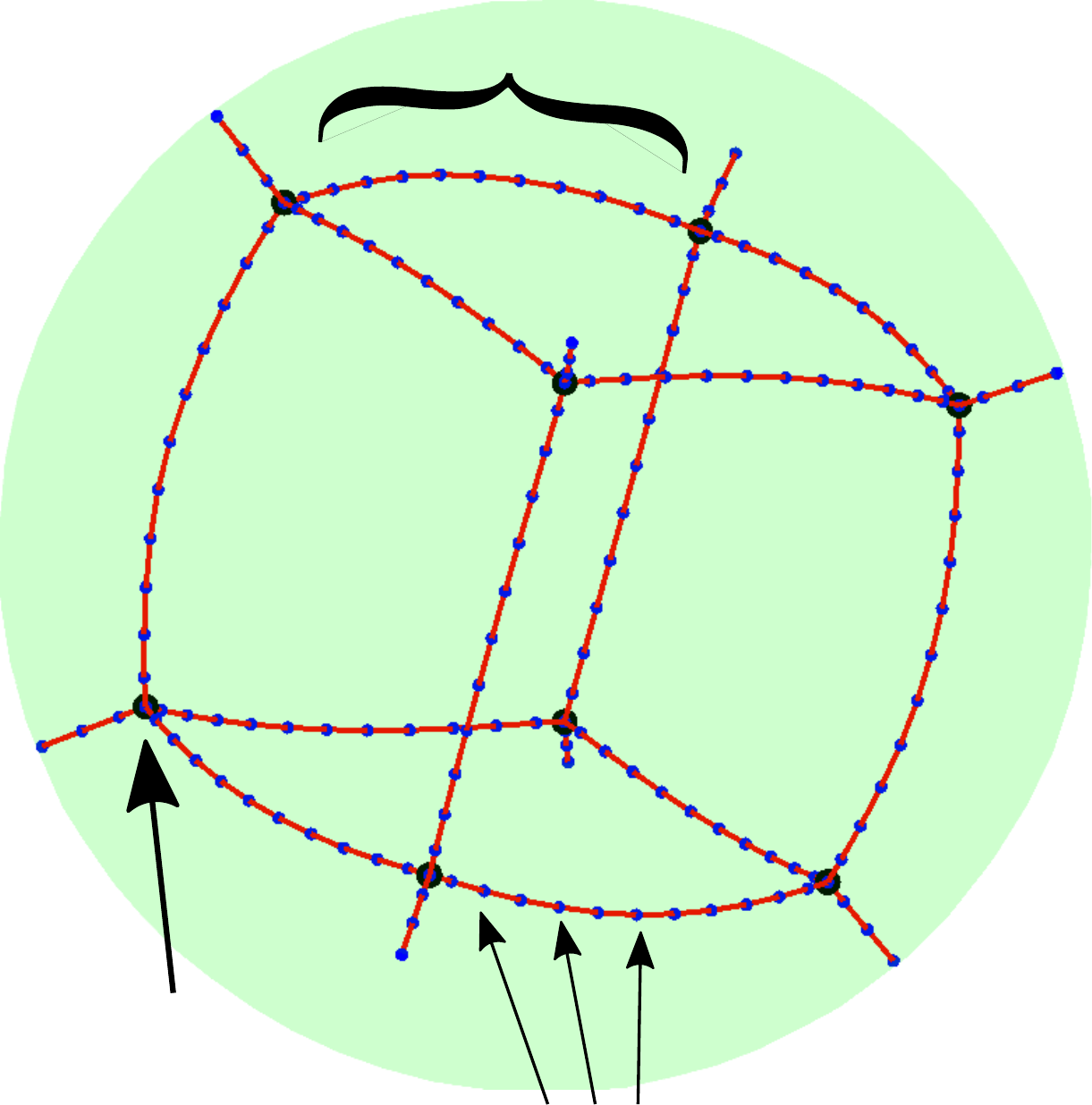}
    \includegraphics[width=.49\columnwidth]{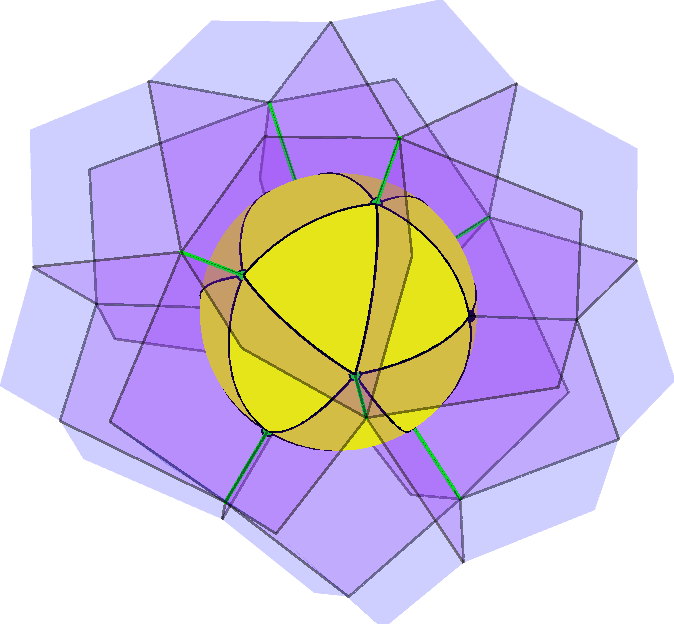}
    \put(-225,4){Singular node}
    \put(-185,105){Singular curve}
    \put(-165,-7){Singular vertices}
    \caption{(Left) The singular graph of a hex mesh of a sphere is shown. Singular edges are colored red, singular nodes are large black circles and singular vertices are small blue vertices. (Right) A close-up view of a singular node. Faces adjacent to the singular node are displayed in purple. A yellow sphere is overlayed on top of the singular node. Its intersection with the local hex mesh partitions the sphere into triangular regions.}
    \label{fig:singular_def}
\end{figure}

% \begin{figure}
%     \centering
%     \includegraphics[width=1\columnwidth]{figures/fig2_2.png}
%     \caption{sphere triangulation bijection}
%     \label{fig:sphere_triangulation}
% \end{figure}

For a singular node $v\in\V$, we denote $\T(v)$ as the triangle mesh in bijection with that node according to \cite{cubecover}. This bijection is formed by intersecting the singular node of the hex mesh with an infinitesimally small sphere. Since the intersection of a corner of a hex with a sphere forms a triangle, the hexes adjacent to the singular node partition the sphere into triangular regions thus forming a sphere triangulation. This is depicted in \autoref{fig:singular_def}. The sphere triangulation encodes the \emph{singular node type}. If two sphere triangulations are isomorphic, then their singular nodes are of the same type. The \emph{signature} of a singular node is a list of numbers indicating how many adjacent singular edges are of each degree. Since singular edges have degree 3 or higher, the signature of a node starts with the number of edges with degree 3. For singular nodes whose adjacent singular edges have only valence 3, 4, or 5, the signature also uniquely encodes the singular node type \cite{liu_singularity-constrained_2018}. For this reason we will frequently identify singular nodes by their signature e.g. (4,0,0) is the signature identifying the singular node type generated by subdividing a single tetrahedron into four hexes as illustrated in \autoref{fig:333555}. 

\subsection{Sheet Inflation}
Let a \emph{sheet} $\Q\subset\F$ be a manifold quad mesh 
whose boundary is a subset of the boundary of $\H$.
% that separates $\H$ into two connected components.
A \emph{sheet inflation} based on sheet $\Q$ is a mesh modifying operation by which each $q\in\Q$ is thickened from a quad face to a hex cell \cite{ledoux2010topological}. 
%While it is also possible to define sheet inflation for non-manifold $\Q$, we will restrict to the manifold case for simplicity. 
A sheet inflation operation is depicted in \autoref{fig:sheet_insert}.

\begin{figure}
    \centering
    \includegraphics[width=.45\columnwidth]{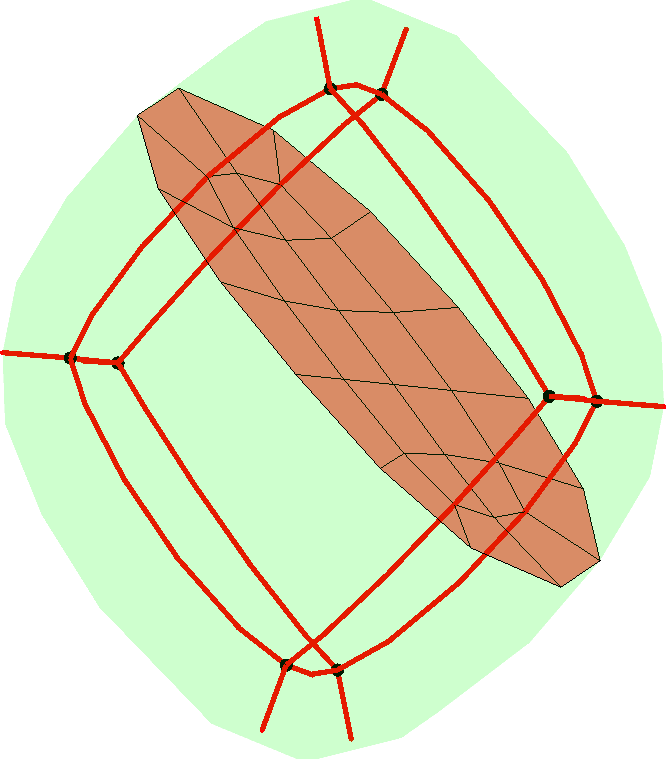}
    \includegraphics[width=.45\columnwidth]{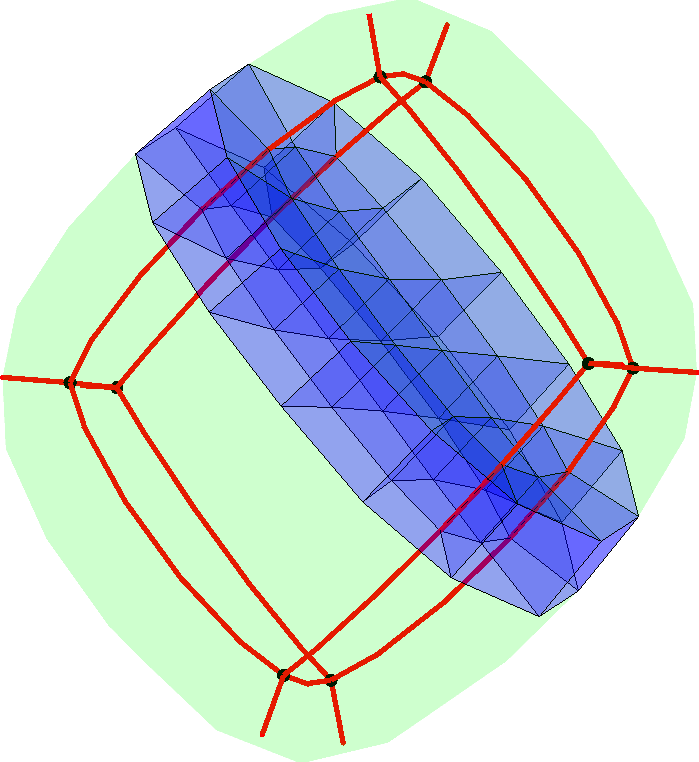}
    \caption{(Left) Red quads indicate a sheet inside of a hex mesh of an ellipsoid. Red curves depict its singular graph. The sheet is manifold with boundary on the boundary of the hex mesh. (Right) Blue quads indicate all faces of the inflated sheet.}
    \label{fig:sheet_insert}
\end{figure}

In the case that the inflated sheet passes through a singular node, the singular type of that node can change. 
One can interpret this as the sheet passing an infinitesimally small gap between singular curves and forcing them apart, or as cutting a singular node into separate pieces. 
As singular nodes are often more easily visualized as sphere triangulations we describe here how sheet inflation through a singular node corresponds to a 
% It is often easier to visualize this type of topology change with sphere triangulations. We describe here how a sheet inflation through a singular node changes its singular type and how that corresponds to a 
\emph{splitting} of the sphere triangulation. This splitting is illustrated by \autoref{fig:sheet_insert_tri}.

Given a singular node $v$, and a sheet $\Q\subset\F$ that passes through $v$, the faces of $\Q$ map to edges of $\T(v)$. These edges trace out a cycle in the graph of $\T(v)$ effectively partitioning the sphere into two disk triangulations $\D_1$ and $\D_2$. When the sheet is thickened into a layer of hexes, these two disk triangulations are cut apart. Then the disks are patched into sphere triangulations again by adding one new vertex to each disk and attaching triangles from the boundary of each disk to their respective new vertex.
This splitting operation is illustrated in \autoref{fig:sheet_insert_tri}. The end result is two sphere triangulations one built from $\D_1$ and one built from $\D_2$. 

\begin{figure}
    \centering
    \includegraphics[width=.48\columnwidth]{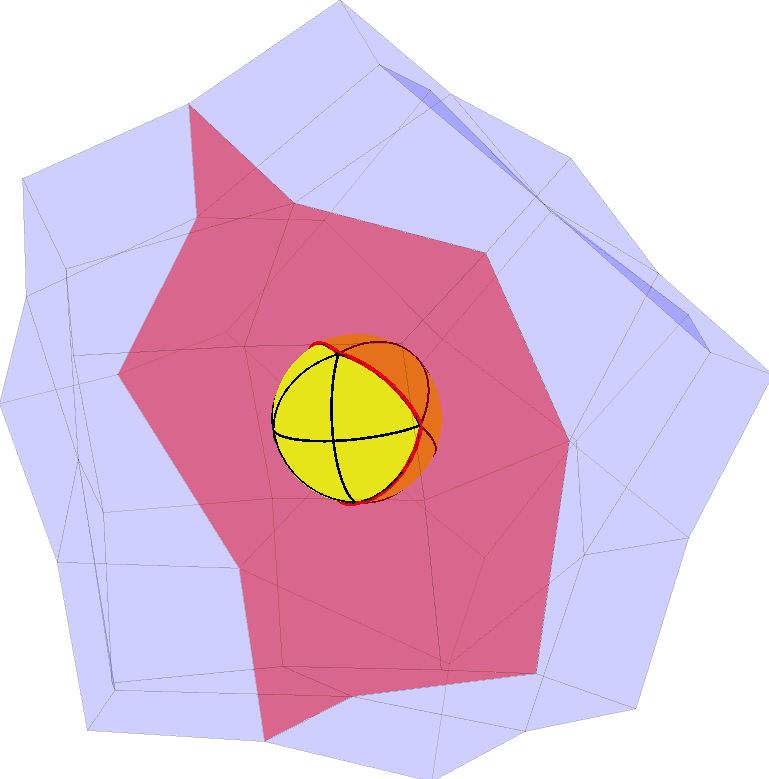}
    \includegraphics[width=.48\columnwidth]{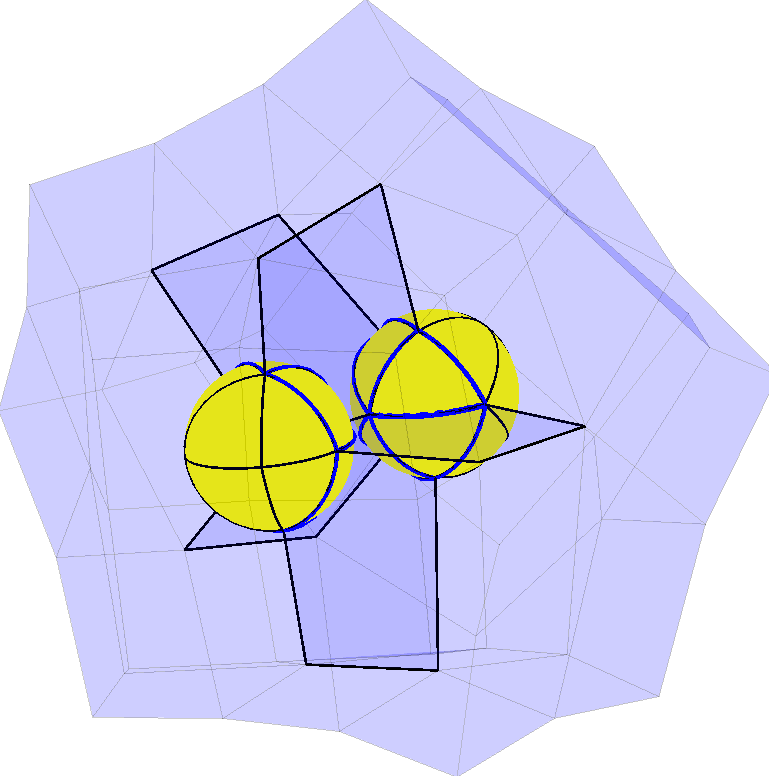}
    \caption{(Left) The yellow sphere triangulation indicates the structure of a singular node. Red quads indicate a sheet intersecting the singular node. The sheet intersects the sphere triangulation on a cycle of red curves that divide the sphere triangulation into two disks. (Right) Two singular nodes are visualized from inflation of the red sheet on the left. Sphere triangulations of the resulting two nodes are shown. Blue quad faces indicate newly created faces from the inflation. Blue edges indicate newly created edges in each sphere triangulation. }
    \label{fig:sheet_insert_tri}
\end{figure}

\section{Singular Node Decomposition (Valence 3,4,5)}
% \subsection{Decomposing Valence 3, 4, 5 Singular Nodes}
We are now equipped with the mechanism by which all practically relevant singular nodes can be decomposed into singular curves. These nodes are enumerated in \cite[Figure 6]{liu_singularity-constrained_2018} and will also be shown at the beginning of each respective decomposition. Various singular decompositions will be depicted throughout the remainder of this paper. Valence 3 singular curves will be red, valence 5 singular curves will be green and higher valence singular curves will be blue. We omit drawing interior regular edges to minimize clutter. Since these operations can be challenging to understand from static images, we also include supplemental videos for many of the decompositions.

% We start with the (4,0,0), (2,2,2), and (0,4,4) singular nodes whose decompositions via sheet inflations are illustrated in \autoref{fig:333555}. 
In the first column of \autoref{fig:333555} we start with the (4,0,0) singular node, which consists of four valence 3 singular curves joined at a junction. With a single sheet inflation, this singular node is revealed to actually be two valence 3 singular curves that pass each other orthogonally. A similar theme follows for (2,2,2) in the second column which is revealed to be  a valence 3 and a valence 5 singular curve passing each other orthogonally. Finally in the third column, the (0,4,4) node decomposes into two valence 5 singular curves passing each other orthogonally. 

% \begin{figure*}
%     \centering
%     \includegraphics[width=.22\textwidth]{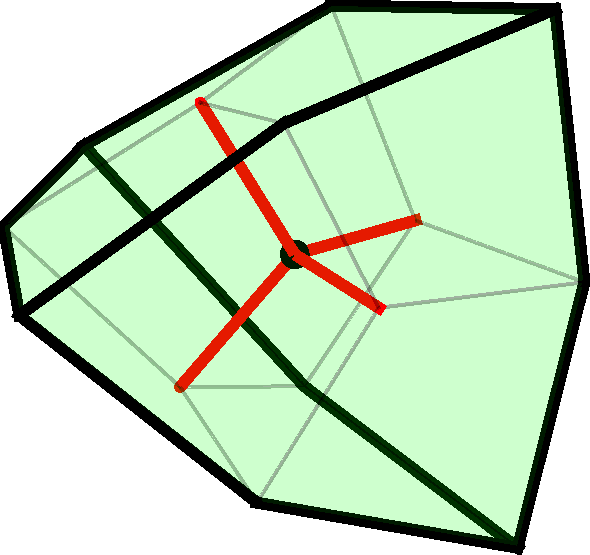}\hfill
%     \includegraphics[width=.22\textwidth]{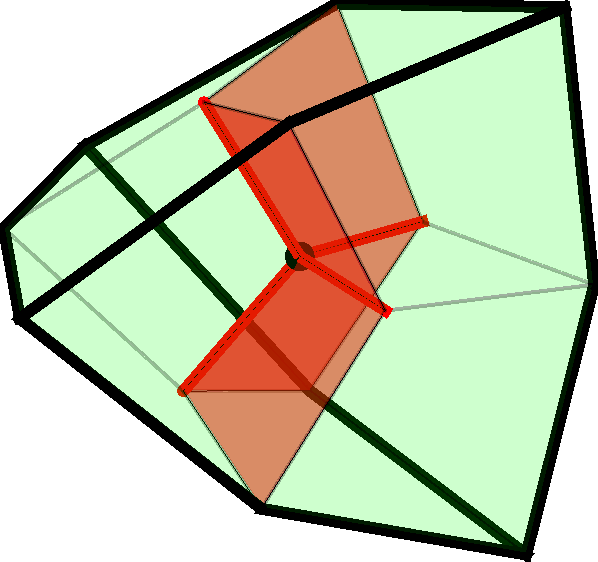}\hfill
%     \includegraphics[width=.26\textwidth]{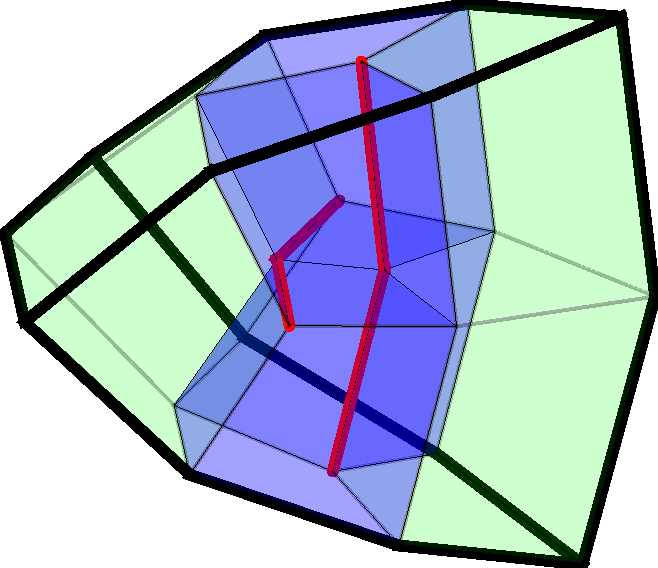}\hfill
%     \includegraphics[width=.26\textwidth]{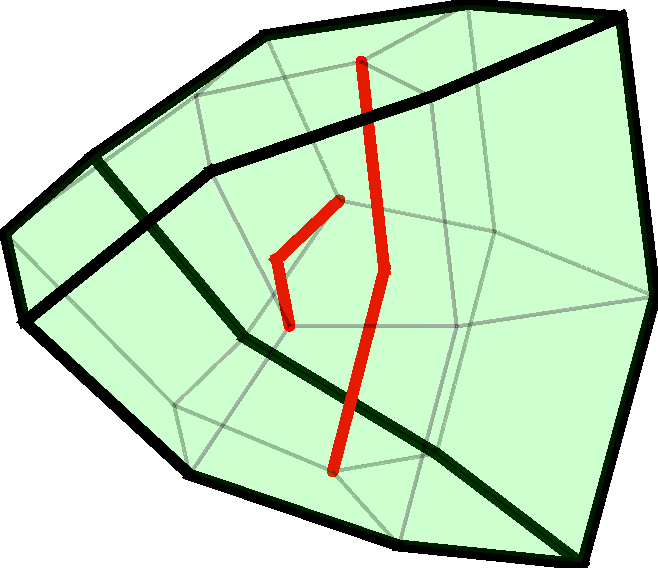}
%     %\\
%     %\includegraphics[width=.23\textwidth]{}\hfill
%     %\includegraphics[width=.23\textwidth]{}\hfill
%     %\includegraphics[width=.22\textwidth]{}\hfill
%     %\includegraphics[width=.22\textwidth]{}
%     \\
%     \includegraphics[width=.20\textwidth]{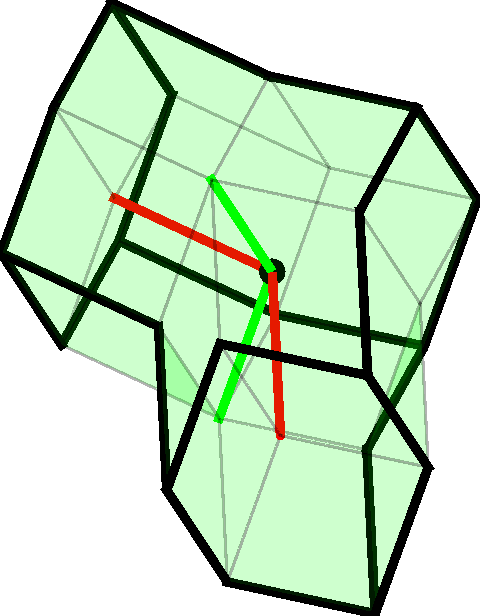}
%     \includegraphics[width=.20\textwidth]{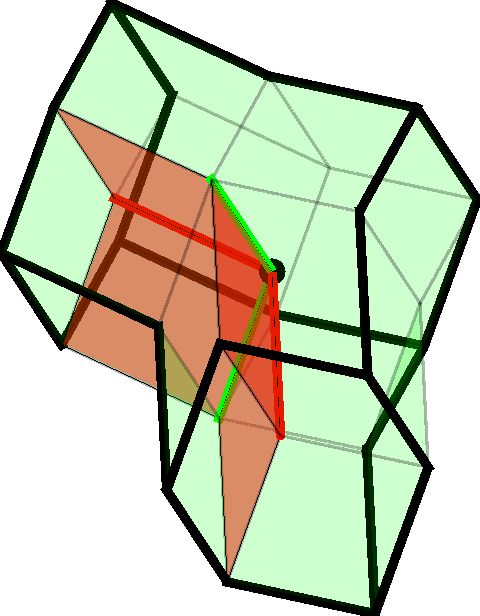}
%     \includegraphics[width=.28\textwidth]{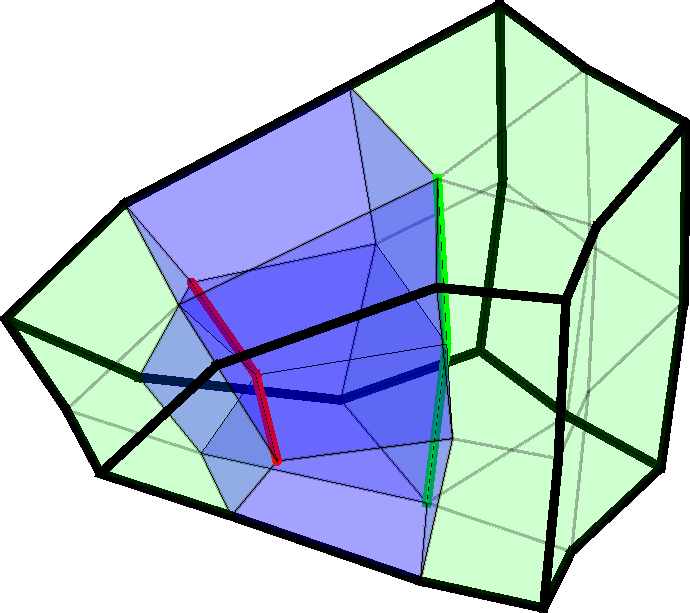}
%     \includegraphics[width=.28\textwidth]{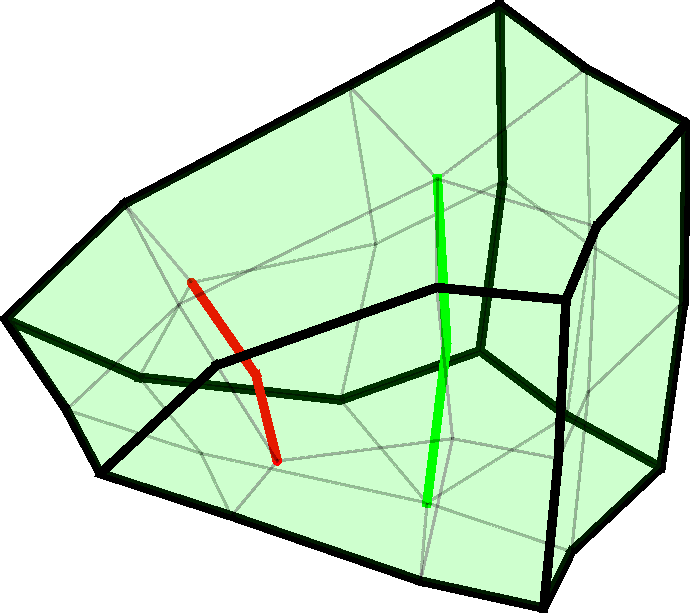}
%     \\
%     \includegraphics[width=.23\textwidth]{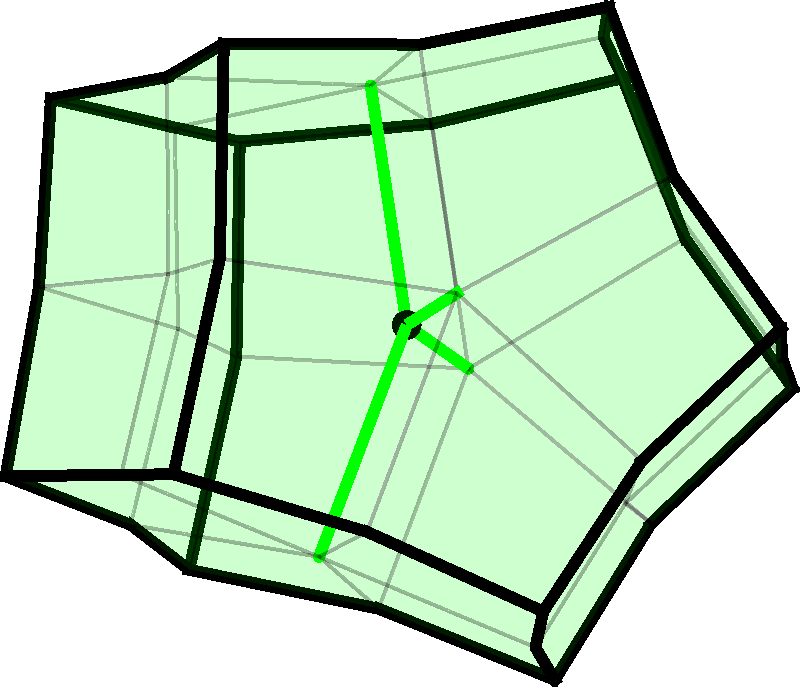}
%     \includegraphics[width=.23\textwidth]{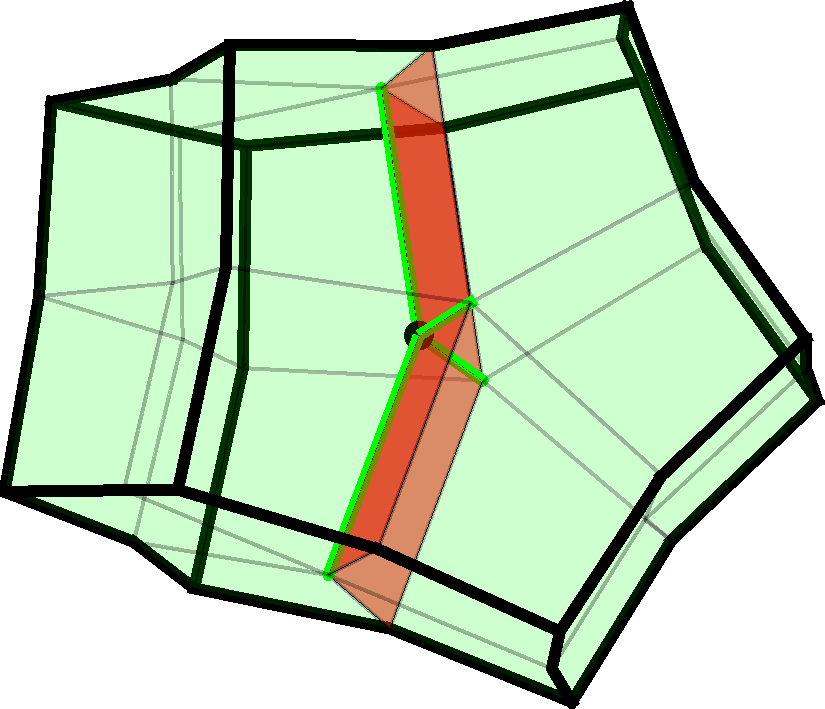}
%     \includegraphics[width=.25\textwidth]{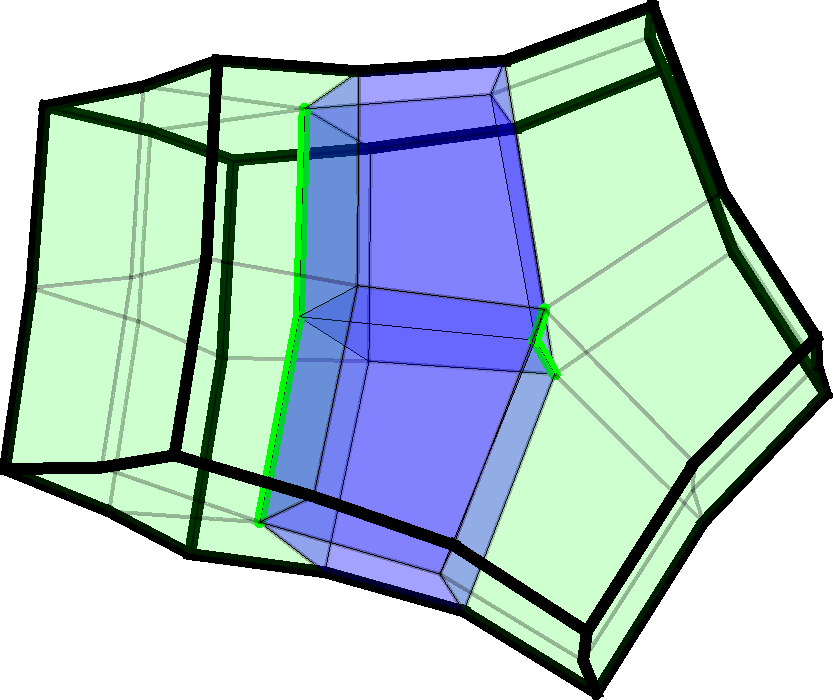}
%     \includegraphics[width=.25\textwidth]{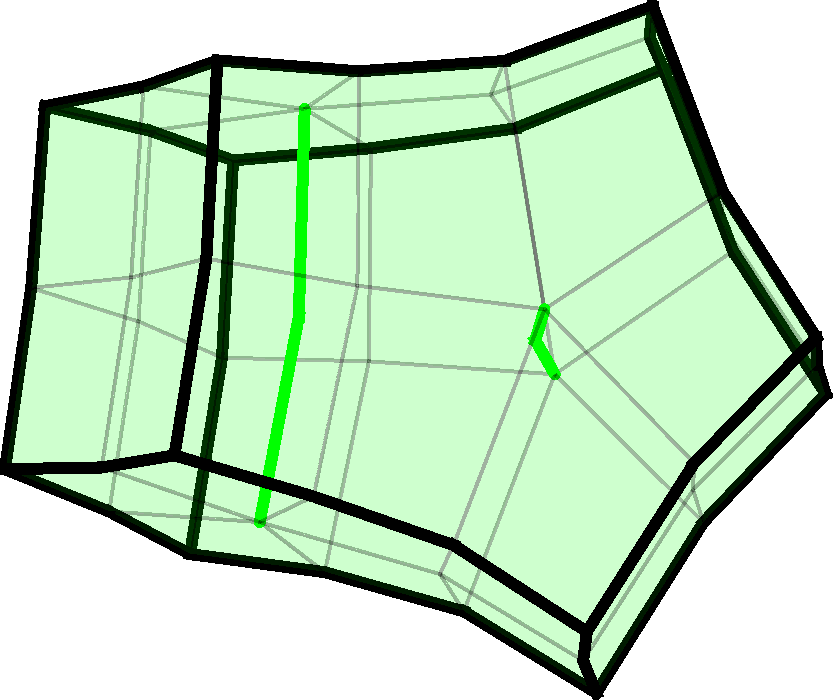}
%     \caption{Caption}
%     \label{fig:333555}
% \end{figure*}

\begin{figure}


    \centering
    \includegraphics[width=.32\columnwidth]{figures/400/results_sing400_68_0.png}
    \hfill\includegraphics[width=.25\columnwidth]{figures/222/results_sing3_92_0.png}\hfill
\includegraphics[width=.32\columnwidth]{figures/044/results_sing044_73_0.png}
\\
\includegraphics[width=.32\columnwidth]{figures/400/results_sing400_68_1.png}
\hfill\includegraphics[width=.25\columnwidth]{figures/222/results_sing3_92_1.png}\hfill
\includegraphics[width=.32\columnwidth]{figures/044/results_sing044_73_1.png}
\\
\includegraphics[width=.32\columnwidth]{figures/400/results_sing400_68_2.png}
\includegraphics[width=.32\columnwidth]{figures/222/results_sing3_92_2.png}
\includegraphics[width=.32\columnwidth]{figures/044/results_sing044_73_2.png}
\\
\includegraphics[width=.32\columnwidth]{figures/400/results_sing400_68_3.png}
\includegraphics[width=.32\columnwidth]{figures/222/results_sing3_92_3.png}
\includegraphics[width=.32\columnwidth]{figures/044/results_sing044_73_3.png}
    \caption{From left to right the (4,0,0), (2,2,2), and (0,4,4) singular nodes are depicted. Top to bottom indicates steps to decompose each singular node. Red quads indicate the sheet to be inflated. Blue quads indicate faces of the newly inflated hexes. Red(Green) edges are valence 3(5) singularities. One sheet inflation is sufficient to decompose each of these nodes. }
    \label{fig:333555}
\end{figure}

From these three examples, it is tempting to think all singular nodes may consist of singular curves glued together orthogonally. One might conclude as well that the number of singular curves of a particular valence meeting at a singular node from this construction must be even. The (1,3,3) singular node, shown in \autoref{fig:133}, presents a curious counterexample. Since it consists of one valence 3 singular curve and three valence 5 singular curves, it is impossible to decompose these into two singular curves passing each other orthogonally in a valid hex mesh. 

This conundrum is resolved by realizing that one of the regular edges adjacent to this singular node is actually a pair of valence 3 and valence 5 singular curves, glued together in parallel. Another way to understand this node is that one valence 5 singular curve has split an otherwise parallel pair of valence 3 and 5 curves. The decomposition of this node into one valence 3 and two valence 5 singular curves is illustrated in \autoref{fig:133}.

% \subsubsection{Node (1 3 3)}
\begin{figure*}
    \centering
    \includegraphics[width=.16\textwidth]{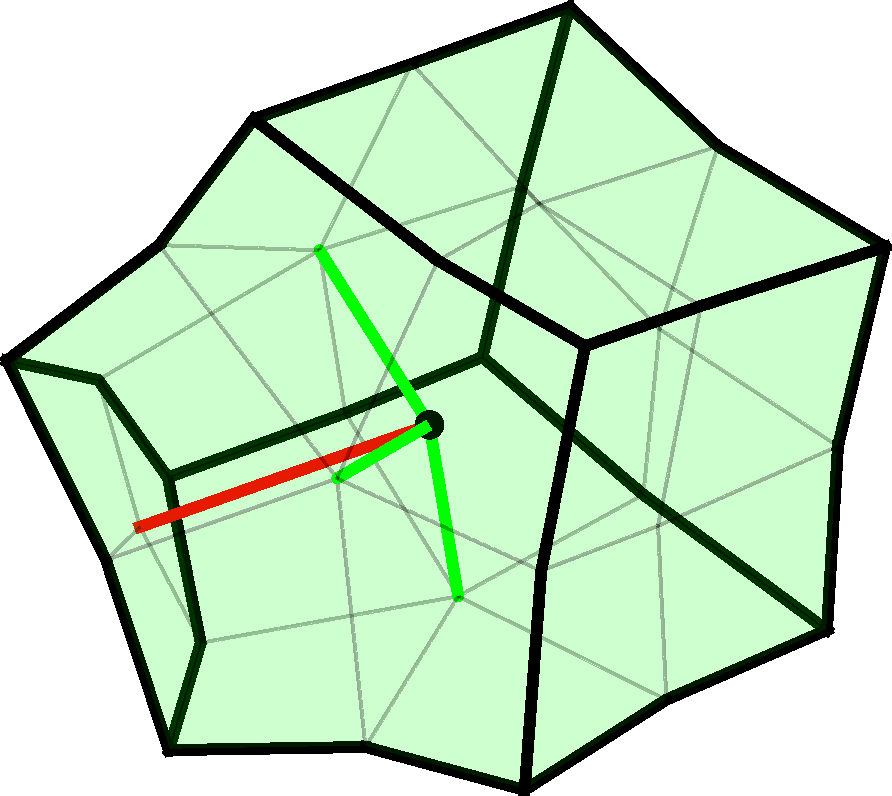}
    \includegraphics[width=.16\textwidth]{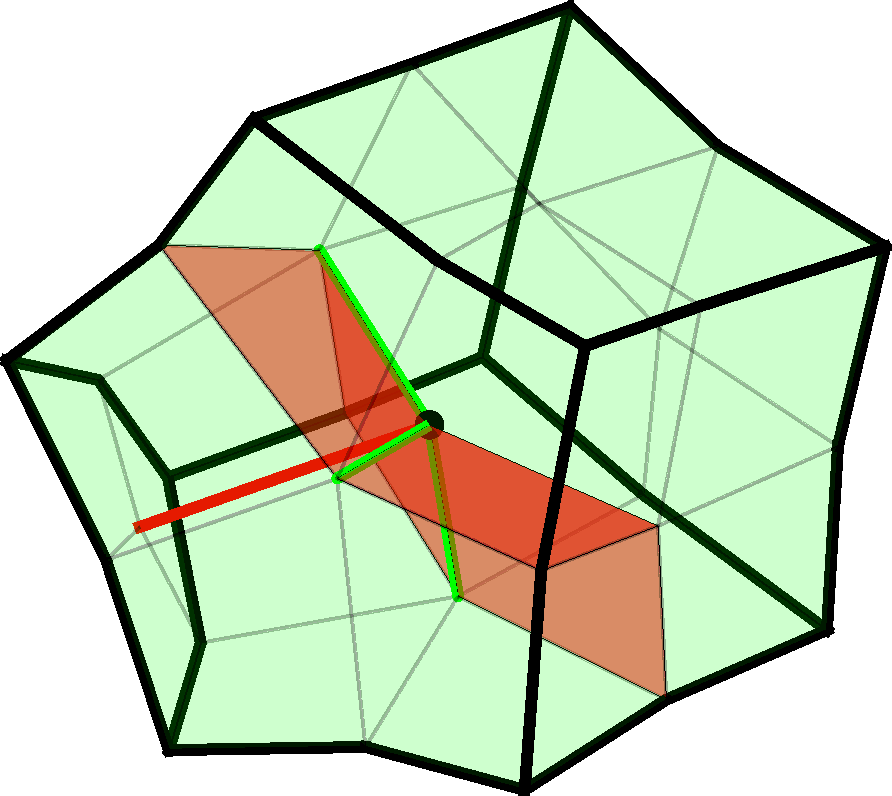}
    \includegraphics[width=.16\textwidth]{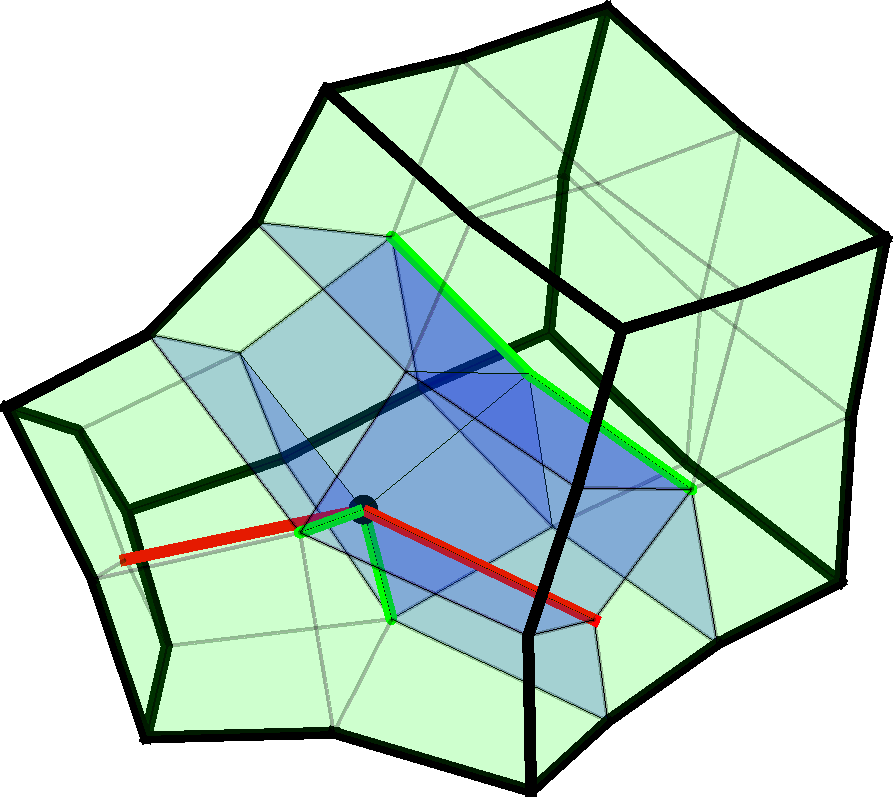}
    \includegraphics[width=.16\textwidth]{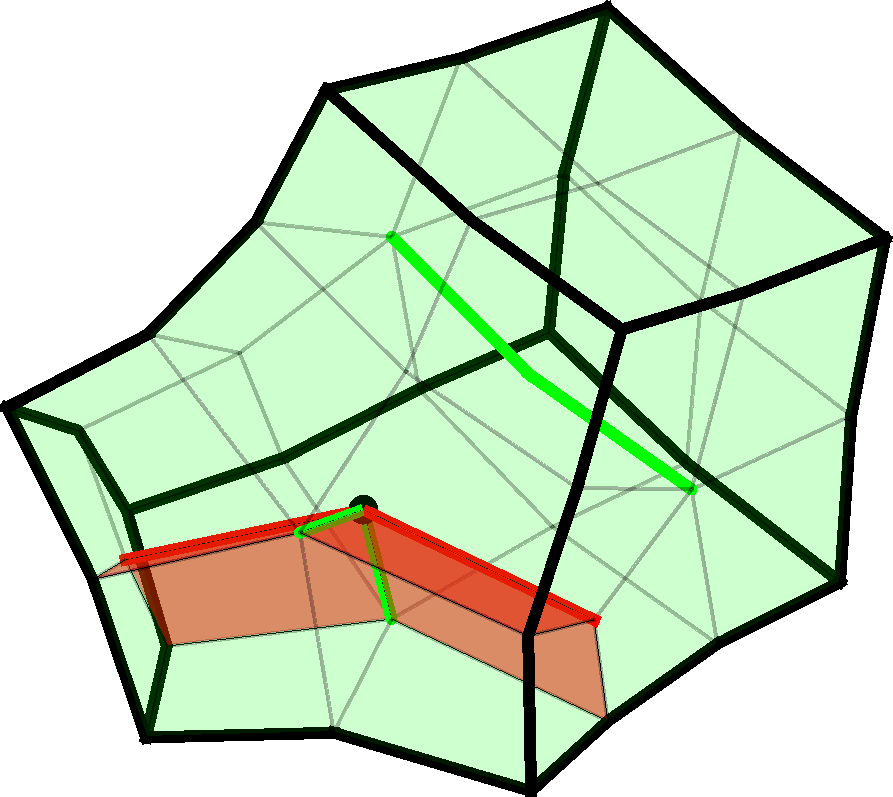}
    \includegraphics[width=.16\textwidth]{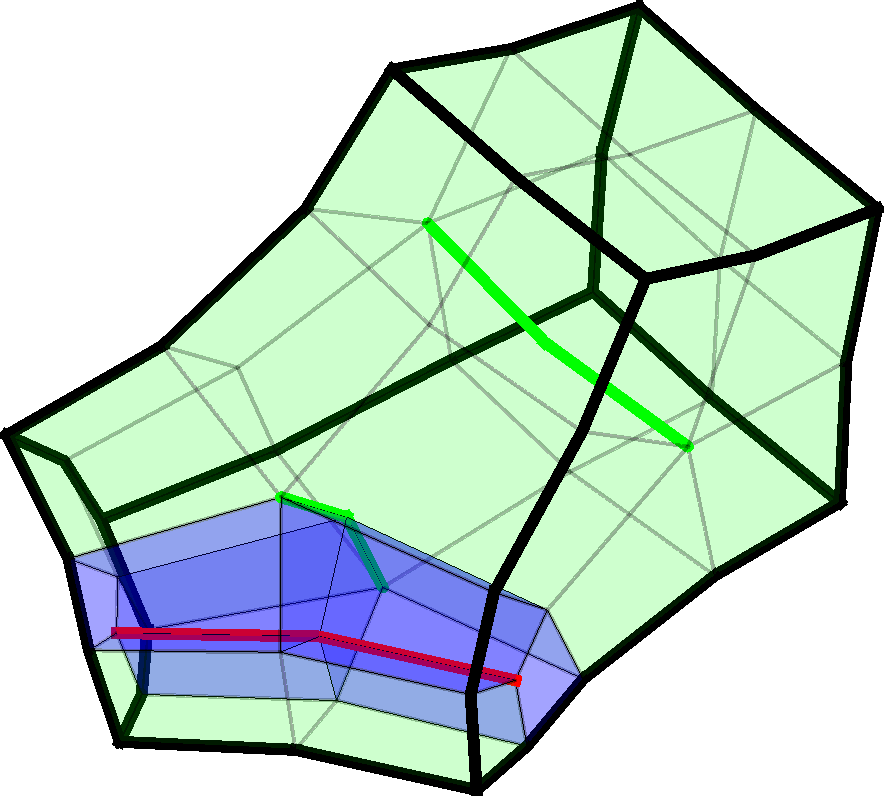}
    \includegraphics[width=.16\textwidth]{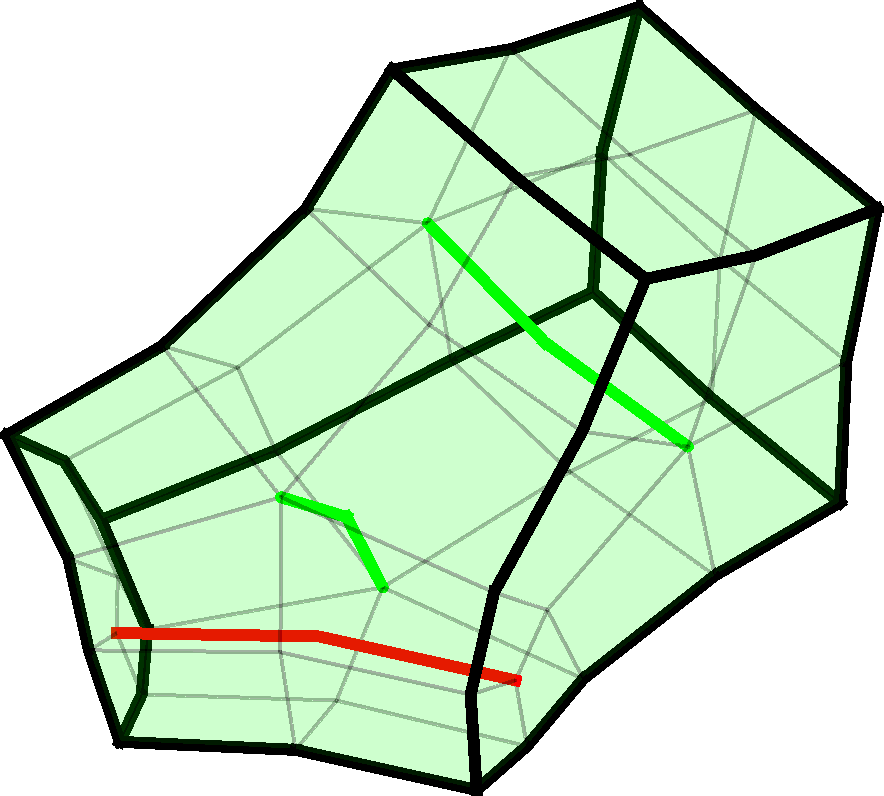}
    \caption{The (1,3,3) singular node is decomposed into two valence 5 and one valence 3 curve via two sheet inflations.}
    \label{fig:133}
\end{figure*}

In \autoref{fig:036} we decompose the (0,3,6) singular node into three valence 5 singular curves. It is also valuable to think of the (0,3,6) node as a combination of two (0,4,4) singular nodes. In this way, one only needs to decompose a singular node into constituent nodes that are already known to be decomposable. The fourth image of \autoref{fig:036} shows exactly this decomposition which we will notate as:
$$(0,3,6)=(0,4,4) +_5 (0,4,4)$$
The subscript 5 on the plus symbol denotes that two singular nodes are joined along a valence 5 edge, followed by an inverse sheet inflation (sheet collapse). This notation serves only as a shorthand and does not uniquely encode how to glue two singular nodes together. It serves more as a recipe than an equation with any algebraic properties. For completion, we show the rest of the decomposition of the constituent (0,4,4) nodes.

% \subsubsection{Node (0 3 6)}
\begin{figure*}
    \centering
    \includegraphics[width=.115\textwidth]{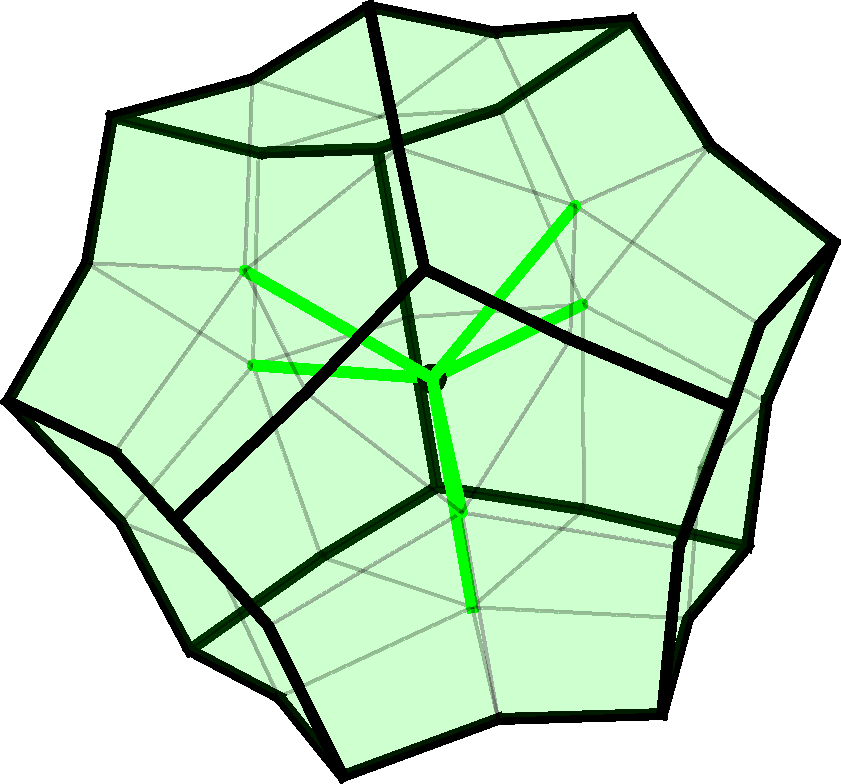}
    \includegraphics[width=.115\textwidth]{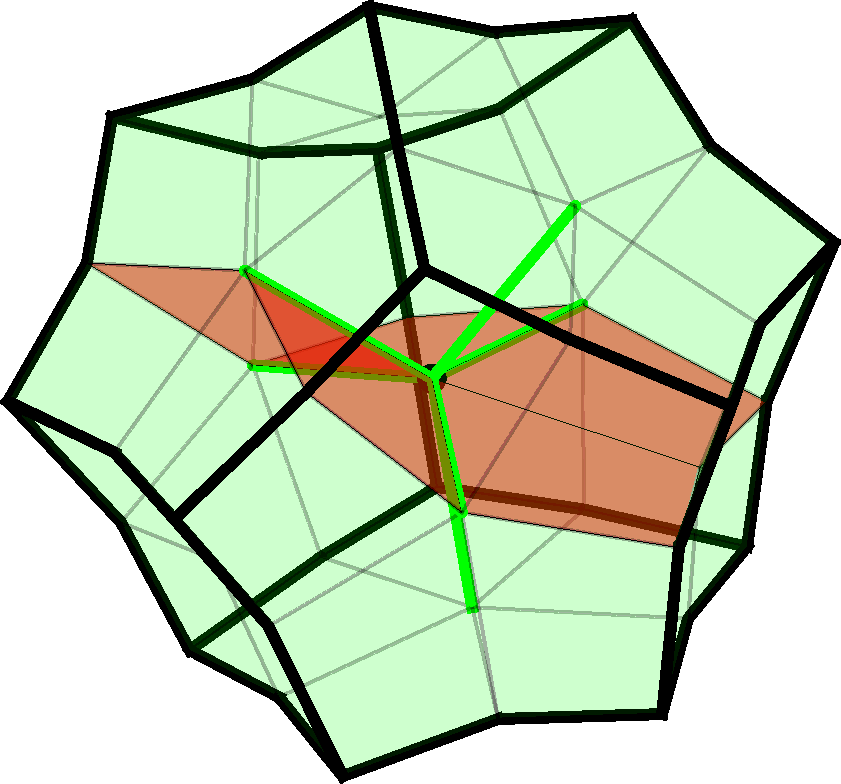}
    \includegraphics[width=.115\textwidth]{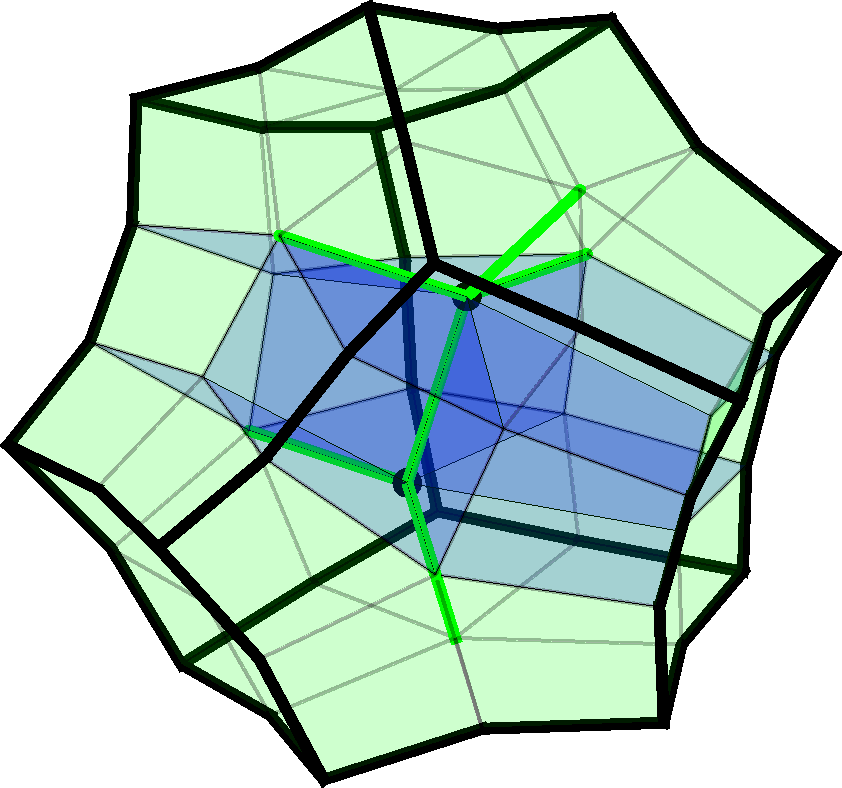}
    \includegraphics[width=.115\textwidth]{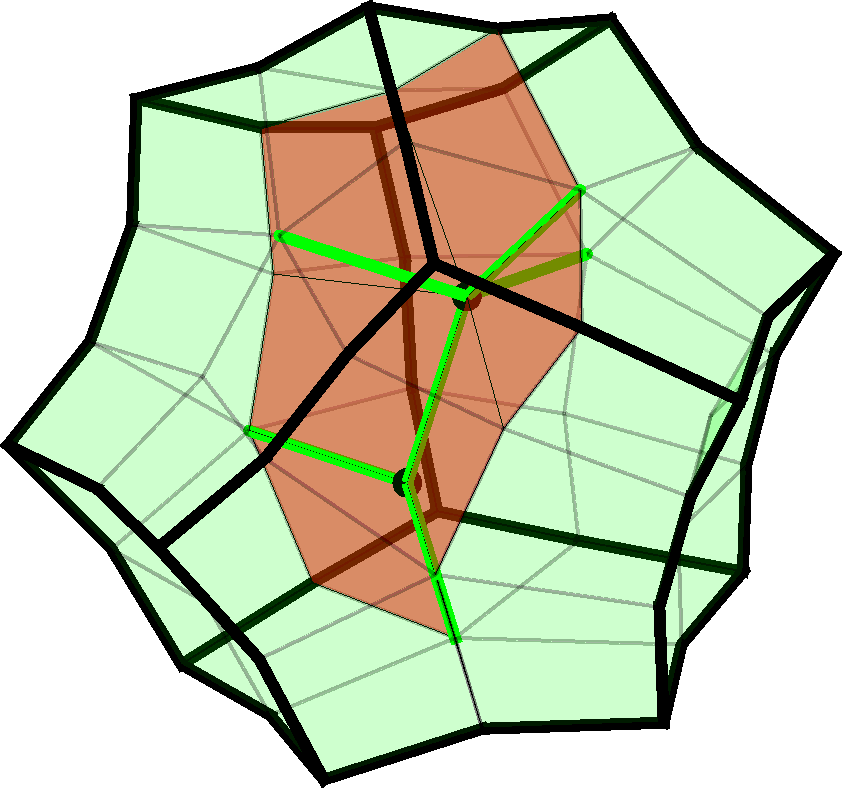}
    % \\
    \includegraphics[width=.115\textwidth]{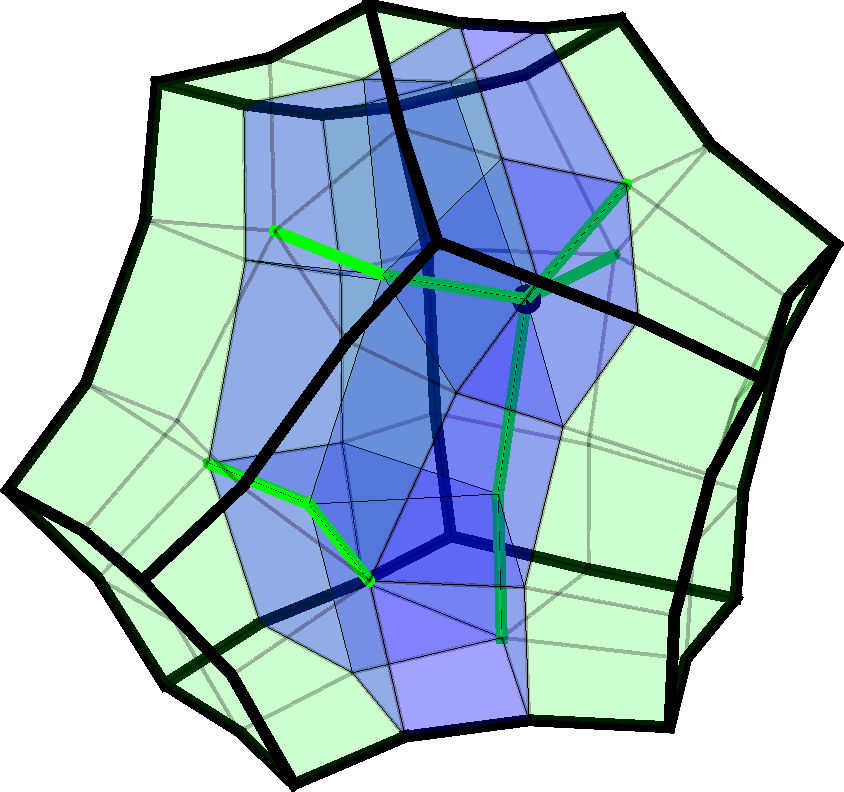}
    \includegraphics[width=.115\textwidth]{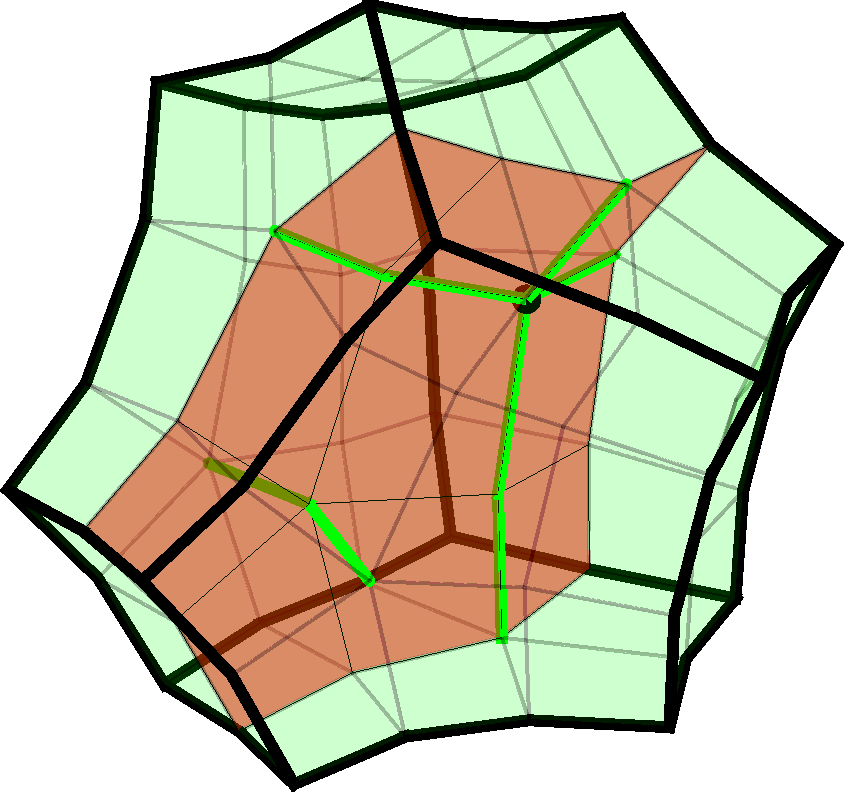}
    \includegraphics[width=.115\textwidth]{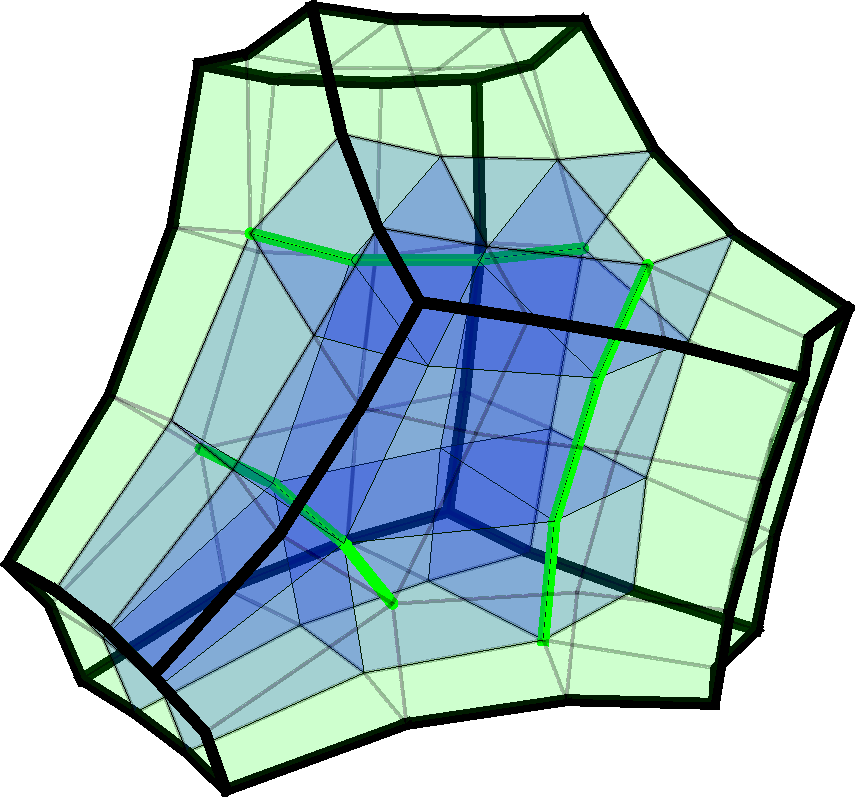}
    \includegraphics[width=.115\textwidth]{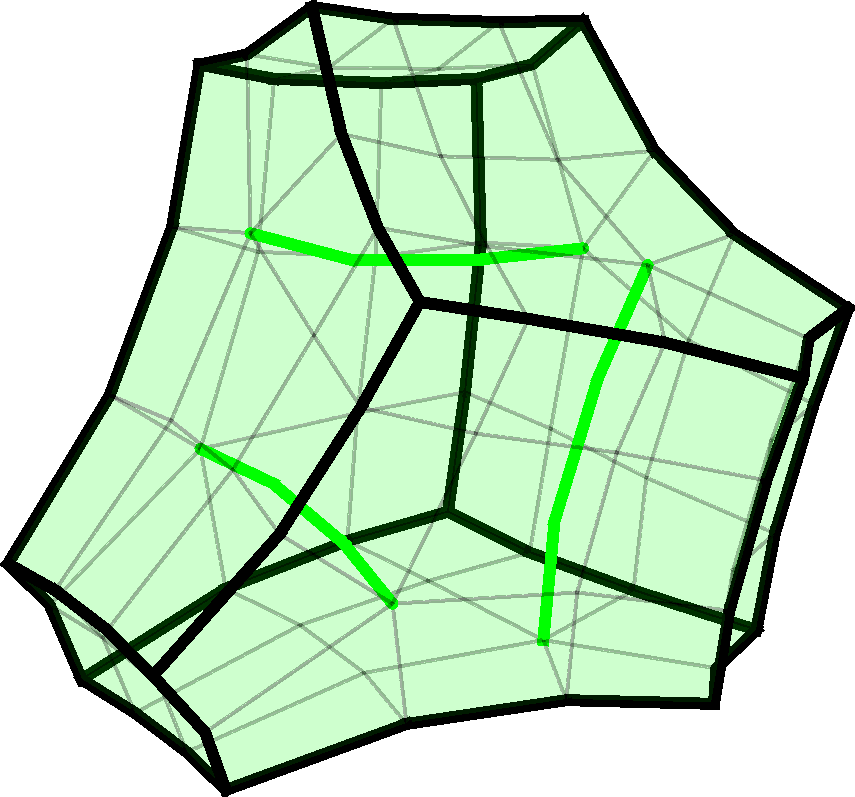}
    \caption{The (0,3,6) singular node is decomposed into three valence 5 curves via three sheet inflations.}
    \label{fig:036}
\end{figure*}

The (0,2,8) singular node is decomposed in \autoref{fig:028} into four valence 5 curves. In the fourth image, we see that 
$$(0,2,8)=(0,3,6) +_5 (0,4,4)$$
and both constituent singular nodes have already been shown to be decomposable. For completion, the rest of the decomposition steps are also shown.

% \subsubsection{Node (0 2 8)}
\begin{figure*}
    \centering
    \includegraphics[width=.15\textwidth]{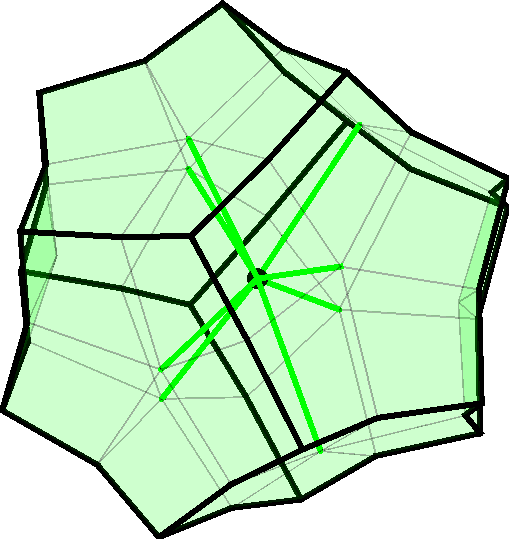}
    \includegraphics[width=.15\textwidth]{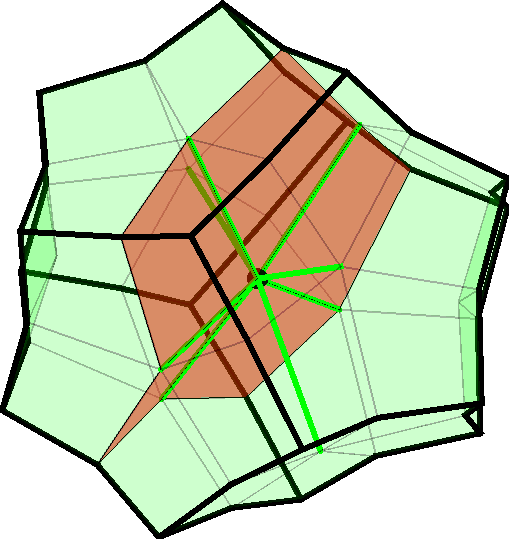}
    \includegraphics[width=.15\textwidth]{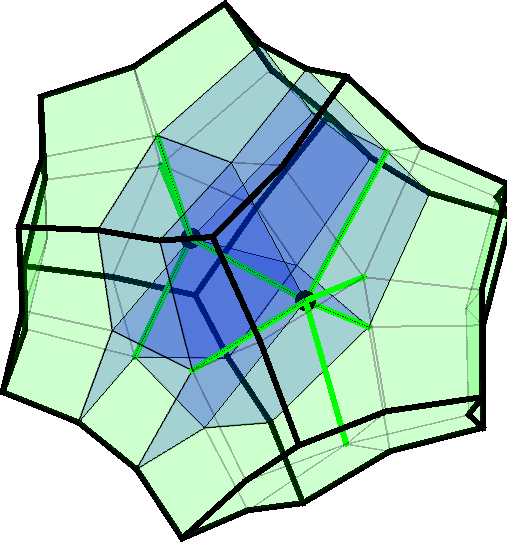}
    \includegraphics[width=.15\textwidth]{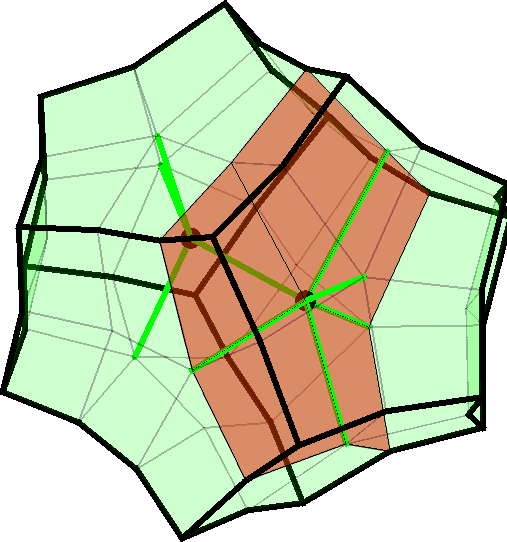}
    \includegraphics[width=.15\textwidth]{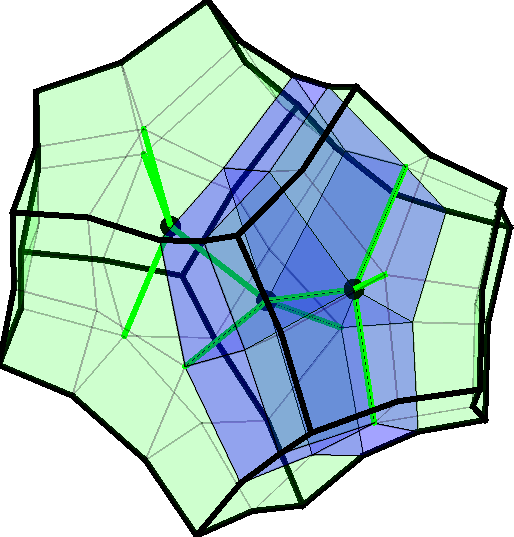}
    \includegraphics[width=.15\textwidth]{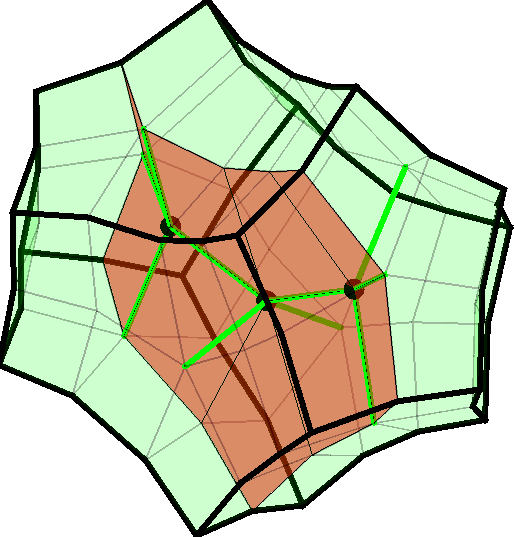}
    \\
    \includegraphics[width=.15\textwidth]{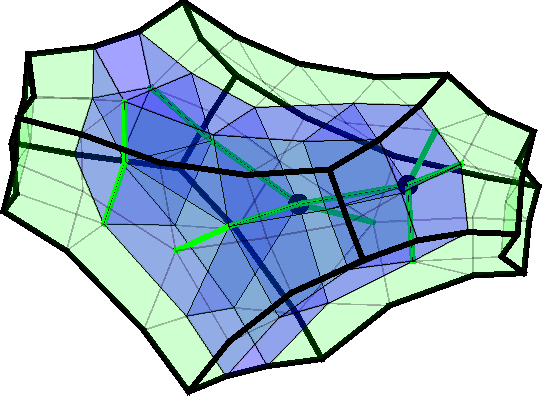}
    \includegraphics[width=.15\textwidth]{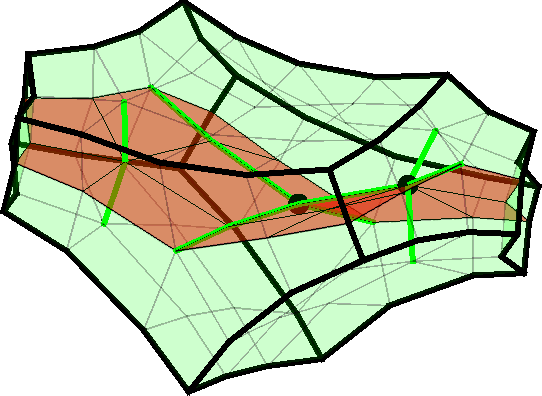}
    \includegraphics[width=.15\textwidth]{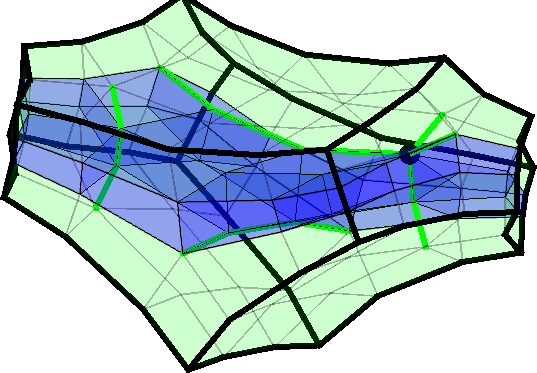}
    \includegraphics[width=.15\textwidth]{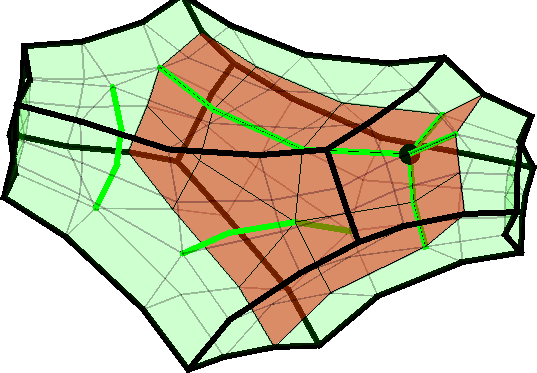}
    \includegraphics[width=.15\textwidth]{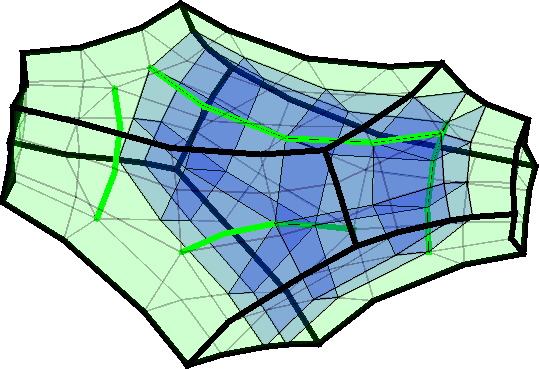}
    \includegraphics[width=.15\textwidth]{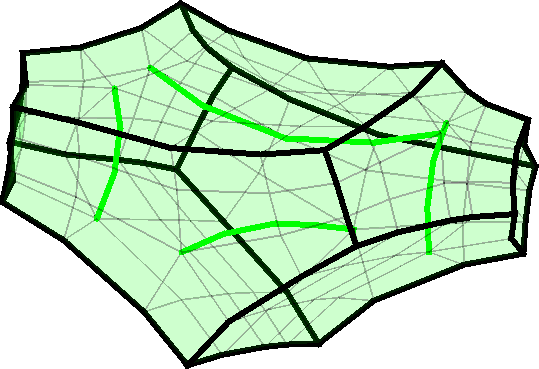}
    \caption{The (0,2,8) singular node is decomposed into four valence 5 curves via five sheet inflations.}
    \label{fig:028}
\end{figure*}

The (2,0,6) singular node is decomposed in \autoref{fig:206} into four valence 5 and two valence 3 curves. By the fourth image we see
$$(2,0,6)=(1,3,3) +_4 (1,3,3).$$
The rest of this decomposition is still interesting however as it introduces a valence 6 singularity in the fifth image and a singular node with signature (1,3,3,1). This valence 6 singular curve is removed in image 9 by decomposing it into two valence 5 curves. 

% \subsubsection{Node (2 0 6)}
\begin{figure*}
    \centering
    \includegraphics[width=.115\textwidth]{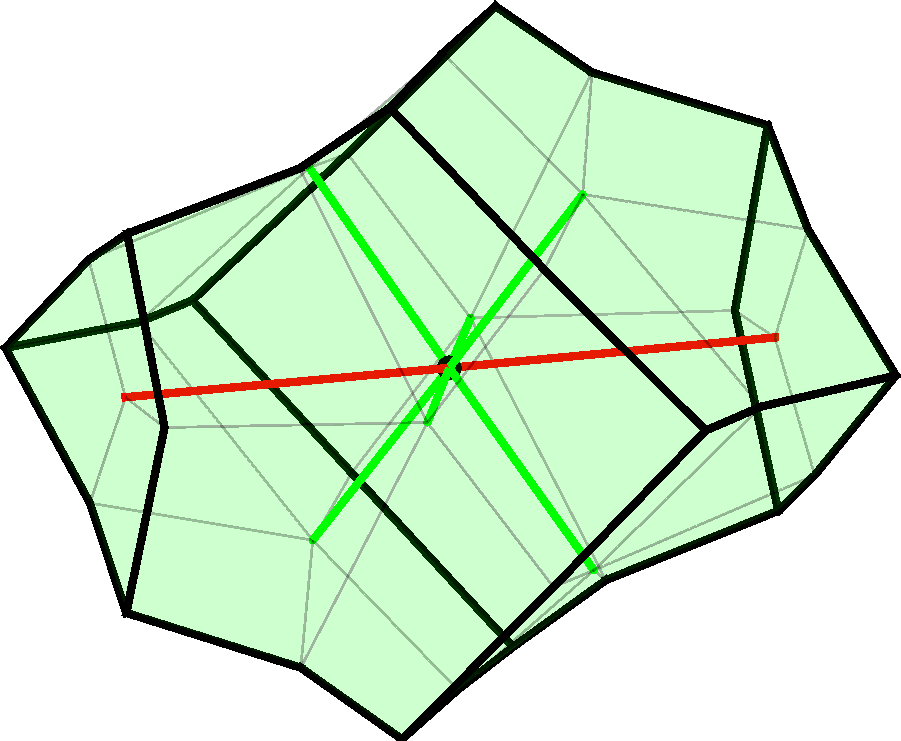}
    \includegraphics[width=.115\textwidth]{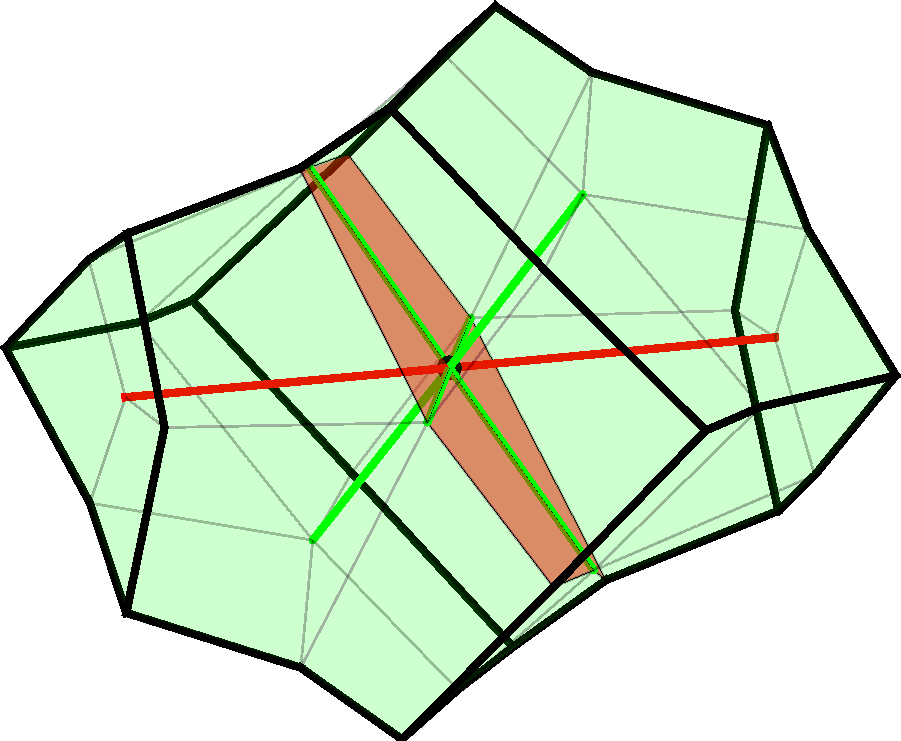}
    \includegraphics[width=.115\textwidth]{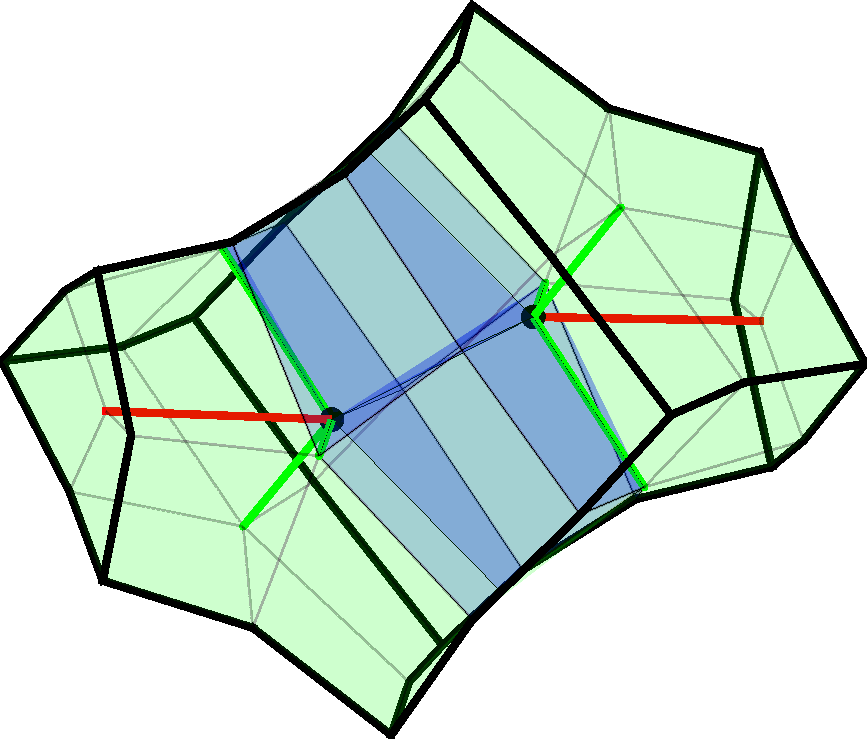}
    \includegraphics[width=.115\textwidth]{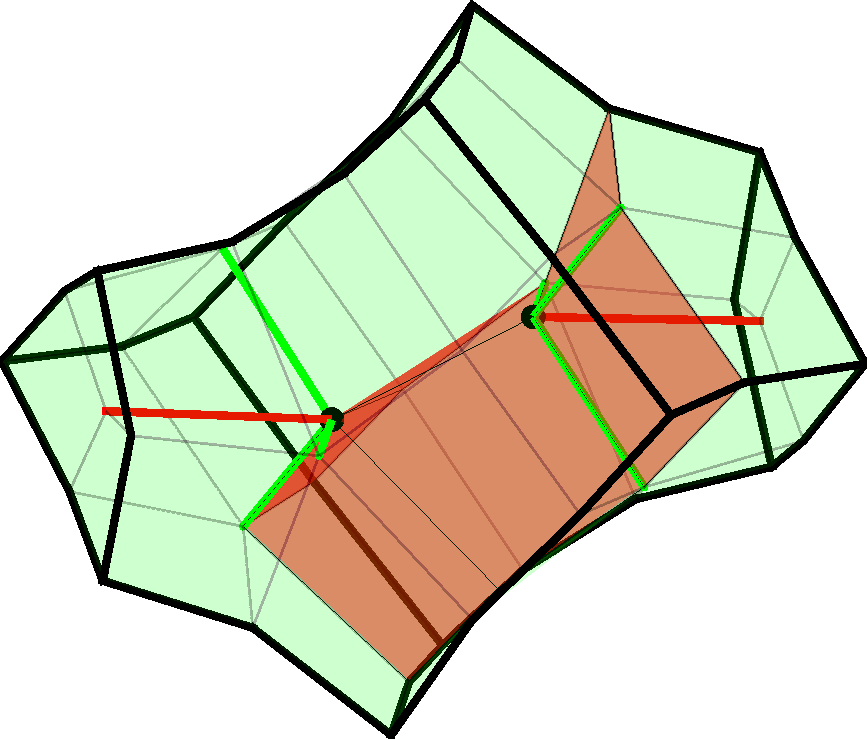}
    %\\
    \includegraphics[width=.115\textwidth]{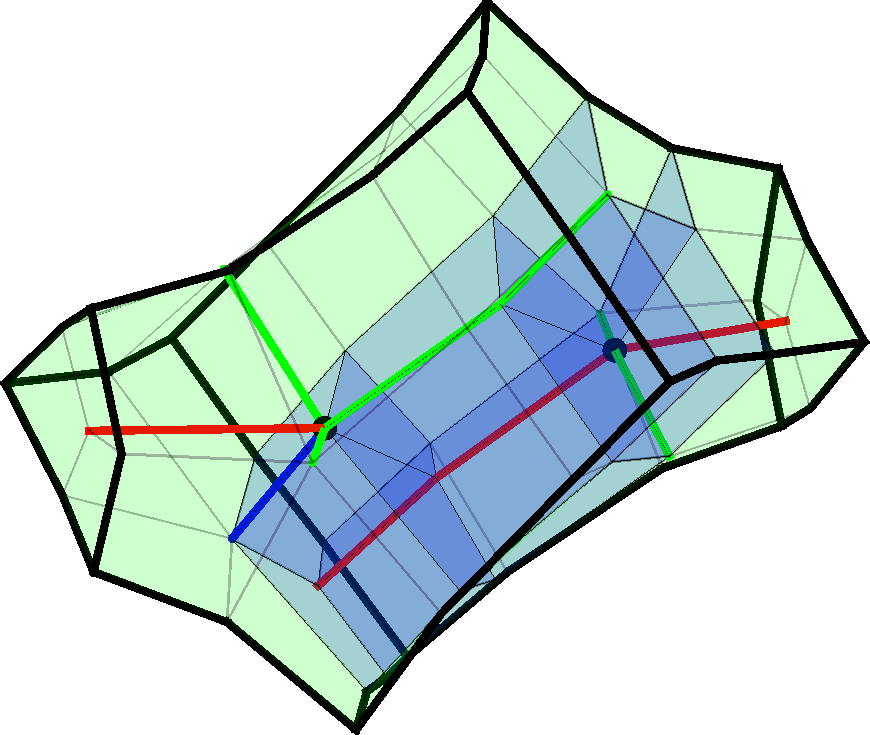}
    \includegraphics[width=.115\textwidth]{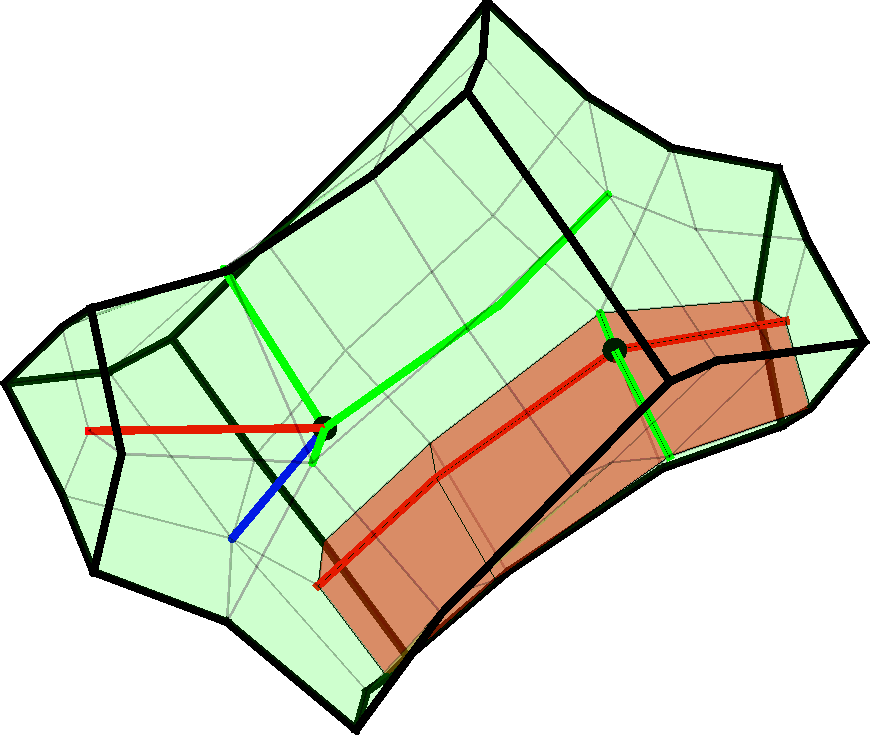}
    \includegraphics[width=.115\textwidth]{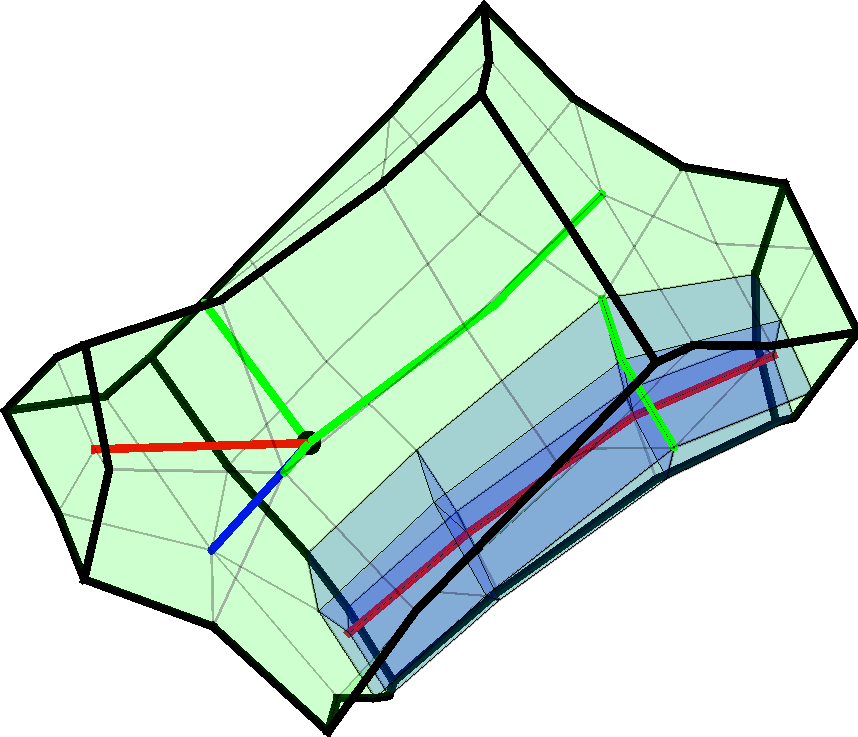}
    \includegraphics[width=.115\textwidth]{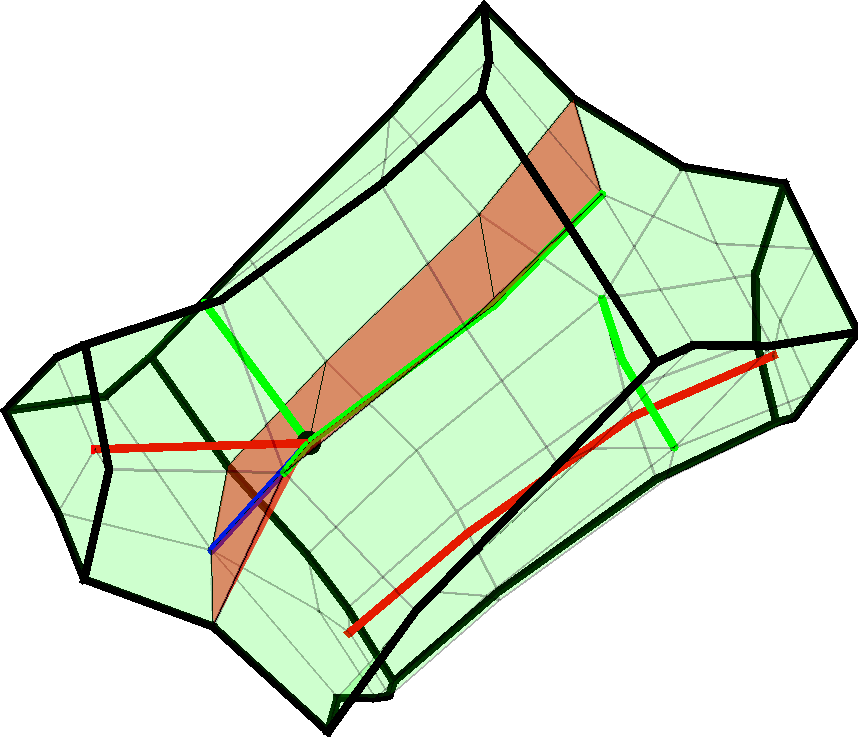}
\\
    \includegraphics[width=.115\textwidth]{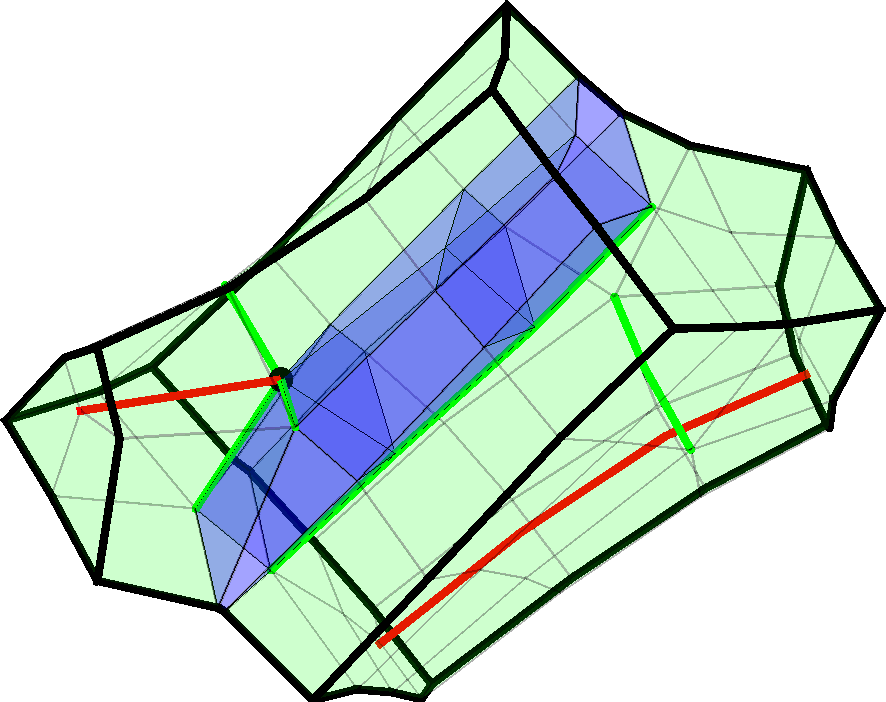}
    \includegraphics[width=.115\textwidth]{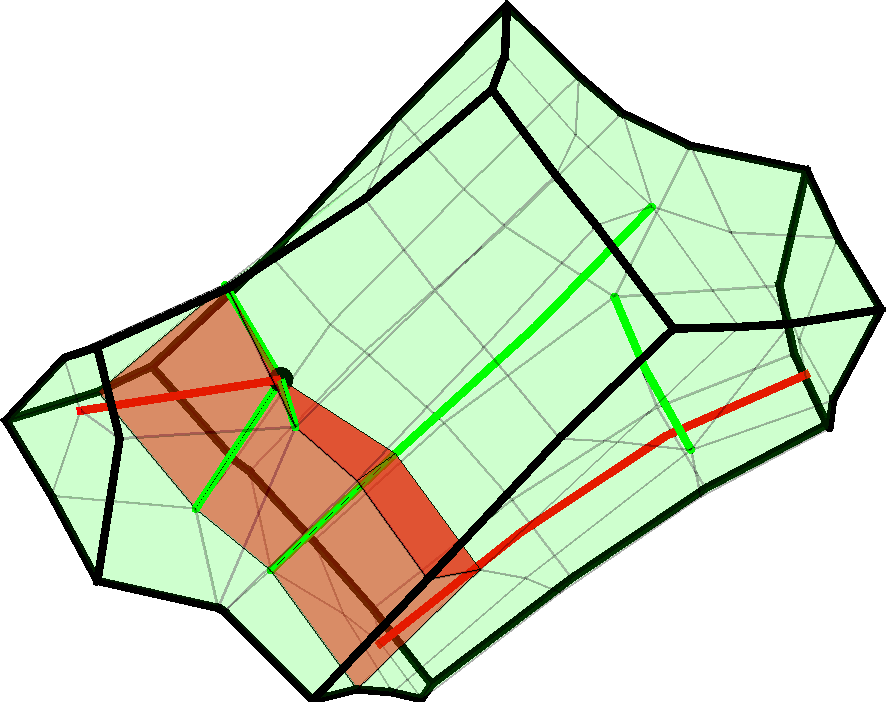}
    \includegraphics[width=.115\textwidth]{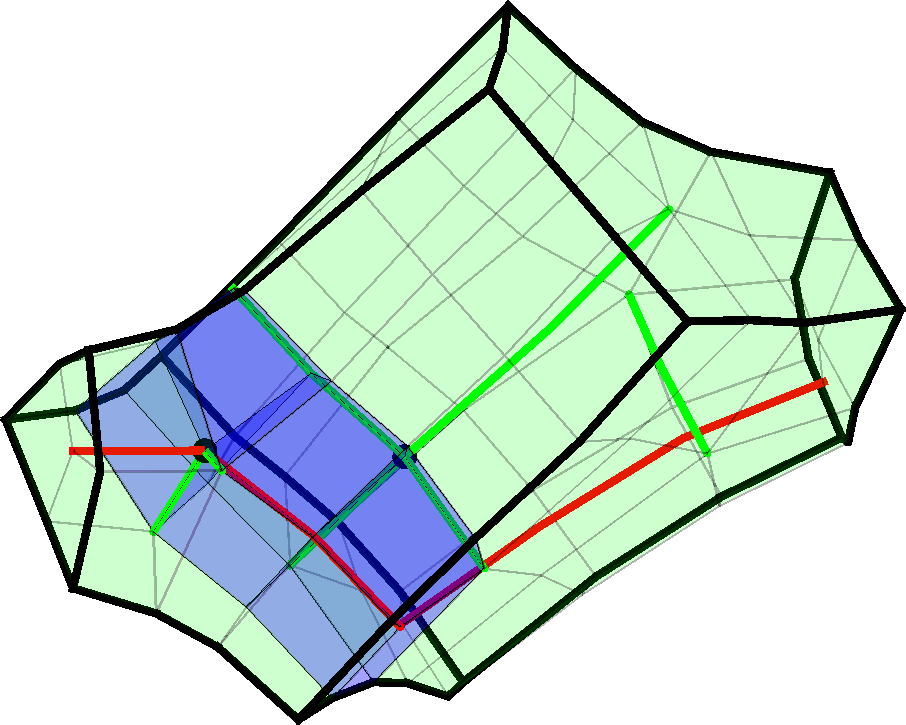}
    \includegraphics[width=.115\textwidth]{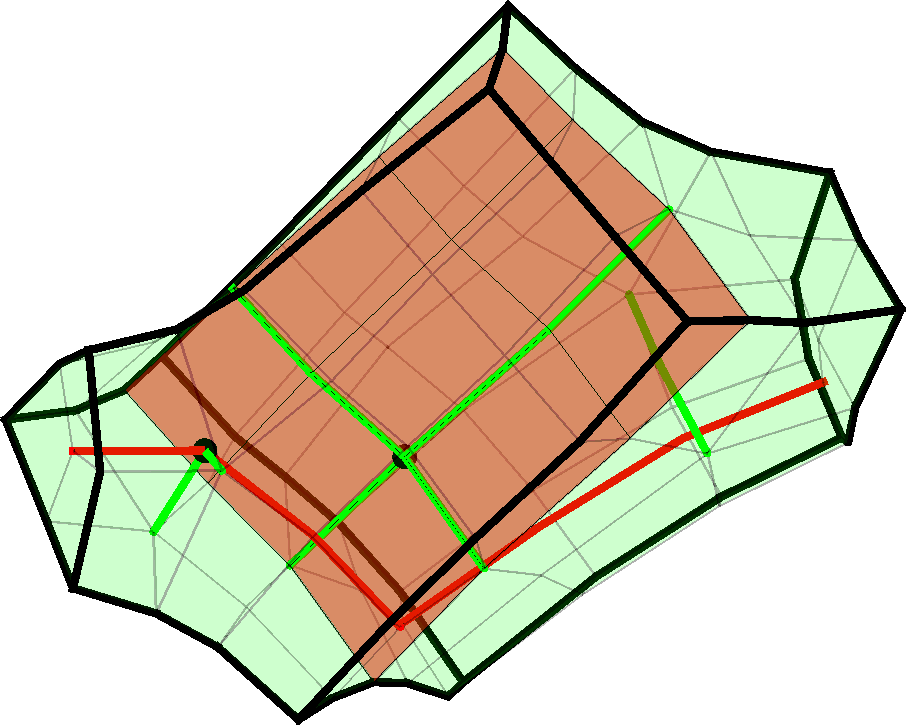}
%\\
    \includegraphics[width=.115\textwidth]{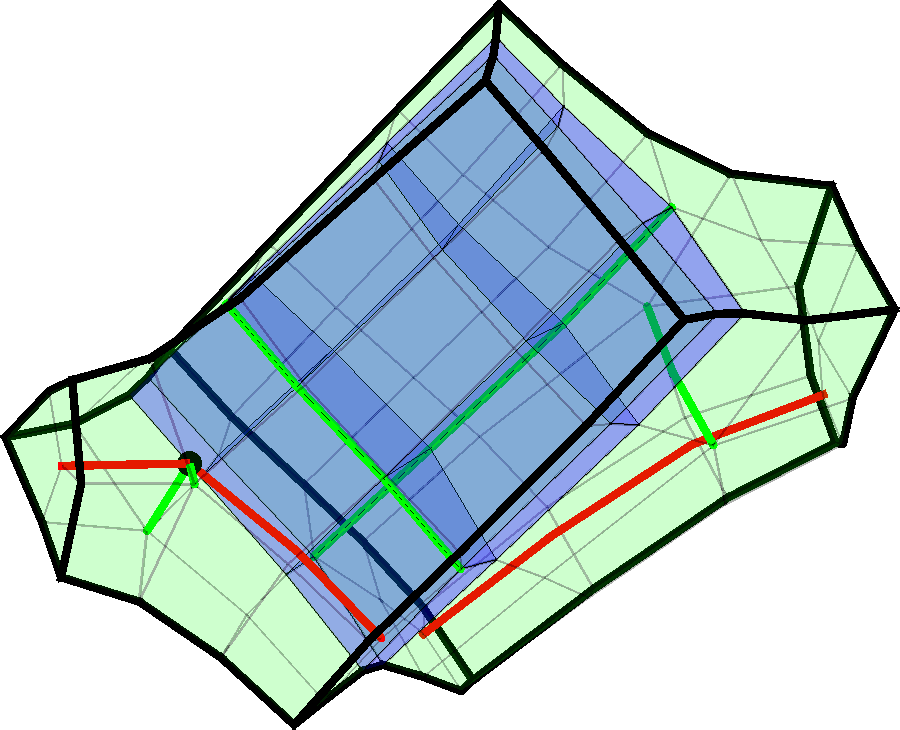}
    \includegraphics[width=.115\textwidth]{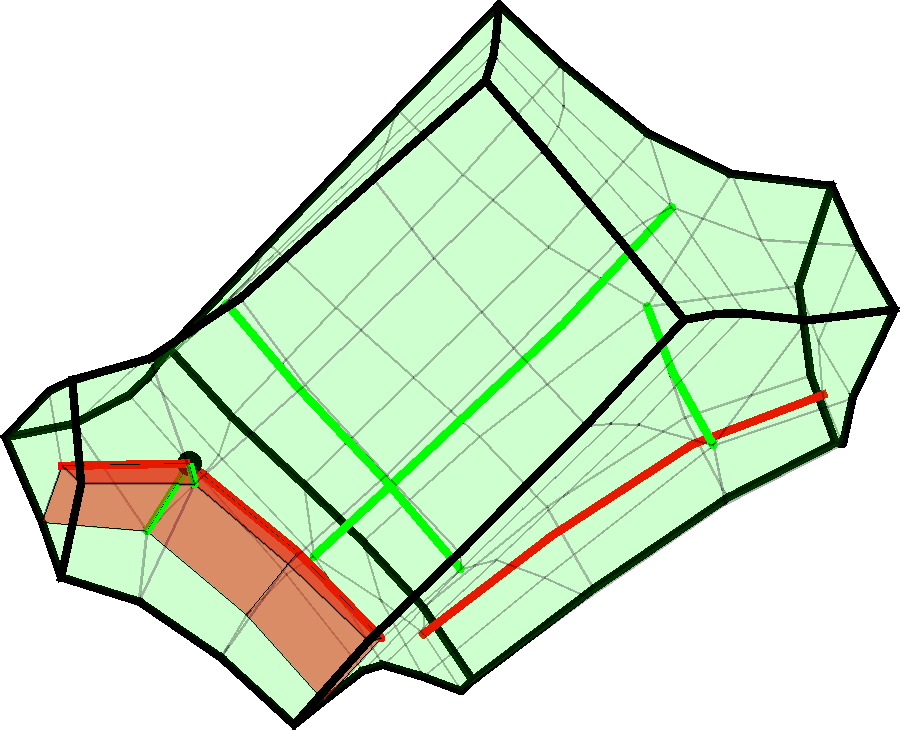}
    \includegraphics[width=.115\textwidth]{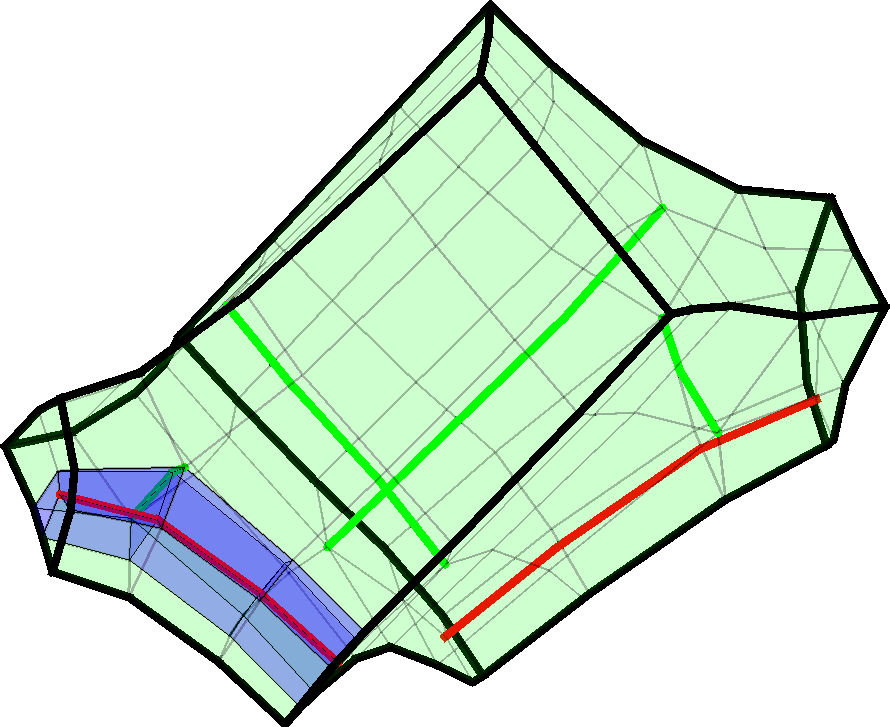}
    \includegraphics[width=.115\textwidth]{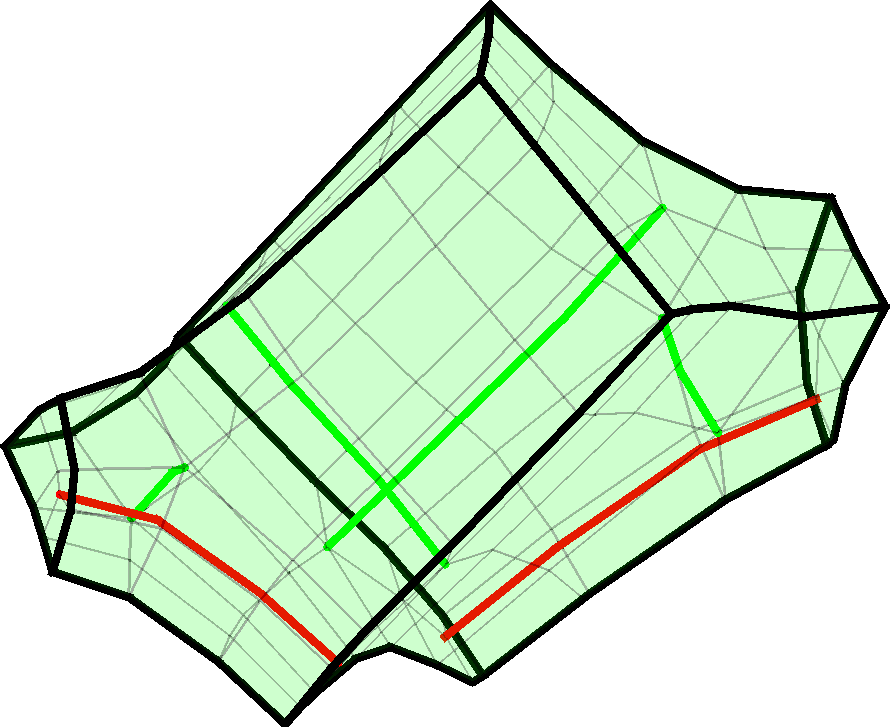}
    \caption{The (2,0,6) singular node is decomposed into four valence 5 and two valence 3 curves via seven sheet inflations.}
    \label{fig:206}
\end{figure*}

Finally the (0,0,12) singular node decomposition is shown in \autoref{fig:0012} to become six valence 5 curves. This singular node is especially interesting as we were unable to show a decomposition of the form
$$(0,0,12) = (0,4,4) +_n (0,2,8).$$
Even though the number of singular curves present is sufficient, we were not able to perform an inverse sheet inflation between (0,4,4) and (0,2,8) to obtain (0,0,12). This shows that the order in which singular curves are combined matters. As this figure is especially complex to comprehend, we offer the following roadmap of how the decomposition is performed.
$$(0,0,12) = \underbrace{\left((0,3,6) +_5 (0,3,6)\right)}_{(0,2,8,1)} \,+_6\, \underbrace{\left((0,4,4) +_5 (0,4,4)\right)}_{(0,4,4,1)}$$
% $$(0,0,12) = (0,2,8,1) +_6 (0,4,4,1)$$
% $$(0,2,8,1) = (0,3,6) +_5 (0,3,6)$$
% $$(0,4,4,1) = (0,4,4) +_5 (0,4,4)$$

% \subsubsection{Node (0 0 12)}
\begin{figure*}
    \centering
    \includegraphics[width=.115\textwidth]{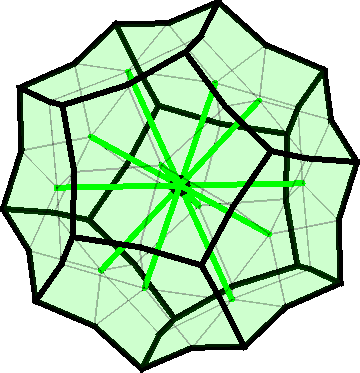}
    \includegraphics[width=.115\textwidth]{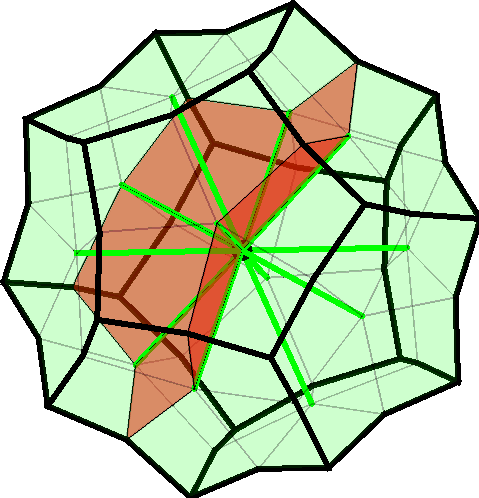}
    \includegraphics[width=.115\textwidth]{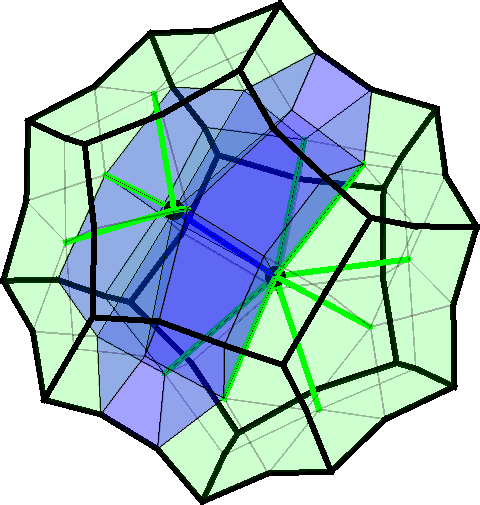}
    \includegraphics[width=.115\textwidth]{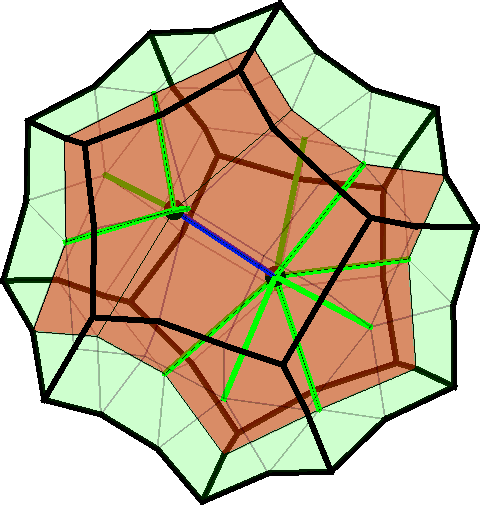}
    \includegraphics[width=.115\textwidth]{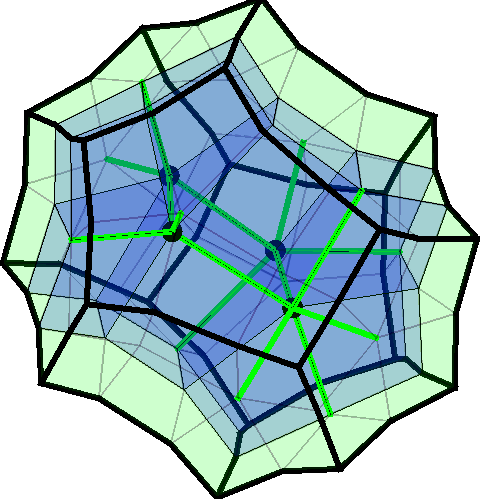}
    \includegraphics[width=.115\textwidth]{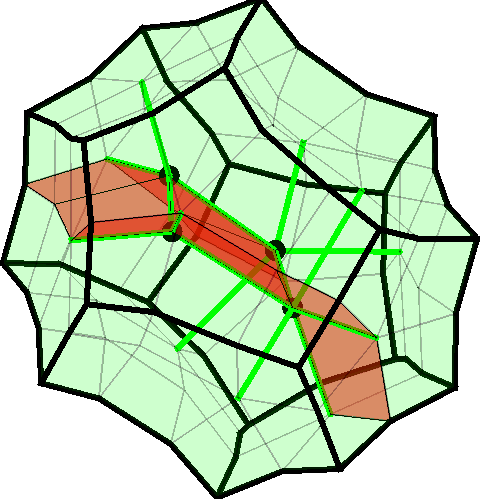}
    \includegraphics[width=.115\textwidth]{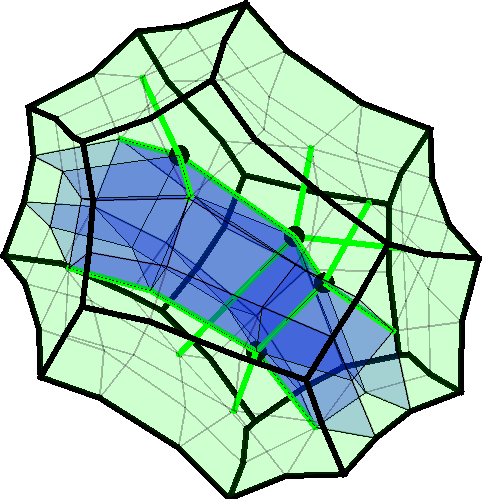}
    \includegraphics[width=.115\textwidth]{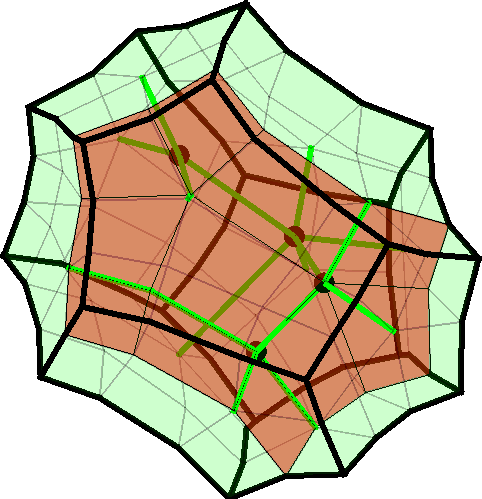}
    \\
    \includegraphics[width=.115\textwidth]{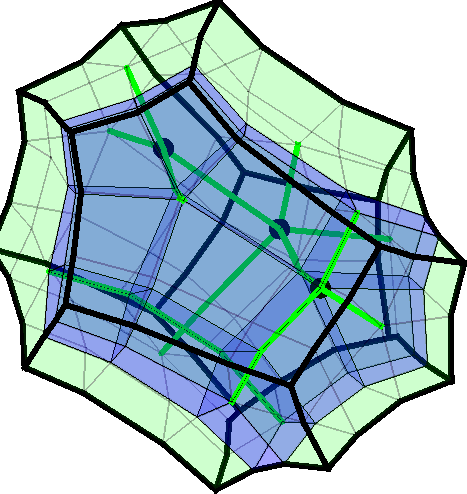}
    \includegraphics[width=.115\textwidth]{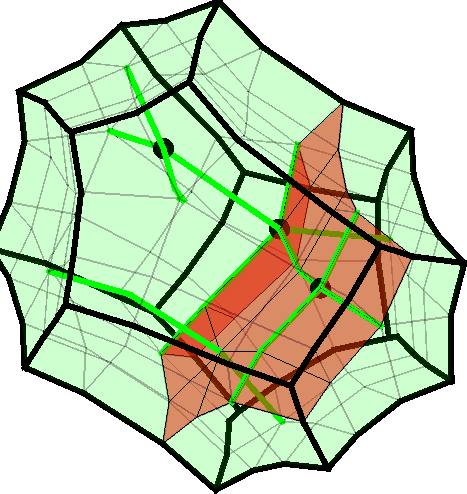}
    \includegraphics[width=.115\textwidth]{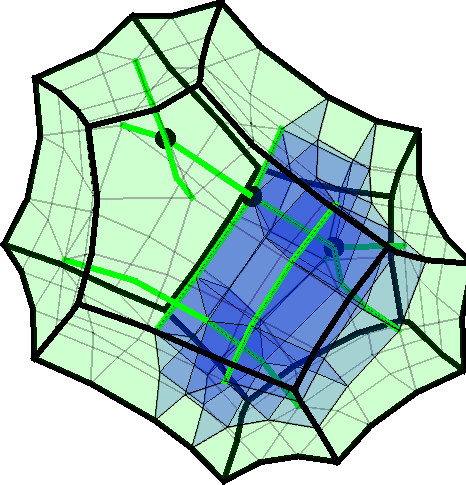}
    \includegraphics[width=.115\textwidth]{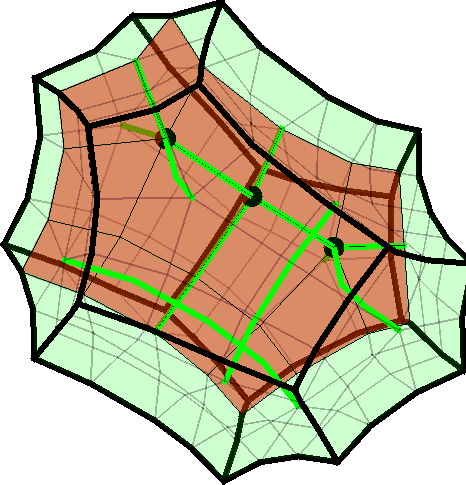}
    \includegraphics[width=.115\textwidth]{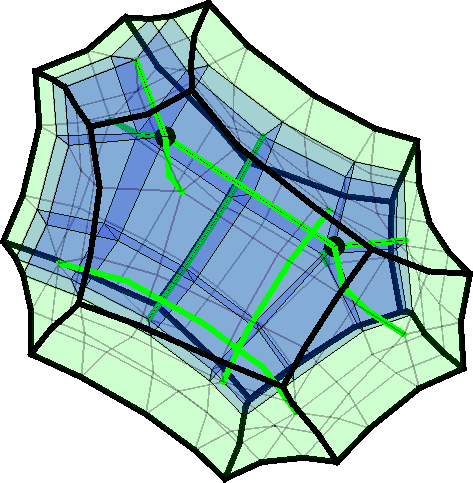}
    \includegraphics[width=.115\textwidth]{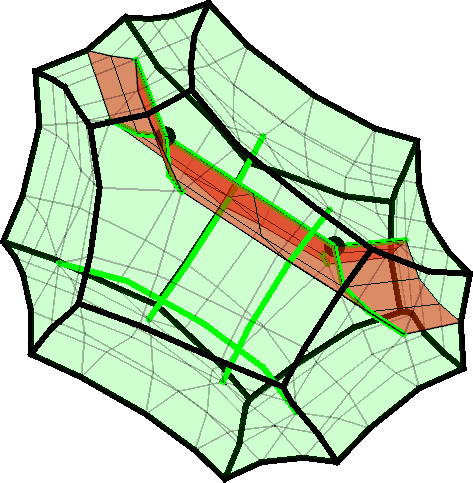}
    \includegraphics[width=.115\textwidth]{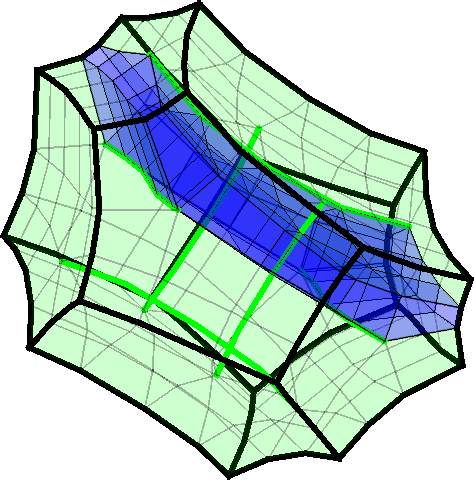}
    \includegraphics[width=.115\textwidth]{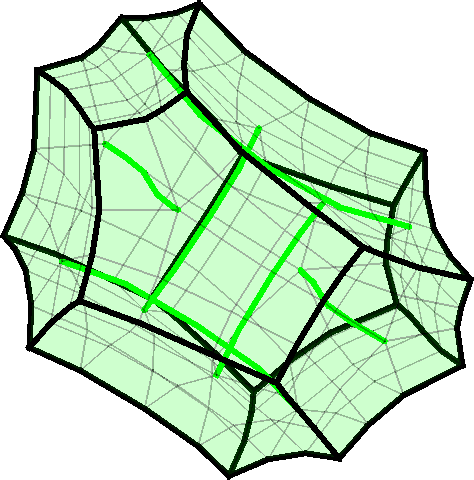}
\caption{The (0,0,12) singular node is decomposed into six valence 5 curves via seven sheet inflations.}
    \label{fig:0012}
\end{figure*}

\section{Decomposing General Singular Nodes}
Given that the eight singular nodes of valence 3, 4, or 5 are decomposable into singular curves, a natural next question is whether decomposition extends to higher valence singular nodes. 
% In quad meshes it's already known that sheet insertion can transform a valence 6 singularity into two valence 5s.
In fact, the decomposition of the (0,0,12) and (2,0,6) both already required decomposing singular nodes with valence 6: (0,4,4,1), (0,2,8,1) and (1,3,3,1). We will refer to previously known decomposable singular nodes and their associated sphere triangulations as \emph{base cases}.
To generalize decomposability of singular nodes we offer the following result. 
\begin{proposition}
\label{prop:gendecomp}
Given a sphere triangulation $\T$ with some vertex $u$ of degree larger than 5, there exists a splitting such that either the number of vertices in both resulting triangulations decreases or the resulting triangulations are base cases.
% represent singular nodes that are already known to be decomposable.
\end{proposition}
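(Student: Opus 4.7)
The plan is to use the high degree of $u$ to construct a separating cycle $C$ in $\T$ whose two disks $\D_1, \D_2$ each contain at least two interior vertices. Under this condition, if $C$ has length $k$ and $\D_j$ has $i_j$ interior vertices, the patched spheres have $k + i_1 + 1$ and $k + i_2 + 1$ vertices, and both are strictly less than $|V(\T)| = k + i_1 + i_2$ so long as $i_1, i_2 \geq 2$; this is exactly the vertex count decrease required.

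First, I would set up with the link cycle $L = v_1 v_2 \cdots v_d$ of $u$ and pick indices $a, b$ with $3 \leq b - a \leq d - 3$. This choice is available because $d \geq 6$ and partitions $L$ into two arcs each containing at least two strictly interior vertices, with the added benefit that if $C$ passes through $u$ via the edges $(u, v_a)$ and $(u, v_b)$, then both copies of $u$ in the patched spheres will have degree at most $5$, immediately resolving the high-degree vertex.

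The cycle $C$ would then consist of $(u, v_a)$, $(u, v_b)$, together with a path from $v_a$ to $v_b$ in $\T \setminus \{u\}$. Since the naive link-arc path leaves the fan disk with zero interior vertices, the key step is to choose a path that detours through vertices outside the star of $u$, so that both sides of $C$ enclose interior vertices of $\T$. Such a detour should exist whenever the outer disk $\T \setminus \{u\}$ is sufficiently rich, which I expect to be the generic situation that follows from Euler's formula applied to the triangulation of $\T \setminus \mathrm{star}(u)$.

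The main obstacle will be guaranteeing both disks have at least two interior vertices. I expect this to require a careful case analysis based on the local structure near $u$: when the outer disk has enough depth, the detour construction succeeds; otherwise $\T$ is small and tightly constrained around $u$, and one must show that the resulting small triangulations match base cases (one of the eight valence-$3$,$4$,$5$ singular nodes, or the higher-valence base cases $(0,4,4,1)$, $(0,2,8,1)$, or $(1,3,3,1)$ already established). Enumerating and matching the residual small configurations against this base case list is likely the most delicate technical step of the argument.
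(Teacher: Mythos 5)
Your proposal follows essentially the same route as the paper: take the umbrella of $u$ and its boundary link, choose $a,b$ adjacent to $u$ at least three link-edges apart, join them by a path $p$ through the interior of $\T$ minus the umbrella, and note that the cycle $[(ua),p,(bu)]$ leaves at least two interior vertices on each side, which is exactly the counting condition you derive for both patched spheres to lose vertices. The paper likewise defers the existence of $p$ and the small degenerate configurations (resolved there by exhibiting an explicit base-case split) to supplementary material, so your identification of that step as the delicate part matches where the paper itself places the remaining work.
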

\begin{proof}
% The splitting can be constructed explicitly by finding a cycle in the triangulation graph that partitions the vertices into disks $\D_1$ and $\D_2$ each with at least two interior vertices.
% The splitting can be constructed explicitly for any sphere triangulation with some vertex of degree larger than 5. 
% Let $u$ be a vertex with degree larger than 5. 
The local neighborhood of $u$ is an umbrella $\U$ of at least 6 triangles. The boundary of this umbrella is a cycle of at least 6 vertices denoted by $\C$.
To construct a splitting of $\T$ into triangulations of fewer vertices, we need a pair of vertices $a$ and $b$ adjacent to $u$ that are at least 3 edges apart from each other in $\C$ such that there is path $p$ from $a$ to $b$ through the interior of $\T - \U$. This construction is illustrated in \autoref{fig:proof}. 
%We arbitrarily choose 3 consecutive triangles in this umbrella to be in $\D_1$ and the rest to be in $\D_2$. 
%Let vertex $a$ and $b$ be adjacent to $u$ and sit on the boundary between $\D_1$ and $\D_2$. Let $p$ be the sequence of edges in the shortest path from $a$ to $b$ that does not pass through $u$ or any vertices adjacent to $u$. 
The sequence of edges $[(u a), p, (b u)]$ partitions $\T$ into $\D_1$ and $\D_2$ where each disk triangulation has at least 2 interior vertices. 
% The ones adjacent to $u$. 
Since splitting a sphere triangulation replaces all vertices on the interior of either side with just one new vertex each, both resulting triangulations will have fewer vertices than $\T$. 
%An illustration of this construction is provided in \autoref{fig:proof}.
For readability, we leave more detailed construction of the splitting to supplementary materials. 
%\autoref{sec:proof_contd}
\end{proof}

\begin{figure}
    \centering
    \includegraphics[width=1\columnwidth]{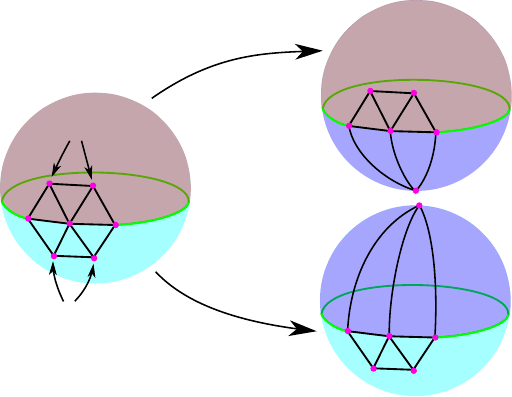}
    \put(-209,69){$b$}
    \put(-189.6,65){$u$}
    \put(-168,65){$a$}
    \put(-150,93){$p$}
    \put(-180,122){$\D_1$}
    \put(-162,57){$\D_2$}
    \put(-200,110){\small$Interior\;vertices$}
    \put(-220,33){\small$Interior\;vertices$}
    \caption{Illustration of how to find a cycle such that splitting along that cycle results in two sphere triangulations, each with fewer vertices. The only requirement is that there is a vertex $u$ of degree $\geq6$. The required cycle is then $[(ua),p,(bu)]$.}
    \label{fig:proof}
\end{figure}

Applying the splitting in \autoref{prop:gendecomp} could result directly in base cases, where the rest of the decomposition is already known. If the splitting does not result in base cases, then it produces triangulations with fewer vertices. This can be repeated until there are not enough vertices to have a degree 6 vertex. Since sheet inflation at a singular node corresponds to splitting of a sphere triangulation, \autoref{prop:gendecomp} allows us to find a sequence of sheets whose inflation results in singular nodes that 
% have previously been decomposed already or 
have lower than valence 6 singular edges. We have already enumerated singular decompositions for all singular nodes with valence lower than 6 and can therefore decompose any singular node into singular curves.

A limitation of \autoref{prop:gendecomp} is that it restricts attention to individual singular nodes while ignoring the full singular graph of the mesh. It can be challenging to extend a sheet known locally around a singular node to the rest of the hex mesh while guaranteeing no self-intersection occurs. We present our simplistic solution to extending sheets in \autoref{sec:sgdecomp} and leave more careful consideration of how to avoid self-intersection to future work.

\section{Results}

\begin{algorithm}[t]
\begin{algorithmic}[1]
 \Procedure{Decompose-Singular-Graph}{$\H$}
 \Do
    \State $N \gets$ \Call{GetRandomSingularNode}{$\H$}
    \If{\Call{OnlyHasValence345}{$\H$,$N$}}
        \State $C \gets$ \Call{GetHardcodedCut}{$\H$, $N$}
    \Else
        \State $C \gets$ \Call{GetGeneralCut}{$\H$, $N$}
    \EndIf
    \State $S \gets$ \Call{PropagateCut}{$\H$, $C$}
    \State $H \gets$ \Call{SheetInflation}{$\H$, $S$}
 \doWhile{$N\neq\emptyset$}
 \State \Return $\H$
 \EndProcedure 
\end{algorithmic}
 \caption{Decomposes all singular nodes of a hex mesh into singular curves.}
 \label{alg:dec}
\end{algorithm}

\subsection{Singular Graph Decomposition}
\label{sec:sgdecomp}

% We develop a procedure by which to perform singular decomposition on general hex meshes. 
We develop a procedure to perform singular mesh decompositions on general hex meshes. Pseudocode for this procedure is given in \autoref{alg:dec} and \autoref{alg:prop}
First, we randomly select a singular node. For any singular node with valence restricted to 3, 4, or 5, we hard code a subset of faces adjacent to the node to be inflated. If the node has valence 6 or higher, we use \autoref{prop:gendecomp} (denoted GetGeneralCut in \autoref{alg:dec}) to select these faces. These faces form a partial sheet that decomposes the initially selected node, but need to be extended through the rest of the mesh in order to be inflatable.

Next we propagate the partial sheet throughout the hex mesh following \autoref{alg:prop}. Let a face be \emph{parallel} to the partial sheet if they share a regular edge but share no adjacent hexes. We greedily add parallel faces to the partial sheet until no more parallel faces can be found. Next we look for any interior singular vertices on the boundary of the partial sheet. 
% If no such vertices are found, and the sheet is manifold, then the sheet is inflatable. 
If such a vertex is found, then we 
% If the \add{partial} sheet is manifold and has all boundaries on the boundary of the hex mesh, then we can perform sheet inflation. Otherwise, there must exist an interior singular vertex on the boundary of the incomplete sheet. 
compute the smallest number of new faces that need to be added to the partial sheet so that its boundary excludes this singular vertex. This is denoted by Put-v-In-S in \autoref{alg:prop} and is equivalent to a graph shortest path computation on the triangulation representing this singular vertex. 

These two steps are repeated until no more parallel faces can be found, and the boundary of the partial sheet is entirely on the boundary of the hex mesh. If at any stage of the algorithm, the partial sheet became non-manifold then the sheet propagation algorithm has failed. If the sheet is manifold then
% the sheet is inflatable Finally if the sheet is manifold, then we can inflate it. If the sheet is non-manifold then this method
% \add{Adding new faces to the partial sheet may introduce more parallel faces so we} return to extending the \add{partial} sheet until \add{the sheet is inflatable.} 
% Finally, 
we inflate it resulting in the decomposition of at least one singular node.
% After sheet inflation, we improve mesh quality by maximizing the minimum scaled jacobian at all hex corners of the hex mesh. 
All results shown were generated by \autoref{alg:dec}.

\begin{algorithm}[t]
\begin{algorithmic}[1]
 \Procedure{PropagateCut}{$\H$, $S$}
 \State $\Q \gets $\Call{GetFaces}{$\H$}
 \While{$\exists f\in \Q: $\Call{Parallel}{$\H$, $S$, $f$}   }
 %AND \\ \hfill \Call{NotSingularEdge}{$S \cap f$}}
    \State $S \gets S \cup f$
 \EndWhile
 
 \State $\V \gets $\Call{GetVertices}{$\H$}
 \State $\V_S \gets $\Call{GetInteriorSingularVertices}{$\H$}
 \If{$\exists v \in (\partial S \cap \V_S)$ }
    \State $S \gets$ \Call{Put-v-In-S}{$\H$, $S$, $v$}
 \Else
    \If{\Call{NonManifold}{$\H$, $S$}}
        \State \Return ERROR
    \Else
        \State \Return $S$
    \EndIf
 \EndIf
 
 \State \Return \Call{PropagateCut}{$\H$, $S$}
 \EndProcedure
\end{algorithmic}
 \caption{Propagates a partial sheet into a full sheet recursively. }
 \label{alg:prop}
\end{algorithm}

Applying our decomposition to a hex mesh of a sphere reveals that it has the same singular graph structure as that of a padded tetrahedron. \autoref{fig:sphere_and_tet} shows this correspondence where inflating one sheet that passes through seven singular nodes, simultaneously decomposes three of them. The end result is a singular graph composed of four (4,0,0) singular nodes. One of these nodes has singular curves that all connect directly to the boundary. The other four of these nodes connect to each other and the boundary via valence 3 singular curves in a tetrahedral arrangement. This singular graph is exactly what one obtains by padding a hex mesh of a regular tetrahedron i.e. padding a (4,0,0) node.

\begin{figure}
    \centering
    \includegraphics[width=.235\columnwidth]{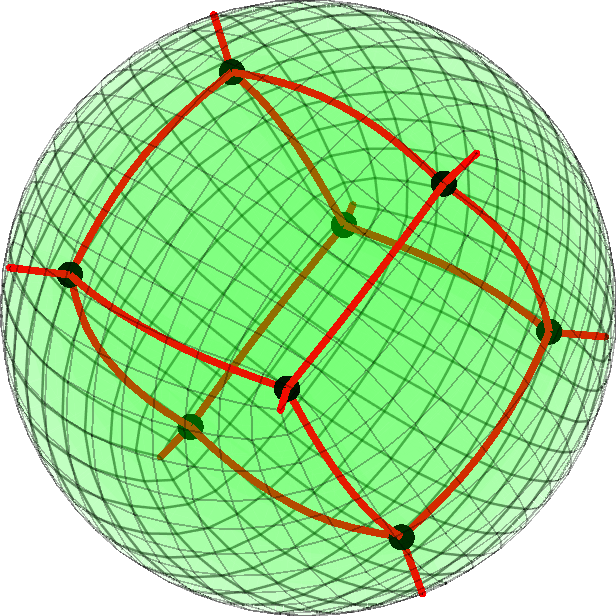}
    \includegraphics[width=.235\columnwidth]{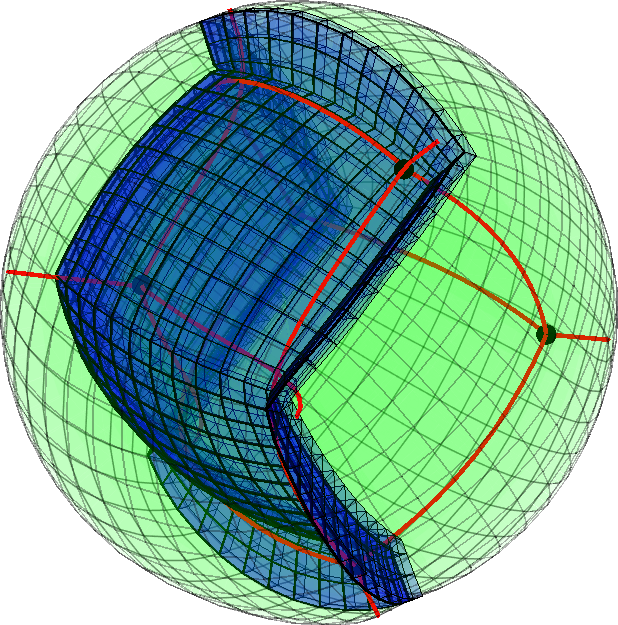}
    \includegraphics[width=.235\columnwidth]{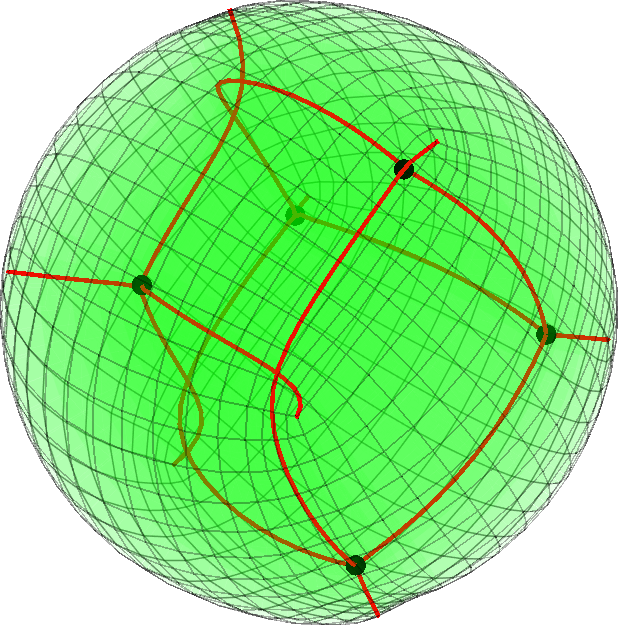}
    \includegraphics[width=.235\columnwidth]{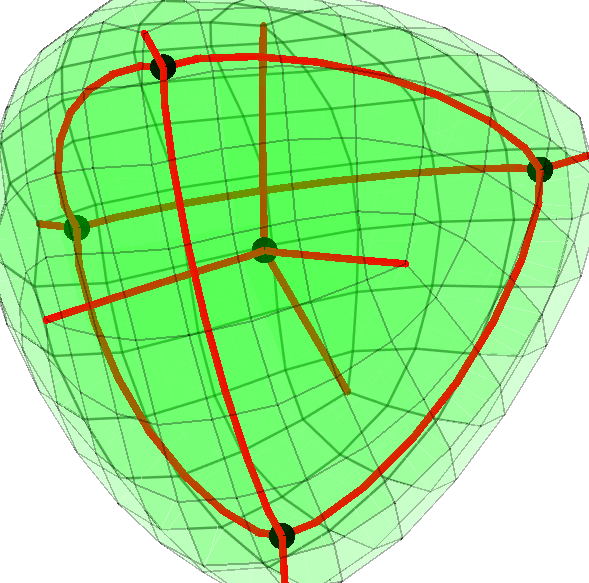}
    \caption{(Left) Hex mesh of sphere with singular graph. (Mid-left) Blue hexes are newly inflated hexes. (Mid-right) Hex mesh post-inflation. (Right) Singular graph of a padded hex mesh of a tetrahedron. The last two images have topologically equivalent singular graphs.}
    \label{fig:sphere_and_tet}
\end{figure}

Changing how the sheet cuts through the singular graph produces different intermediate and final singular graphs. 
%In \autoref{fig:sgdecomp}, we decompose a padded cube in two different sequences.
In \autoref{fig:sgdecomp}, we decompose a padded cube in two different sequences and show the their intermediate singular graphs.
%Doing this leads to many different intermediate singular graphs. 
To improve clarity, we provide schematics of a subset of the singular graphs. The ending singular graphs from both sequences are also topologically distinct i.e. no purely geometric deformation maps one singular graph into the other. They do however appear to invariably contain a single singular cycle. 

The first sheet inflation of the second sequence results in the same singular graph as a padded hex mesh of a triangular prism: a padded (2,3,0). Since the hex mesh of a sphere has the same singular graph as the padded cube, these results indicate that singular graphs for a padded cube, padded tet, and padded triangular prism are identical up to a series of sheet inflation and collapses.

\begin{figure*}
    \centering
    \includegraphics[width=.091\textwidth]{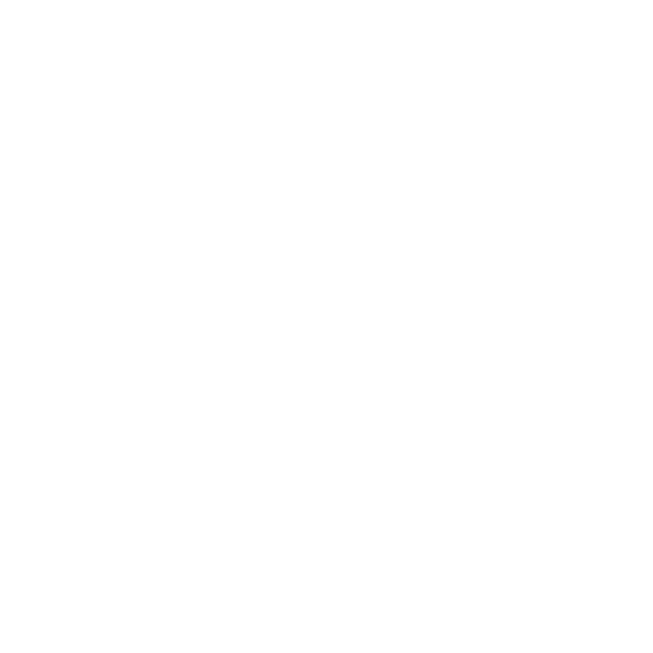}
    \includegraphics[width=.091\textwidth]{figures/schems/blank.png}
    \includegraphics[width=.091\textwidth]{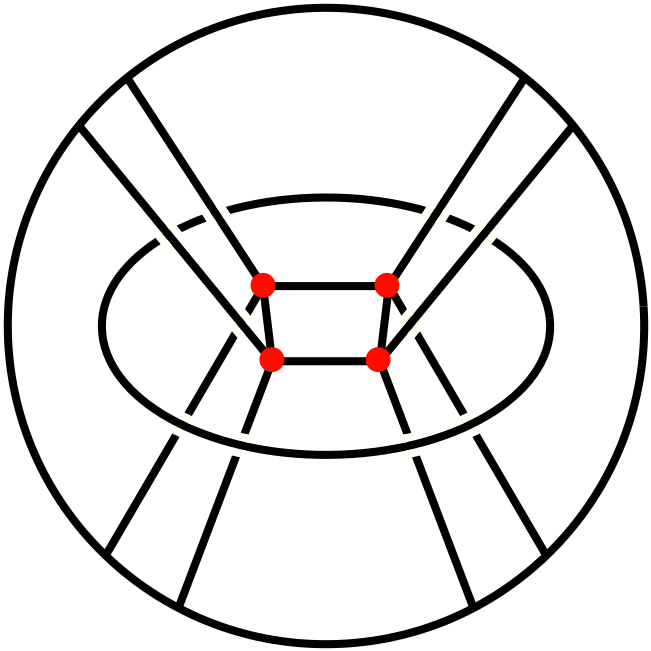}
    \includegraphics[width=.091\textwidth]{figures/schems/g5.png}
    \includegraphics[width=.091\textwidth]{figures/schems/blank.png}
    \includegraphics[width=.091\textwidth]{figures/schems/blank.png}
    \includegraphics[width=.091\textwidth]{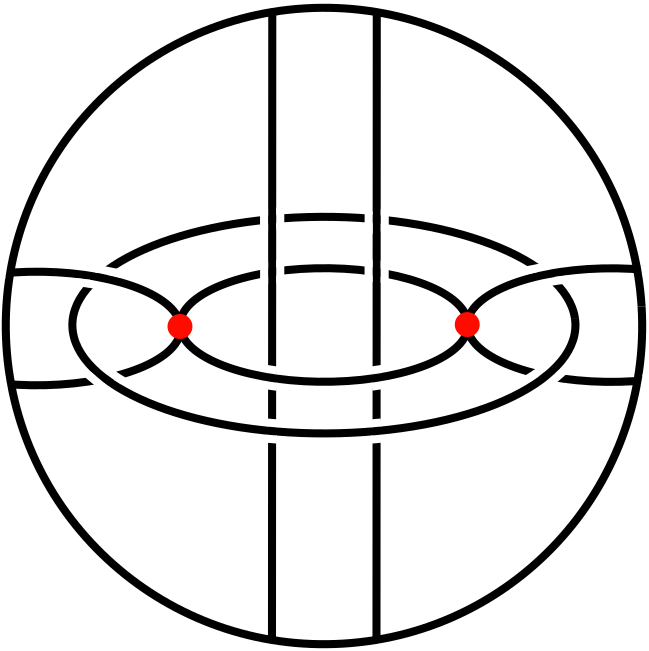}
    \includegraphics[width=.091\textwidth]{figures/schems/g3.png}
    \includegraphics[width=.091\textwidth]{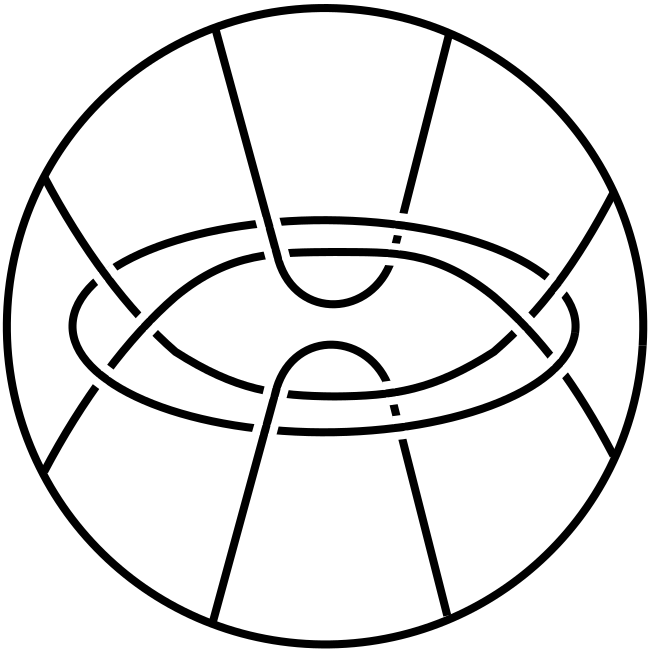}
    \includegraphics[width=.091\textwidth]{figures/schems/g1.png}
    \\
    \includegraphics[width=.091\textwidth]{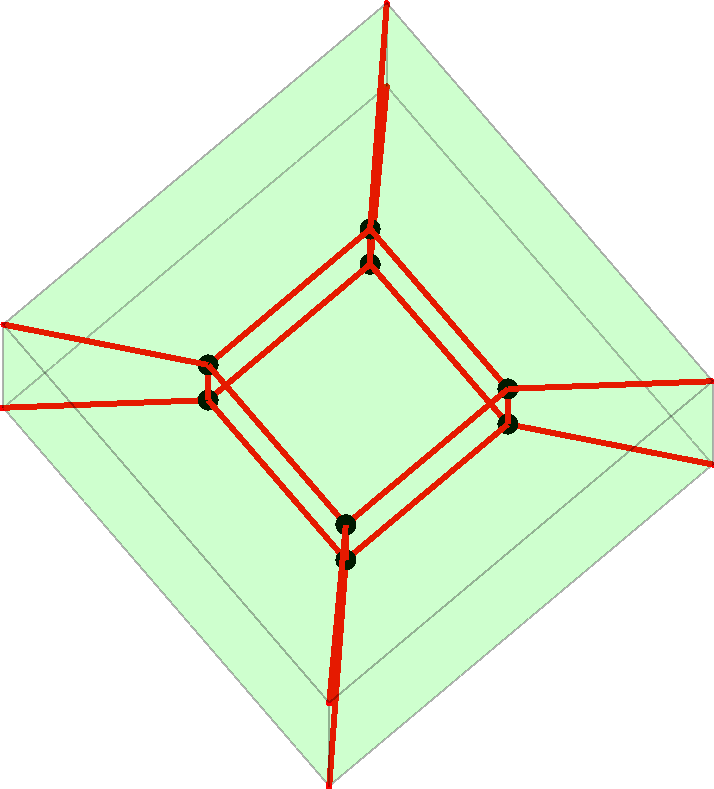}
    \includegraphics[width=.091\textwidth]{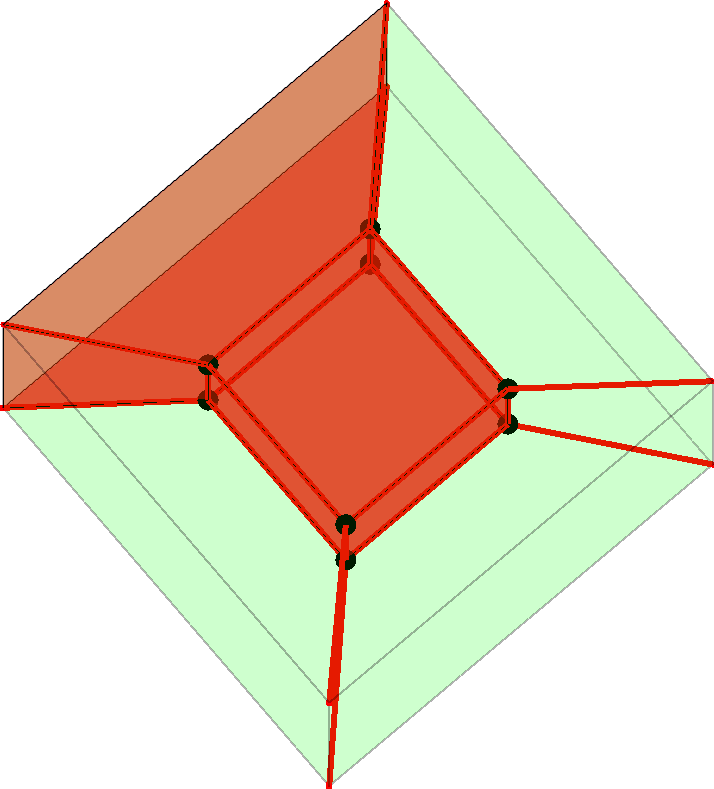}
    \includegraphics[width=.091\textwidth]{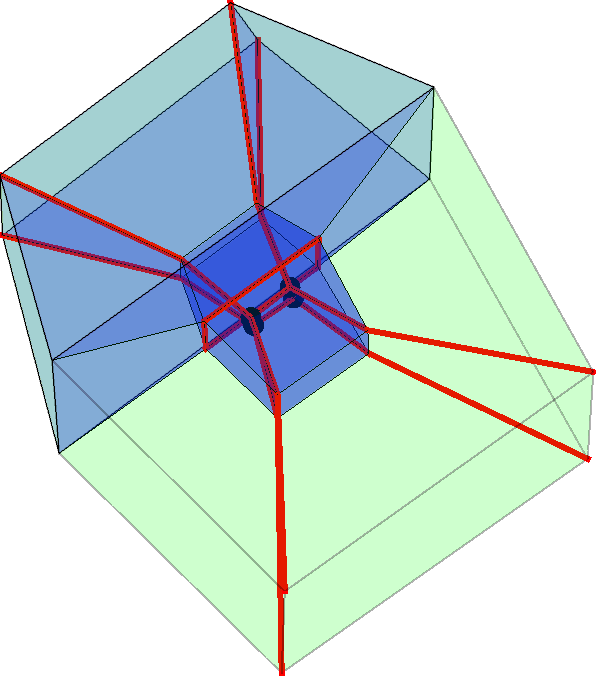}
    \includegraphics[width=.091\textwidth]{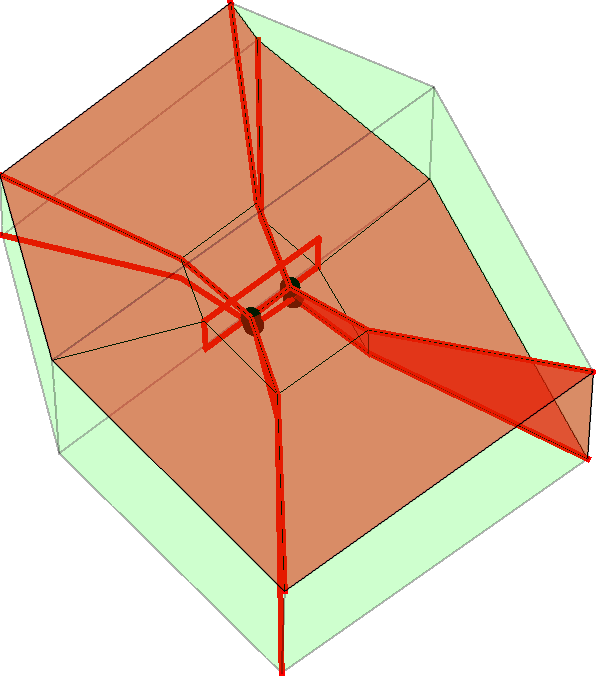}
    \includegraphics[width=.091\textwidth]{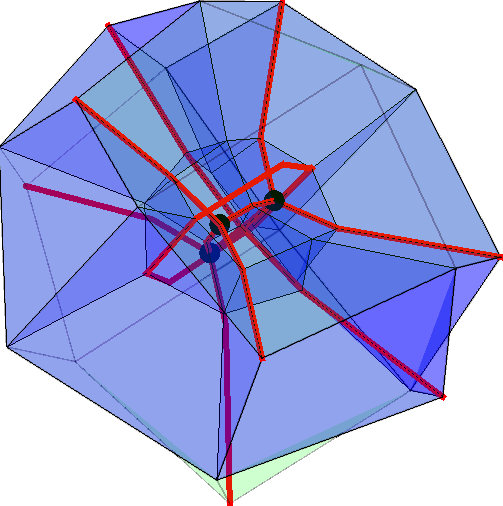}
    \includegraphics[width=.091\textwidth]{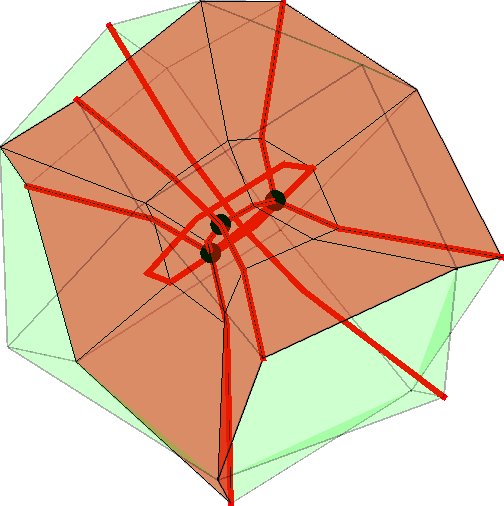}
    \includegraphics[width=.091\textwidth]{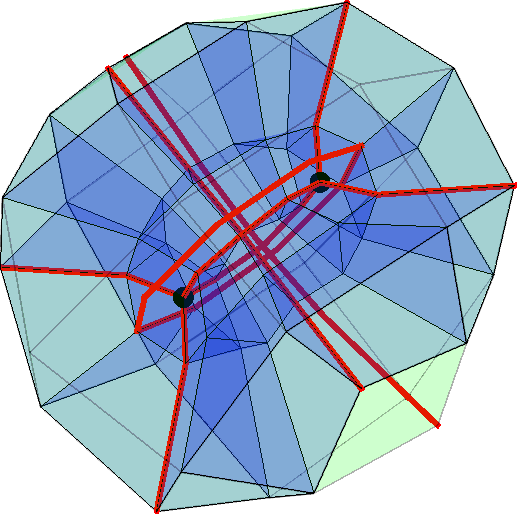}
    \includegraphics[width=.091\textwidth]{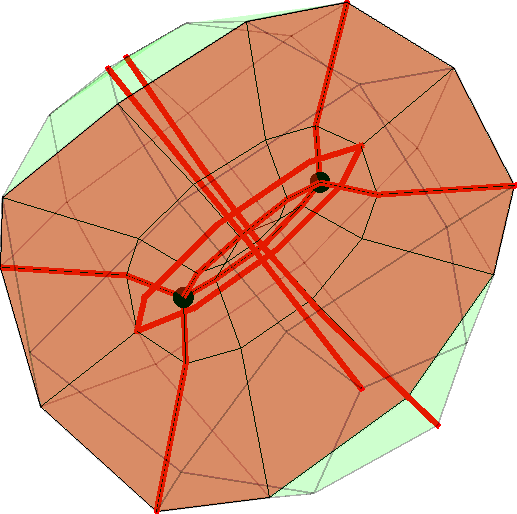}
    \includegraphics[width=.091\textwidth]{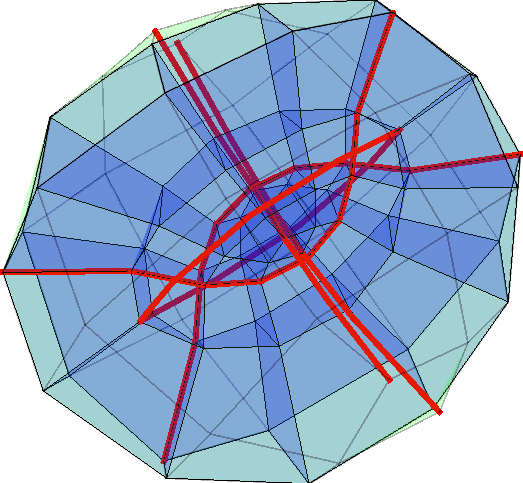}
    \includegraphics[width=.091\textwidth]{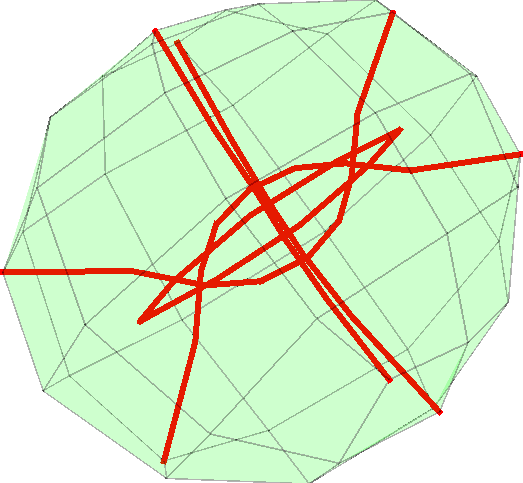}\\
    \includegraphics[width=.091\textwidth]{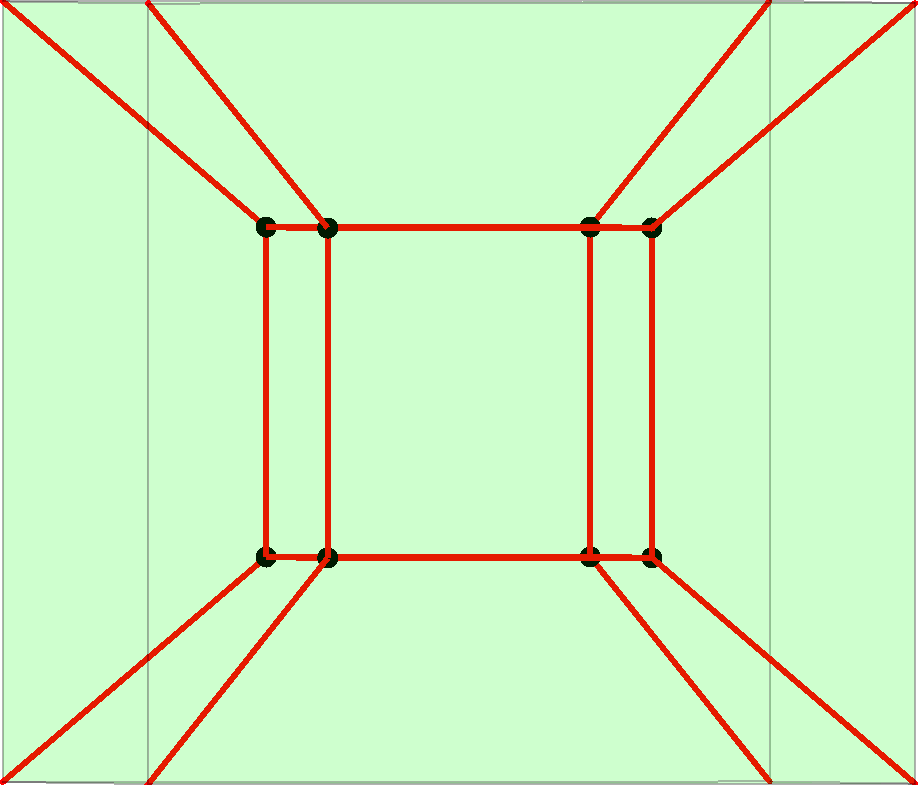}
    \includegraphics[width=.091\textwidth]{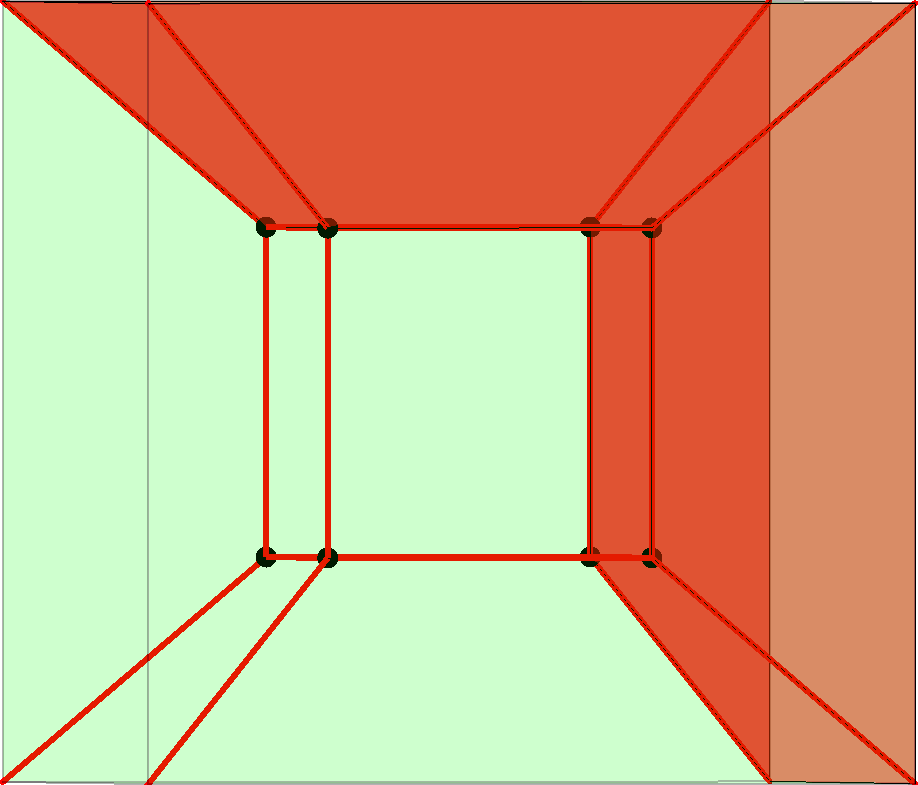}
    \includegraphics[width=.091\textwidth]{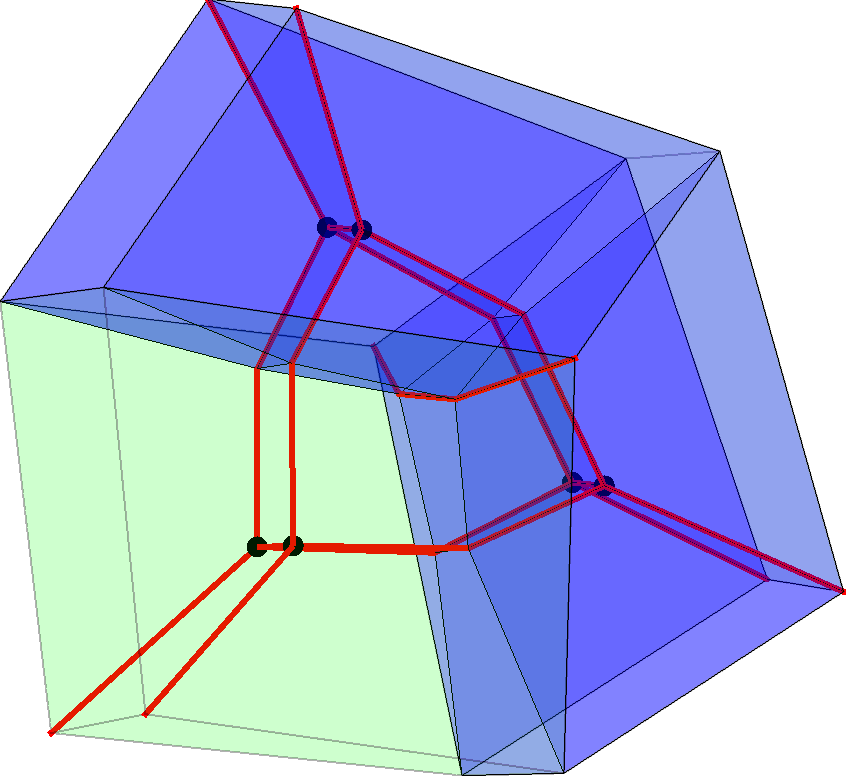}
    \includegraphics[width=.091\textwidth]{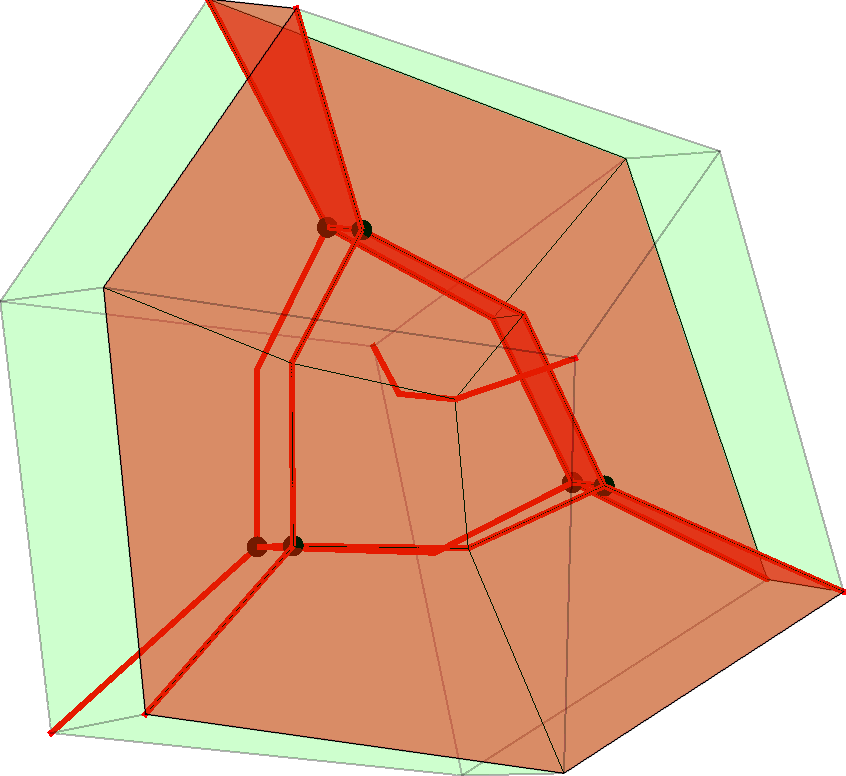}
    \includegraphics[width=.091\textwidth]{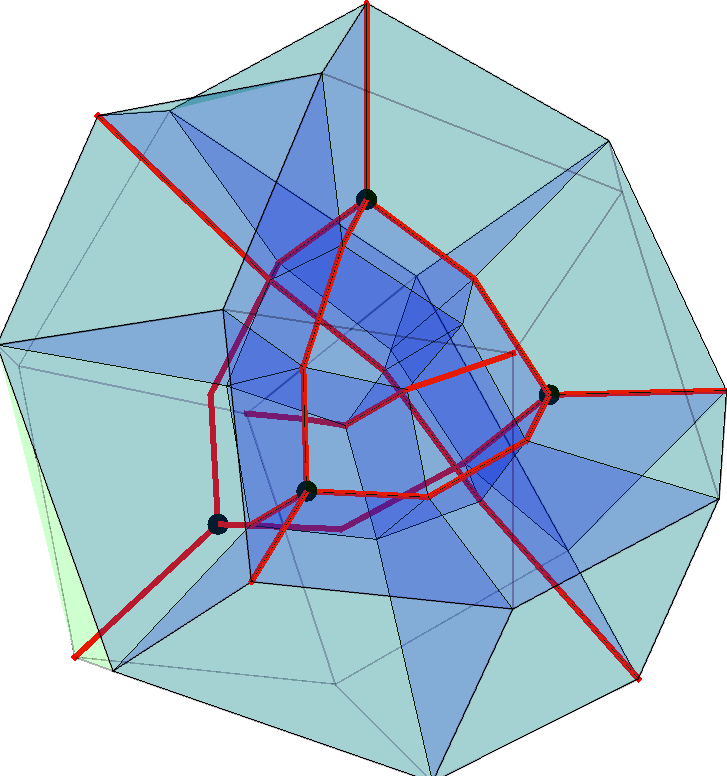}
    \includegraphics[width=.091\textwidth]{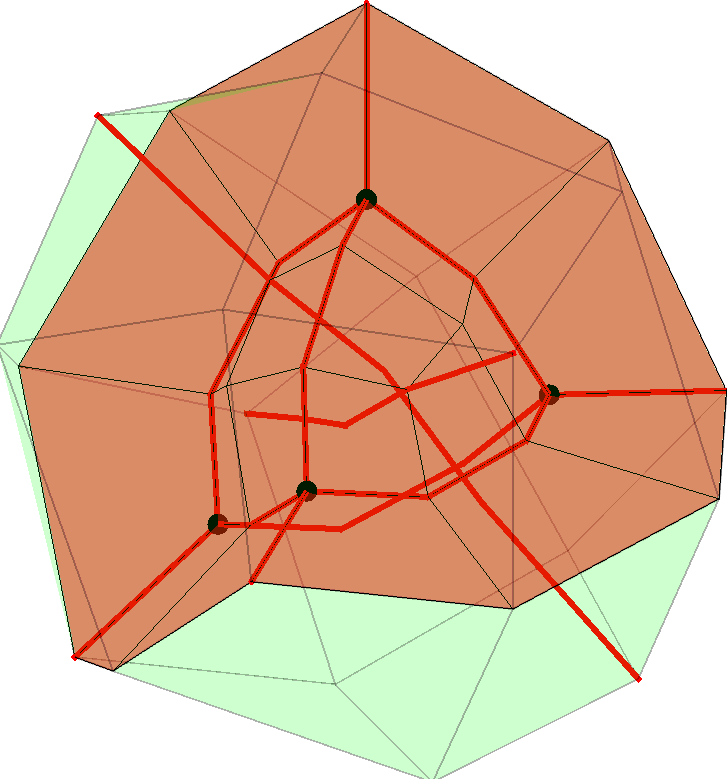}
    \includegraphics[width=.091\textwidth]{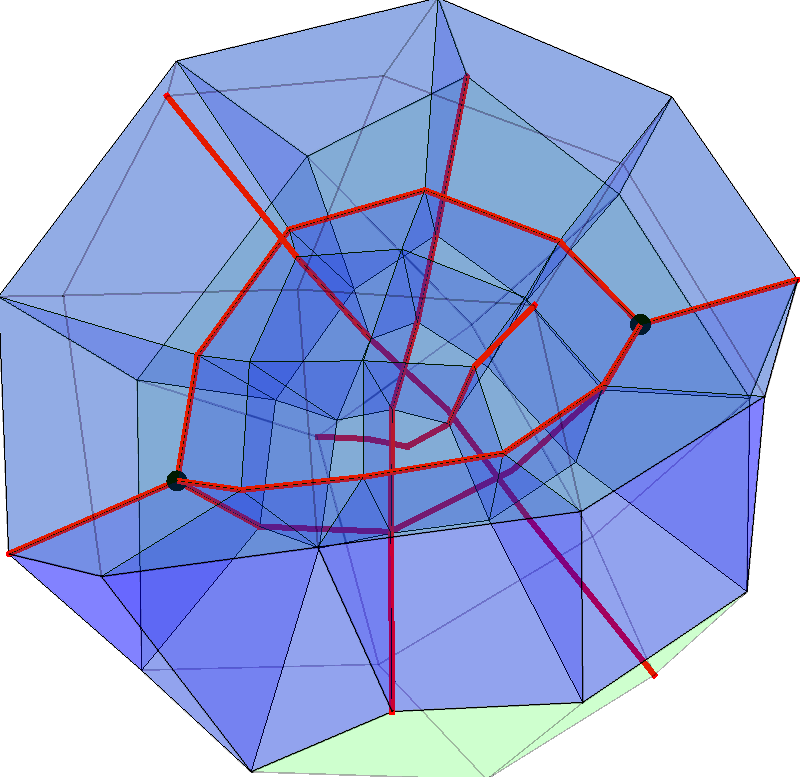}
    \includegraphics[width=.091\textwidth]{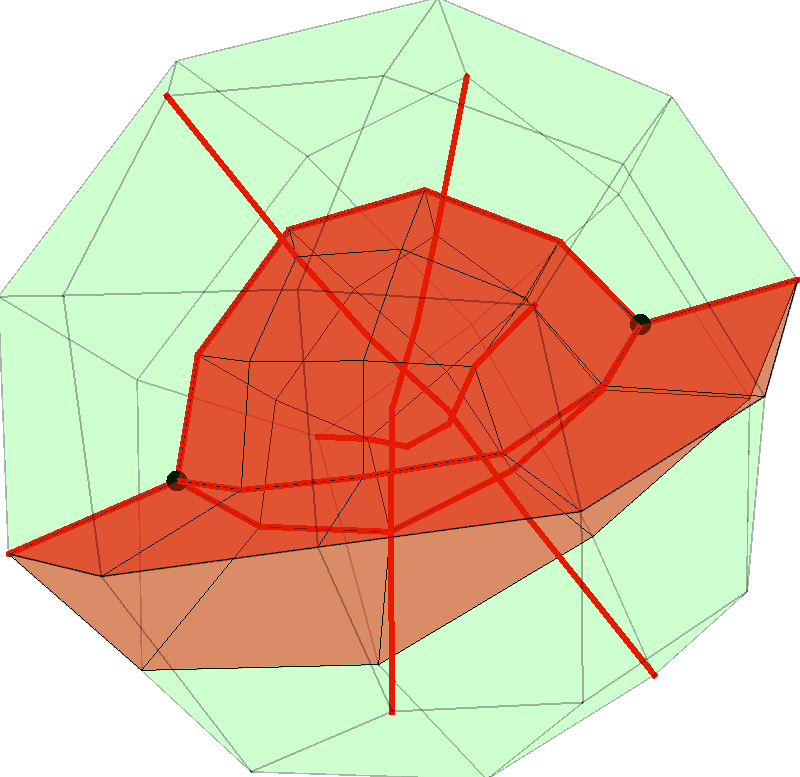}
    \includegraphics[width=.091\textwidth]{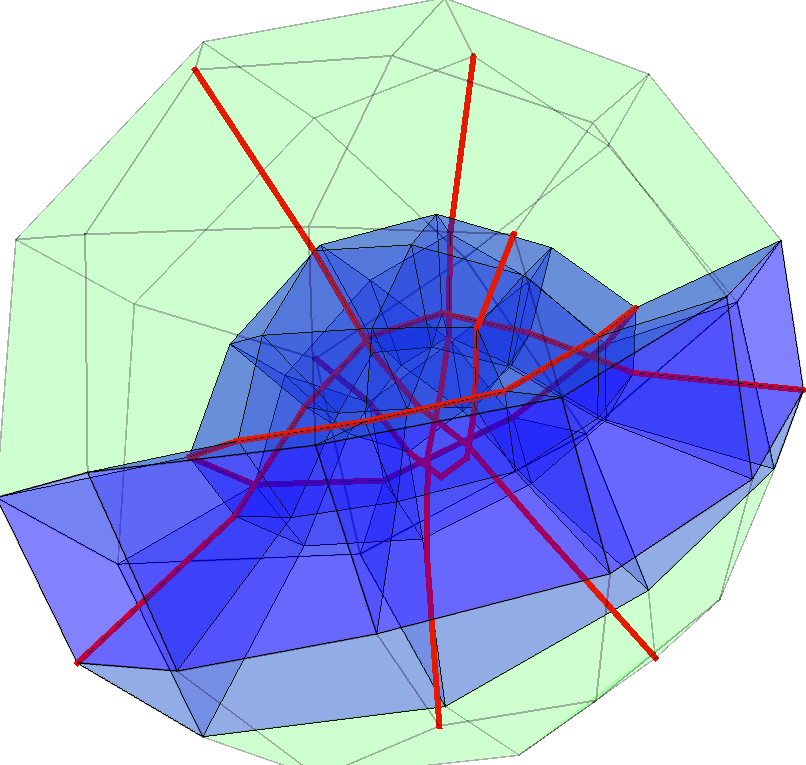}
    \includegraphics[width=.091\textwidth]{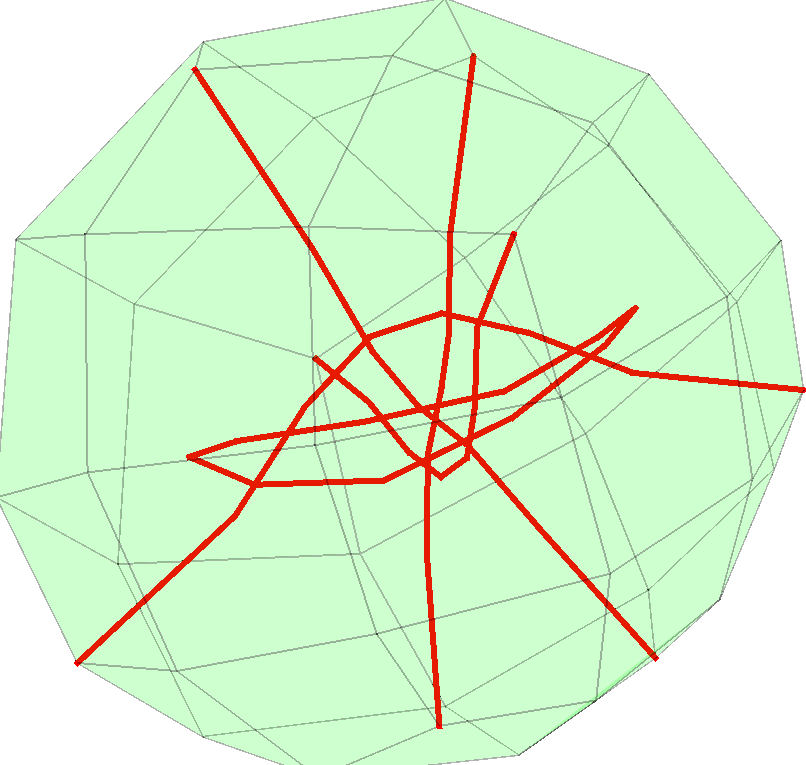}\\
    \includegraphics[width=.091\textwidth]{figures/schems/blank.png}
    \includegraphics[width=.091\textwidth]{figures/schems/blank.png}
    \includegraphics[width=.091\textwidth]{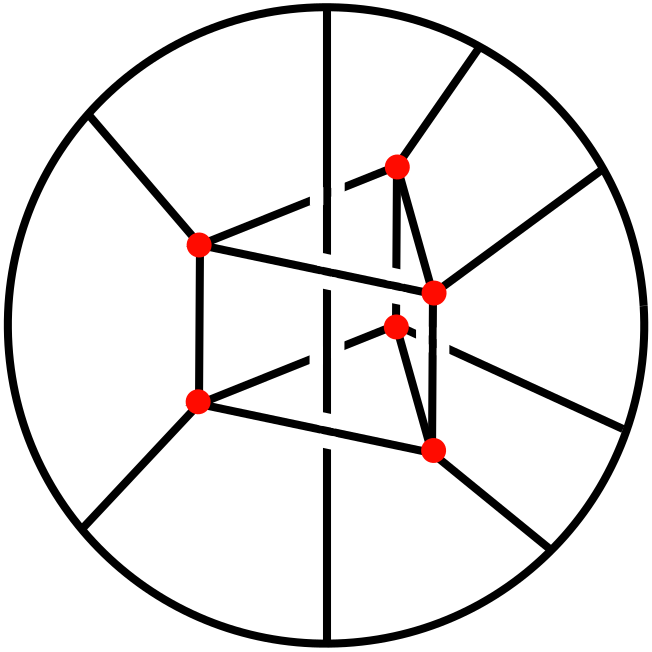}
    \includegraphics[width=.091\textwidth]{figures/schems/g4.png}
    \includegraphics[width=.091\textwidth]{figures/schems/blank.png}
    \includegraphics[width=.091\textwidth]{figures/schems/blank.png}
    \includegraphics[width=.091\textwidth]{figures/schems/blank.png}
    \includegraphics[width=.091\textwidth]{figures/schems/blank.png}
    \includegraphics[width=.091\textwidth]{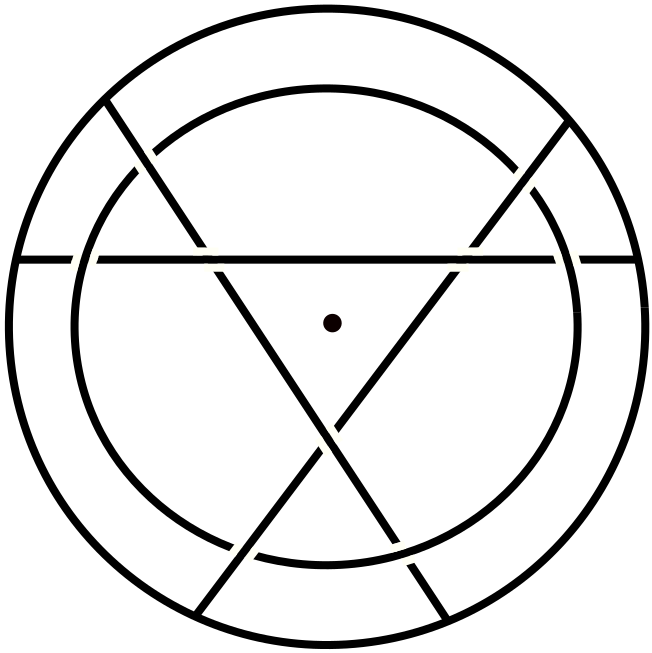}
    \includegraphics[width=.091\textwidth]{figures/schems/g2.png}
    % \\
    % \qquad  \qquad \qquad  \qquad \qquad
    % \includegraphics[width=.091\textwidth]{}
    % \includegraphics[width=.091\textwidth]{}
    % \includegraphics[width=.091\textwidth]{}
    % \includegraphics[width=.091\textwidth]{}
    % \includegraphics[width=.091\textwidth]{}
    % \includegraphics[width=.091\textwidth]{}
    % \includegraphics[width=.091\textwidth]{}
    % \includegraphics[width=.091\textwidth]{}
    \caption{We show two sequences of singular graph decomposition starting from the same hex mesh on the left to a fully decomposed singular graph on the right. The first row indicates select singular graph schematics for the first sequence. The last row indicates select singular graph schematics for the second sequence. Even though both singular graphs start out identical, the ending singular graphs are different due to different sheet inflations.}
    \label{fig:sgdecomp}
\end{figure*}

In \autoref{fig:LCpadded}, we apply our decomposition to more complex singular graphs. The first two rows depict the decomposition of the \textbf{G1} hex mesh. The starting singular graph consists of 12 nodes connected by 36 singular curves. This graph is successfully decomposed into seven singular curves, one of which is a closed cycle.
% two (2,2,2) nodes and ten (4,0,0) nodes. This graph is successfully decomposed into one valence 5 curve and six valence 3 curves, one of which is a closed cycle. 
% two (2,2,2) nodes and ten (4,0,0) nodes. This graph is successfully decomposed into one valence 5 curve and six valence 3 curves, one of which is a closed cycle. 
The last two rows depict the decomposition of the \textbf{G2} hex mesh.
% \autoref{fig:LCpadded} also depicts decomposition of the singular graph of a padded C-configuration of hexes. 
%The singular graph starts with twelve (4,0,0) nodes and four (2,2,2) nodes. These are decomposed into two valence 3 curves and ten valence 5 curves.
The starting singular graph consists of 16 nodes connected by 40 singular curves. This graph is successfully decomposed into 12 singular curves, two of which are closed cycles.
While the starting singular graphs are different, both meshes are fully decomposed with the same number of sheet inflations.

% The final configuration is suggestive of a padded pentagon.

\begin{figure*}
    \centering
    \includegraphics[width=.135\textwidth]{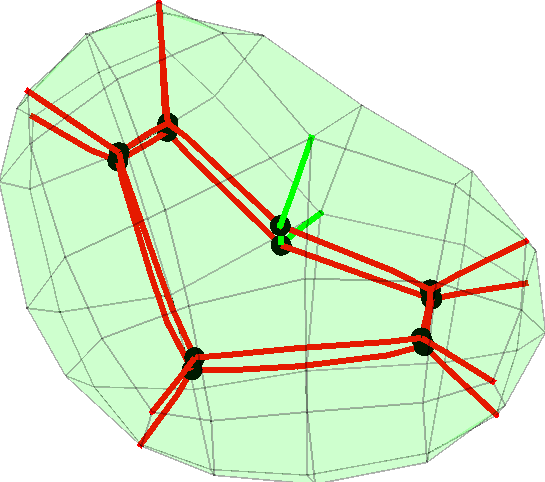}
\includegraphics[width=.135\textwidth]{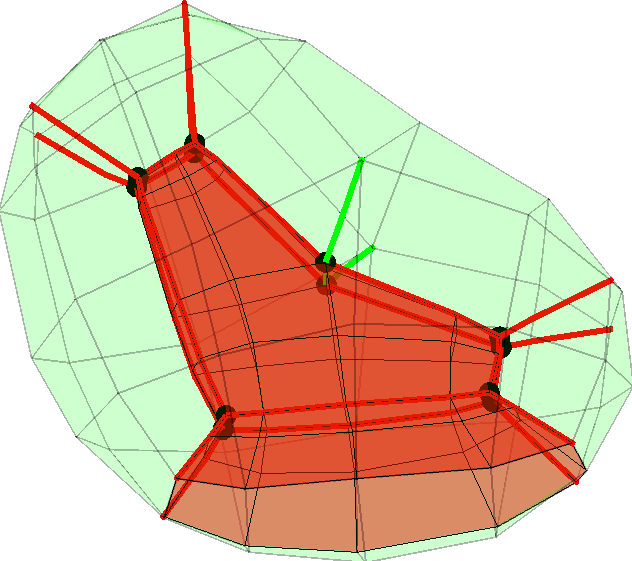}
\includegraphics[width=.135\textwidth]{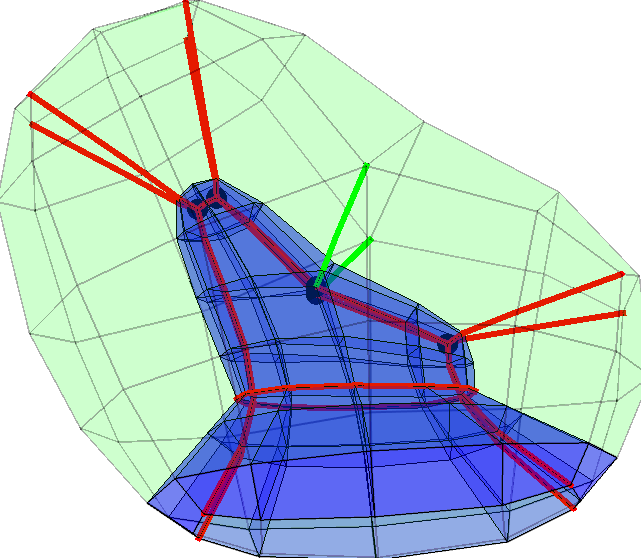}
\includegraphics[width=.135\textwidth]{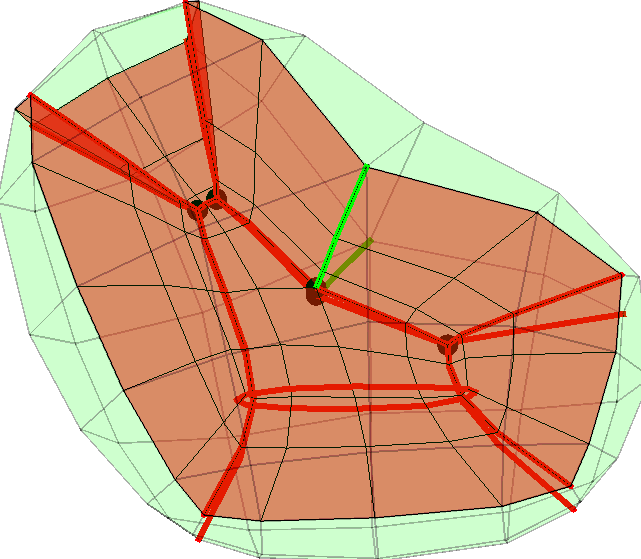}
\includegraphics[width=.135\textwidth]{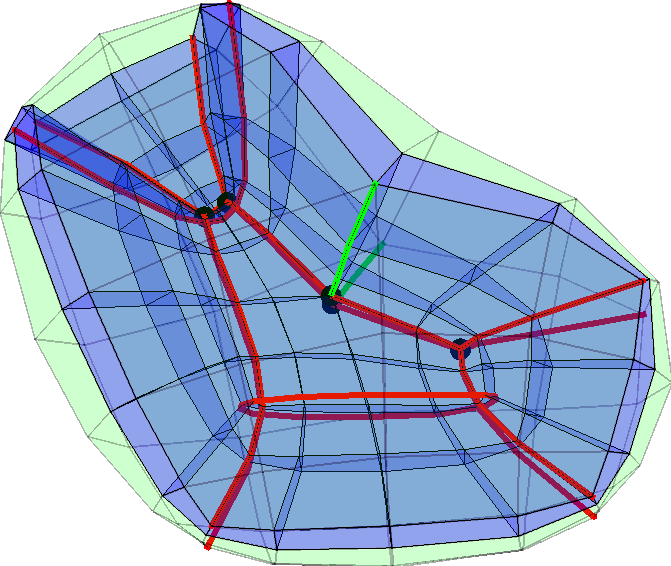}
\includegraphics[width=.135\textwidth]{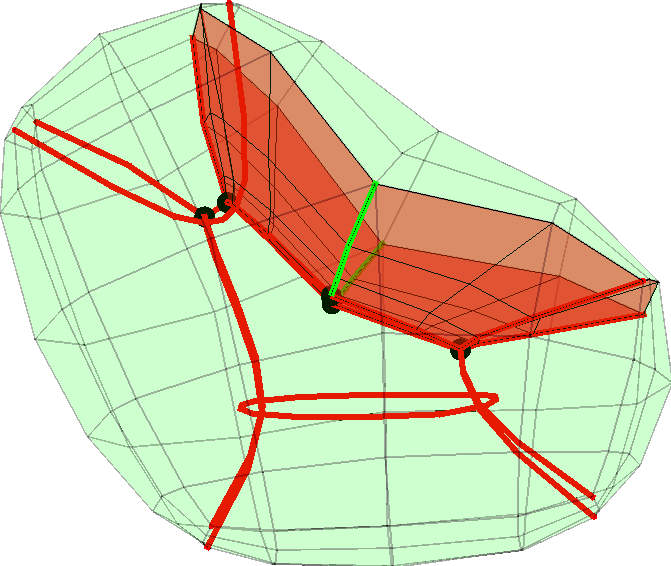}
\includegraphics[width=.135\textwidth]{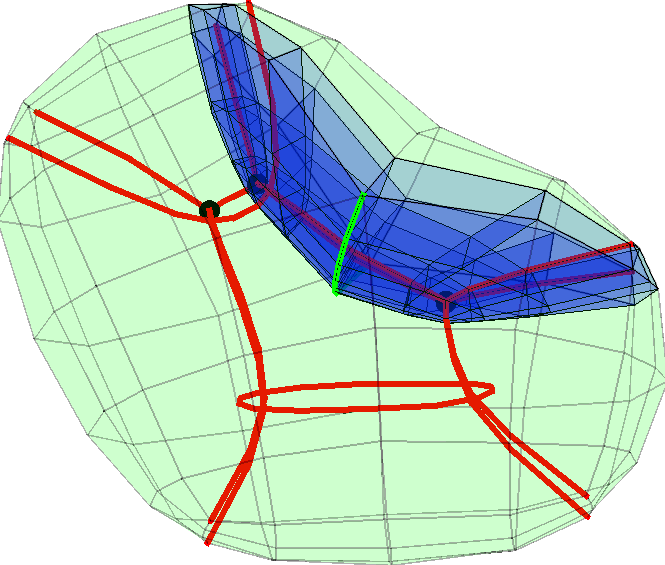}
\\
\includegraphics[width=.135\textwidth]{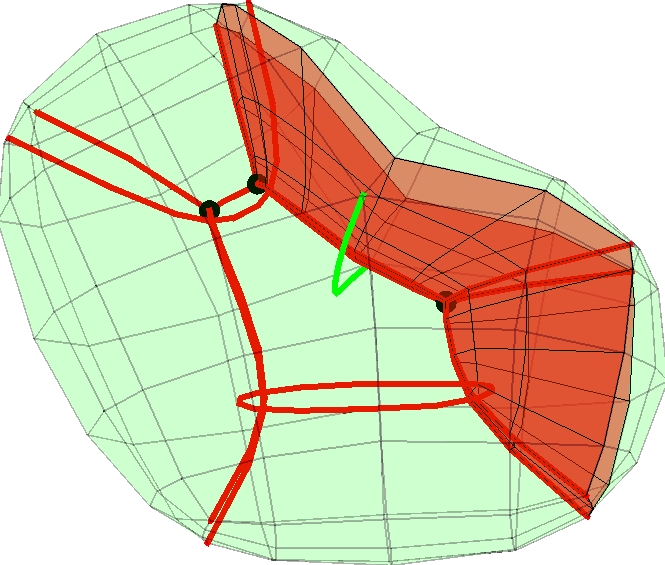}
\includegraphics[width=.135\textwidth]{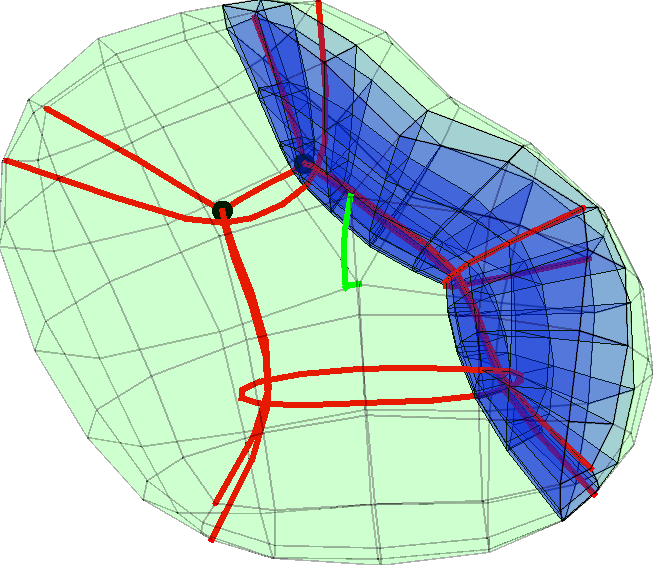}
\includegraphics[width=.135\textwidth]{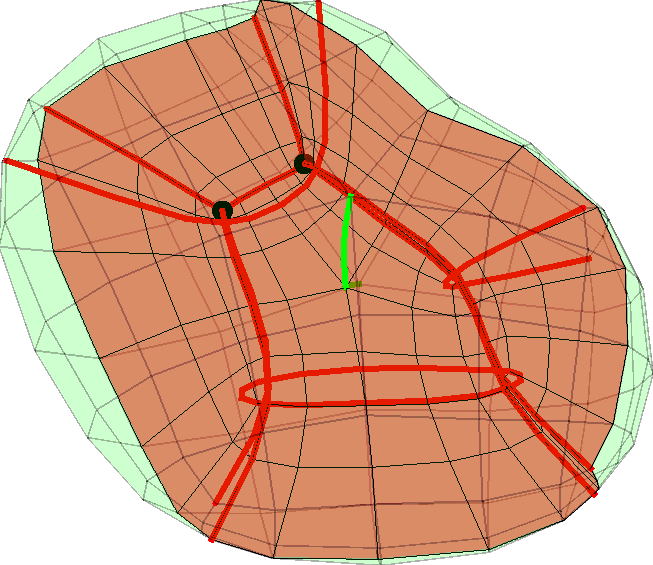}
\includegraphics[width=.135\textwidth]{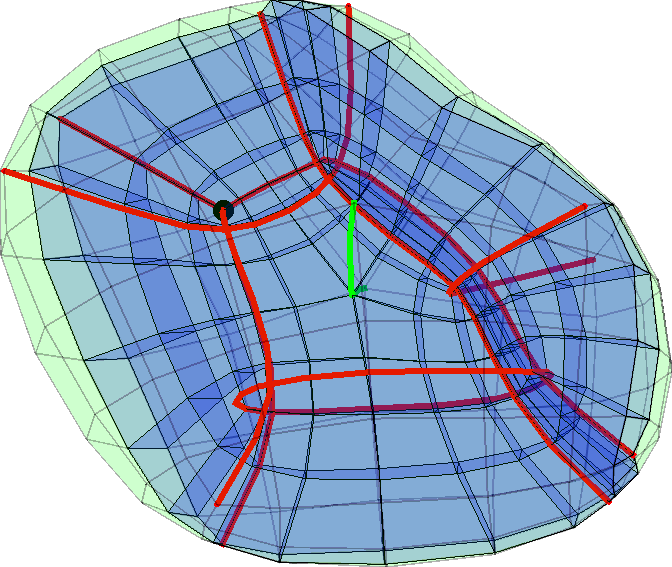}
\includegraphics[width=.135\textwidth]{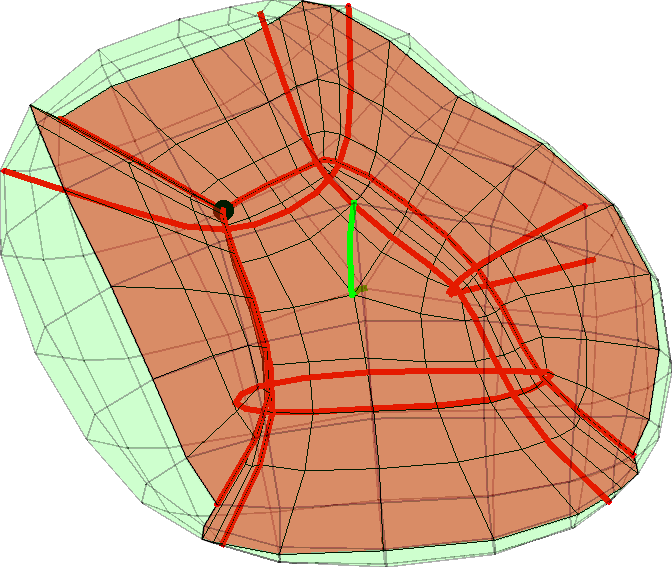}
\includegraphics[width=.135\textwidth]{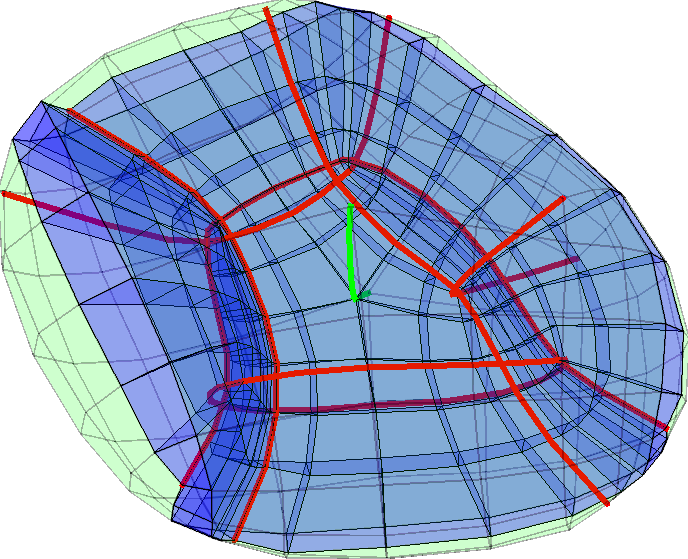}
\includegraphics[width=.135\textwidth]{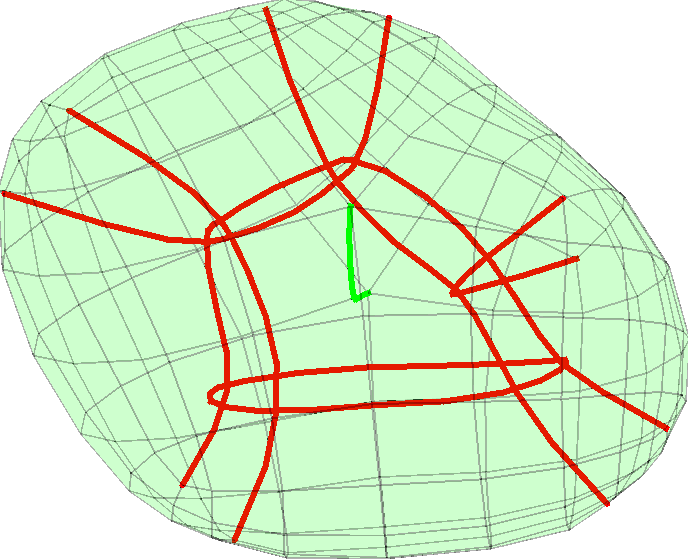}
\\
\hfill $\;$ \hfill
\\
 \includegraphics[width=.135\textwidth]{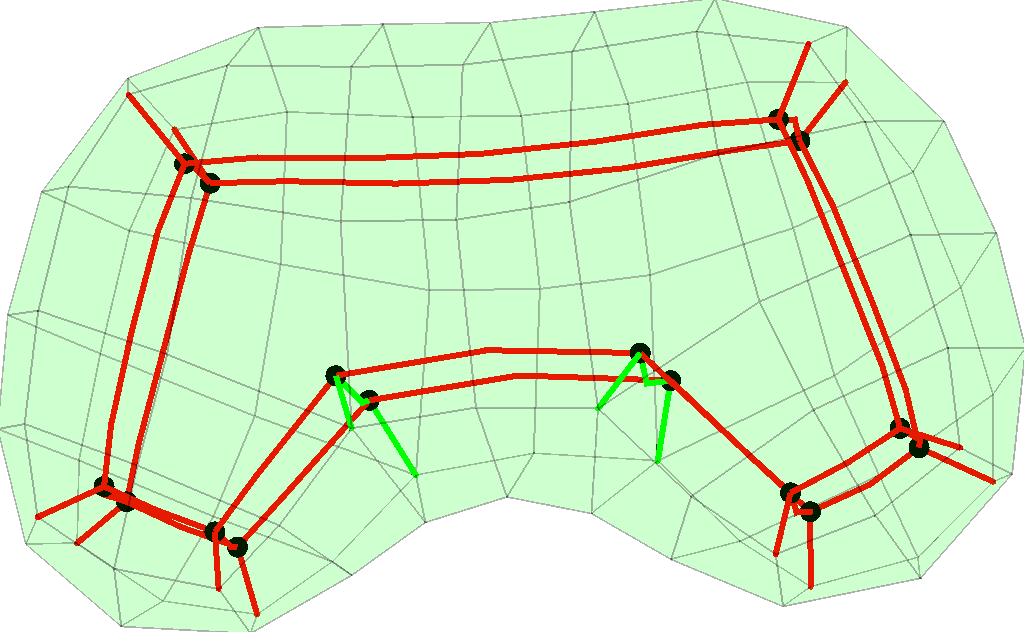}
    \includegraphics[width=.135\textwidth]{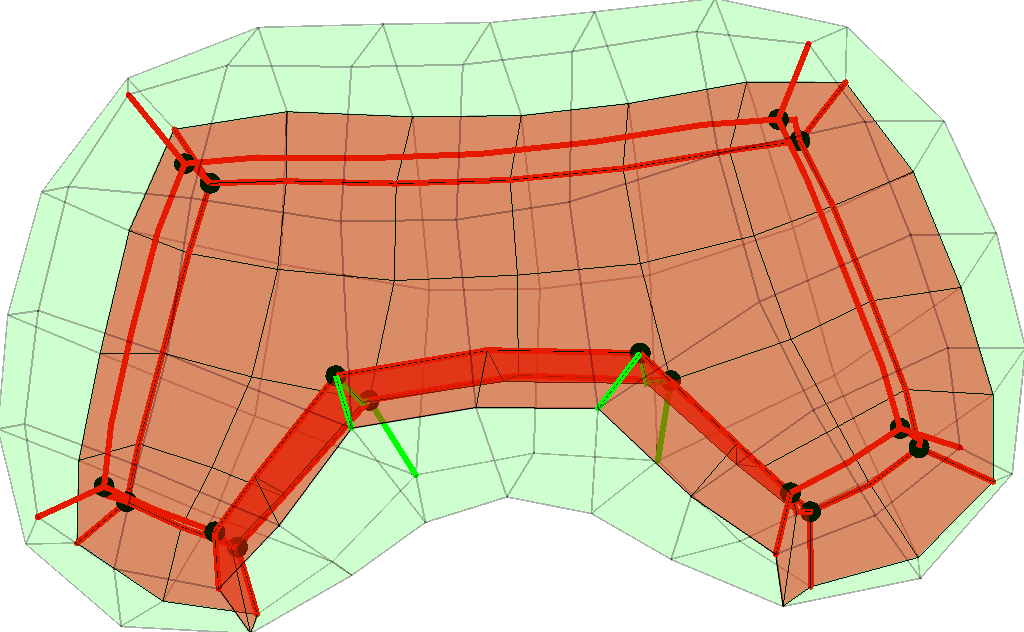}
    \includegraphics[width=.135\textwidth]{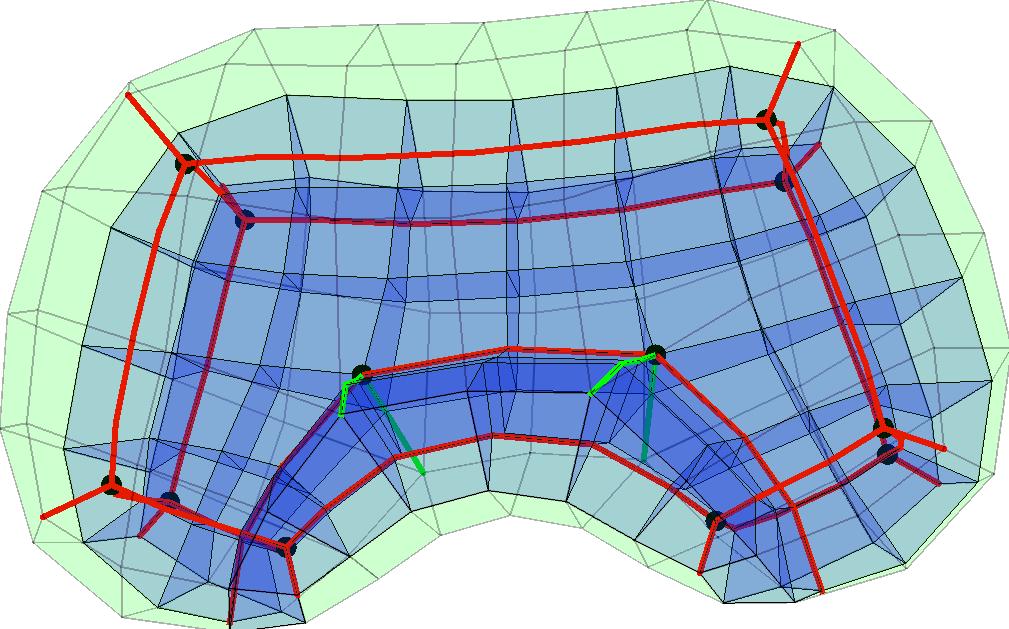}
    \includegraphics[width=.135\textwidth]{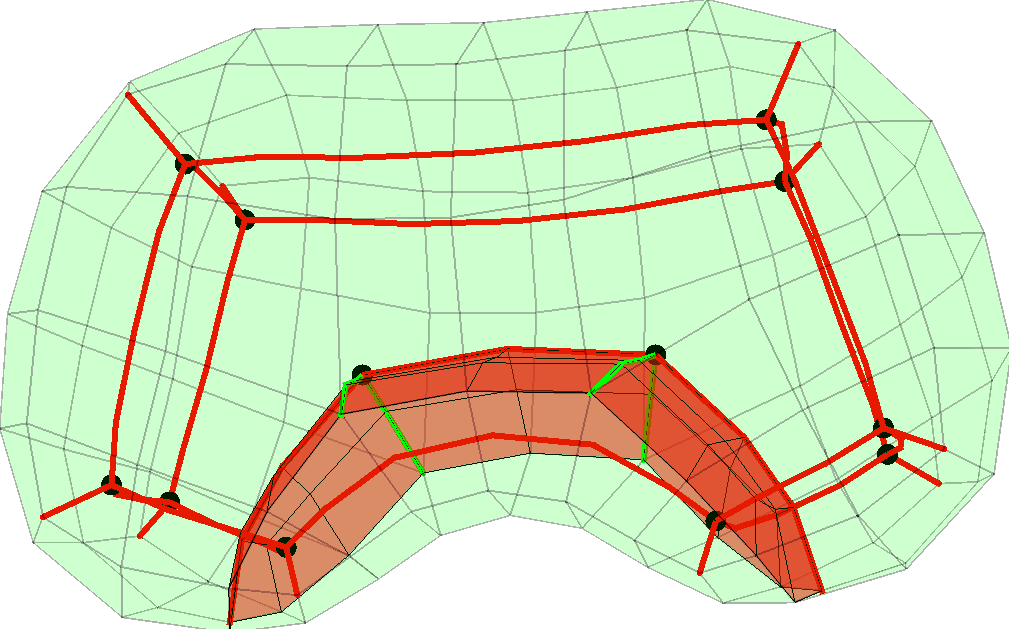}
    \includegraphics[width=.135\textwidth]{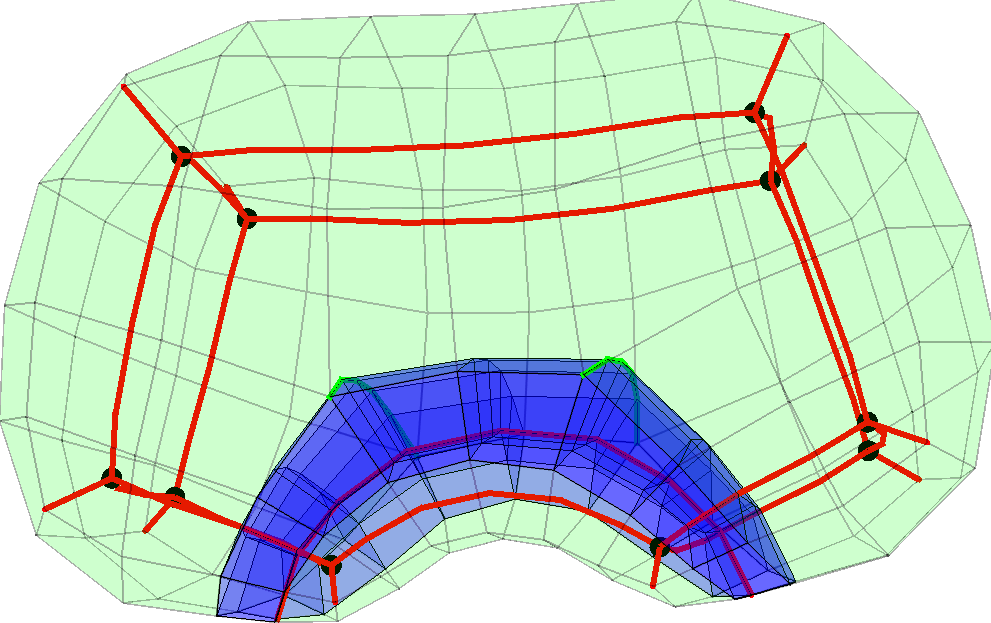}
    \includegraphics[width=.135\textwidth]{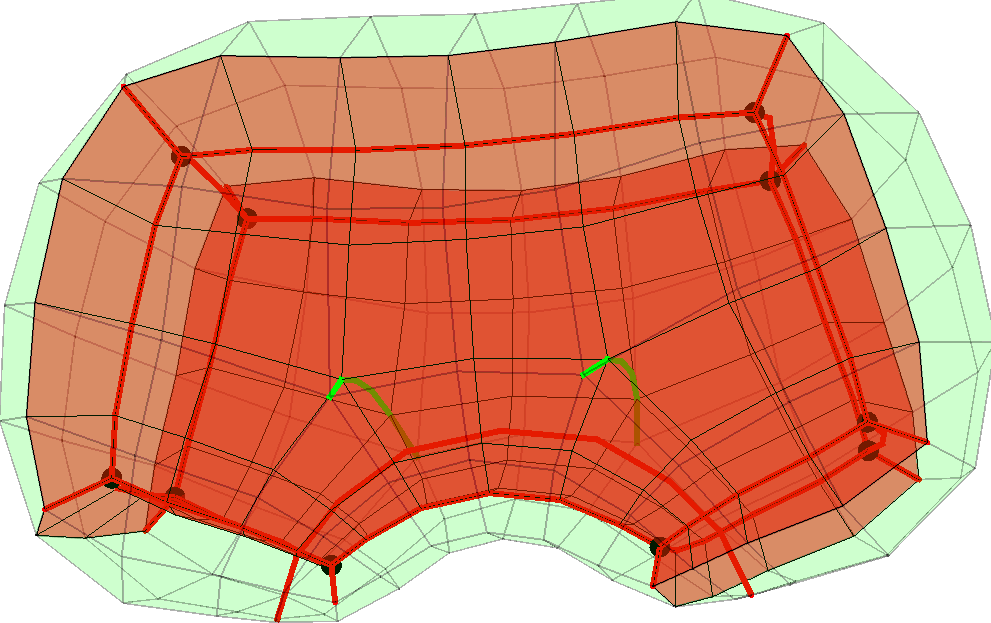}
    \includegraphics[width=.135\textwidth]{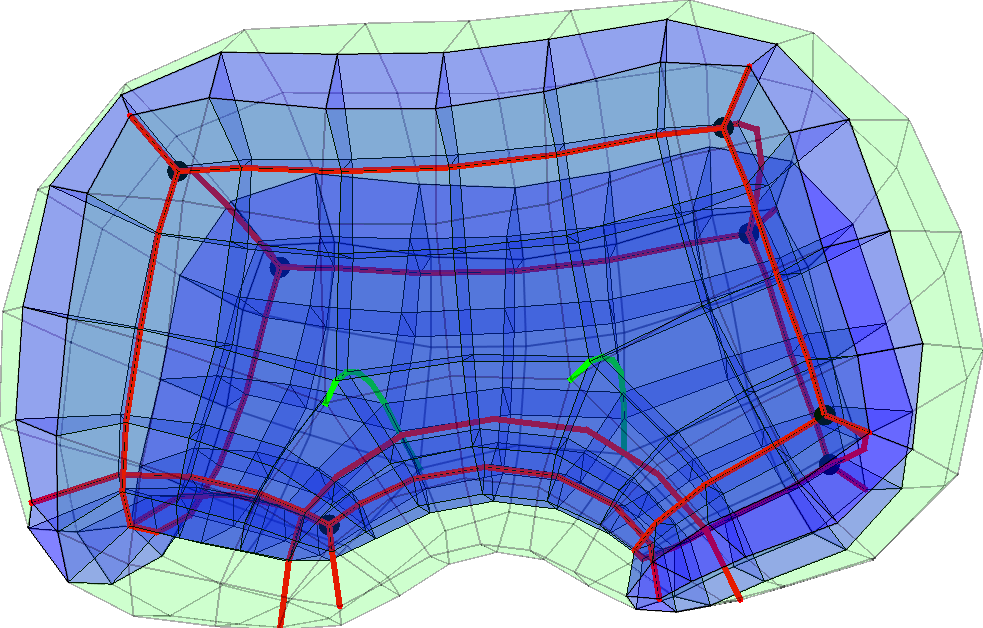}
    \\
    \includegraphics[width=.135\textwidth]{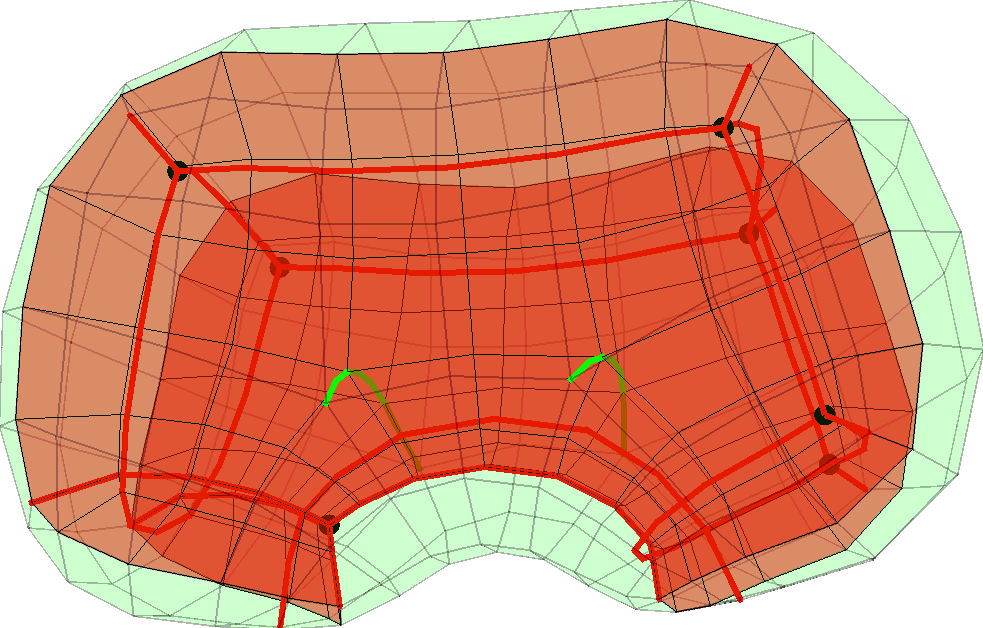}
    \includegraphics[width=.135\textwidth]{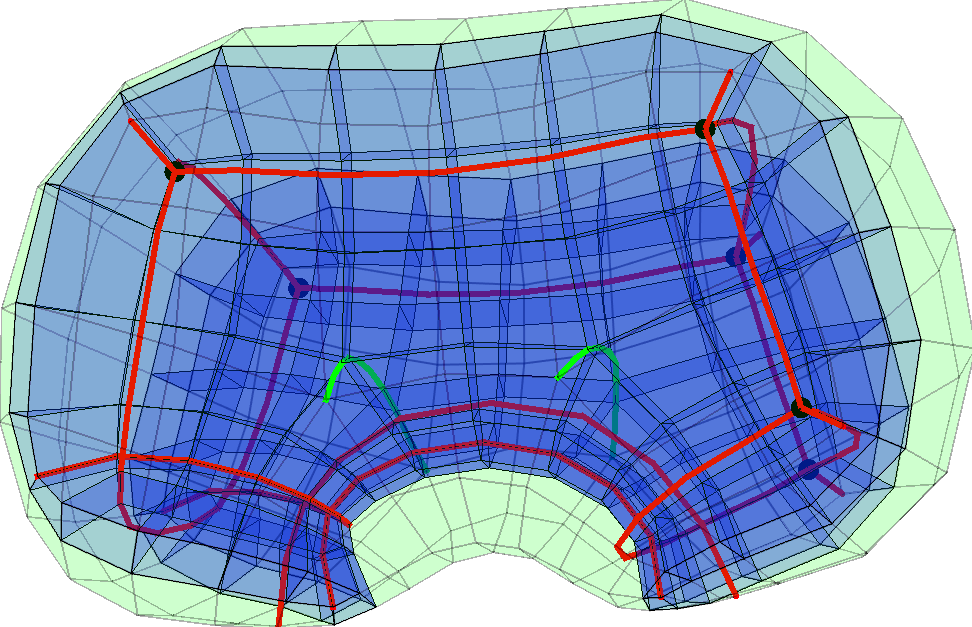}
    \includegraphics[width=.135\textwidth]{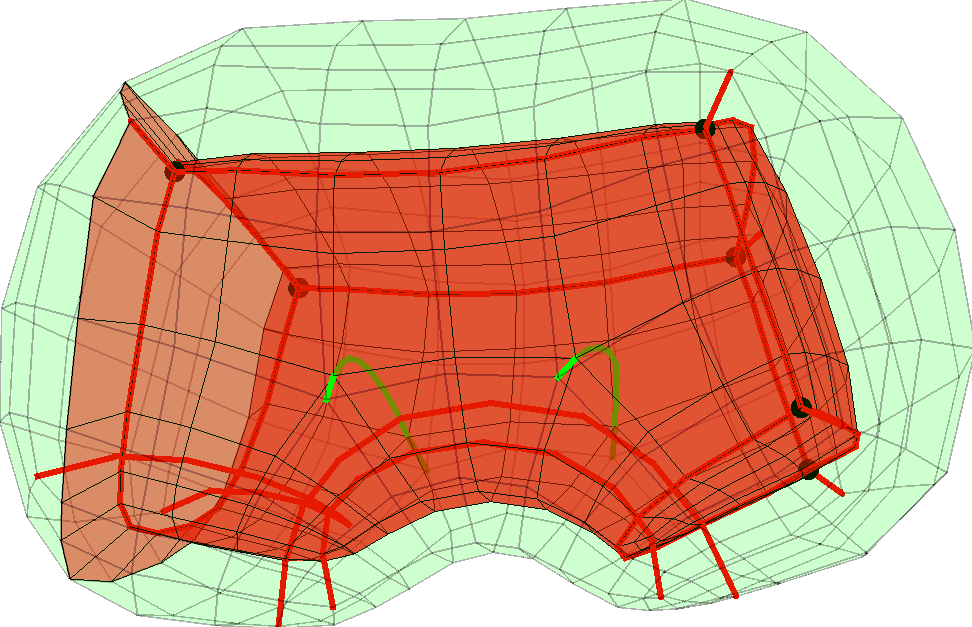}
    \includegraphics[width=.135\textwidth]{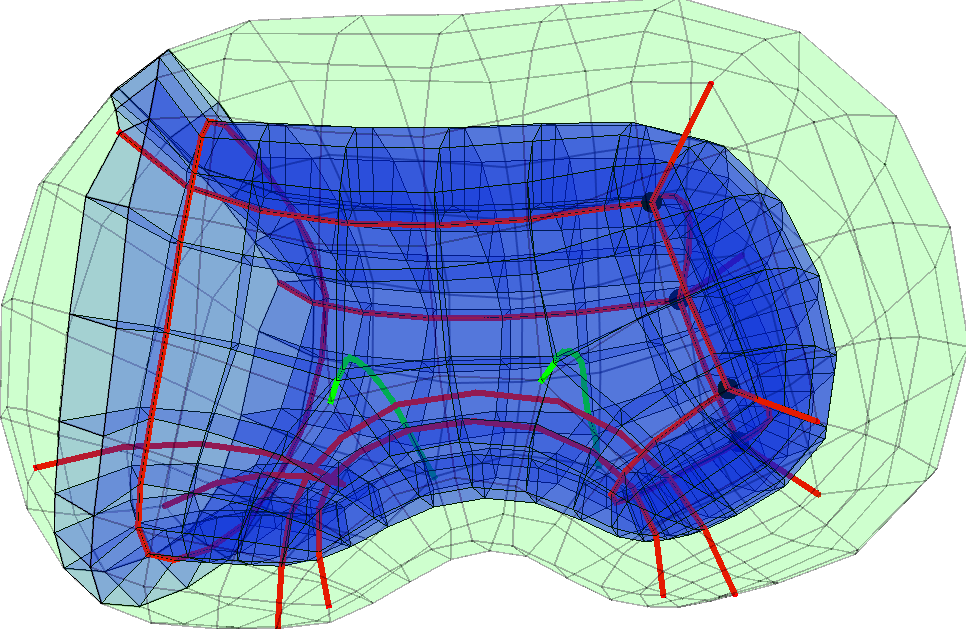}
    \includegraphics[width=.135\textwidth]{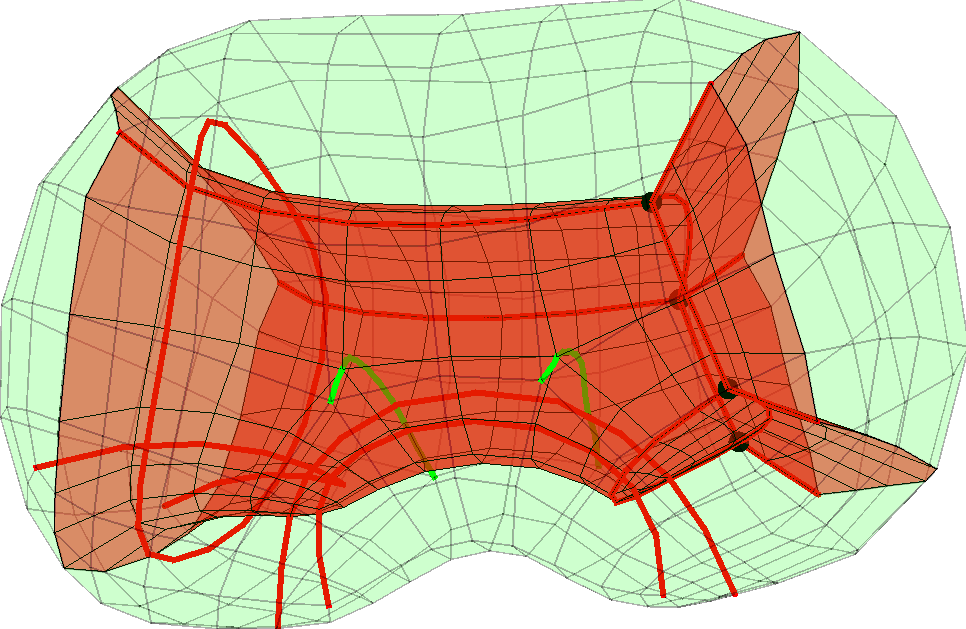}
    \includegraphics[width=.135\textwidth]{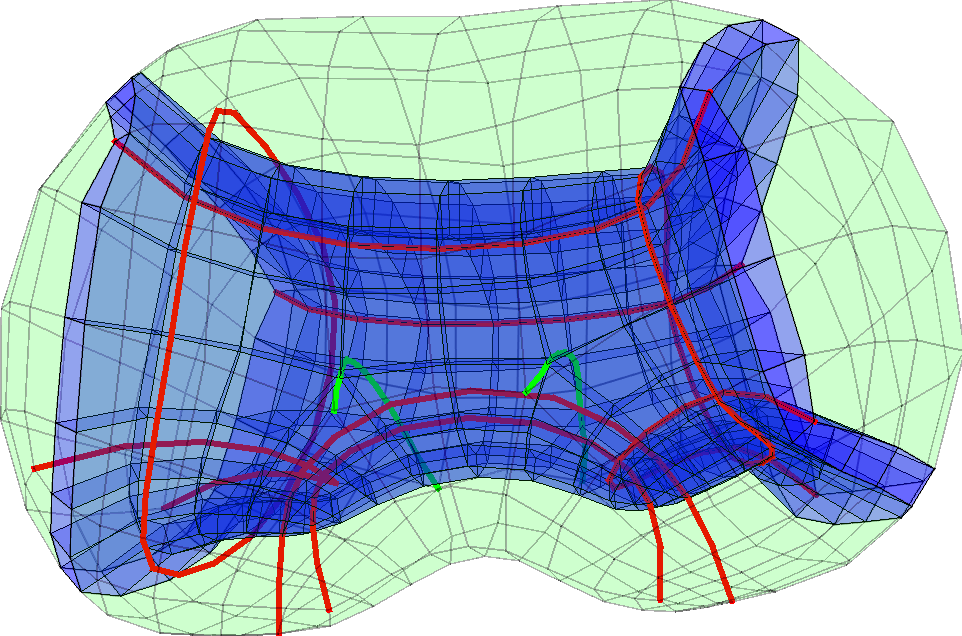}
    \includegraphics[width=.135\textwidth]{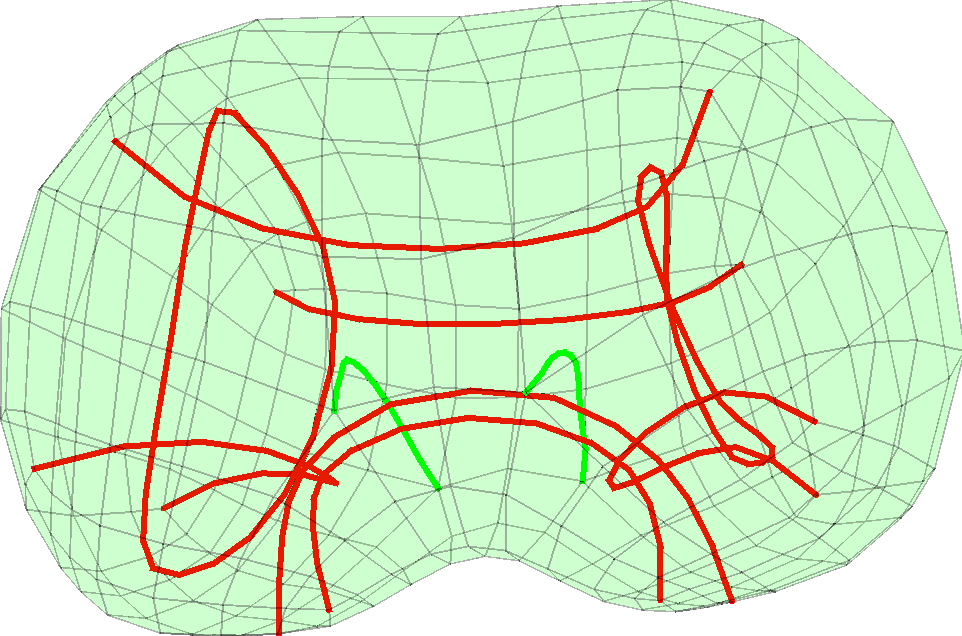}
    \caption{We apply singular decomposition to the \textbf{G1} and \textbf{G2} hex meshes. The first two rows correspond to the sequence of singular graphs from decomposing \textbf{G1}. The last two rows correspond to the sequence of singular graphs from decomposing \textbf{G2}. The number of singular nodes decreases each sheet inflation ultimately resulting in a singular graph with no nodes at all.}
    \label{fig:LCpadded}
\end{figure*}

%\begin{figure*}
    %\centering
    %\caption{We apply singular decomposition to a padded hex mesh of a C. The number of singular nodes decreases each sheet inflation ultimately resulting in a singular graph with no nodes at all.}
    %\label{fig:Cpadded}
%\end{figure*}

\begin{figure*}
    \centering
    \includegraphics[width=.115\textwidth]{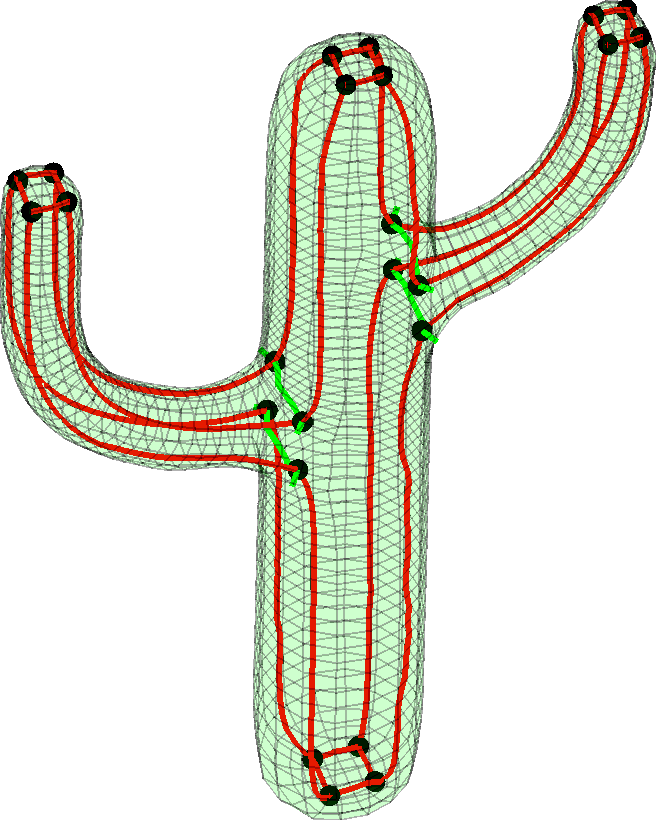}
\includegraphics[width=.115\textwidth]{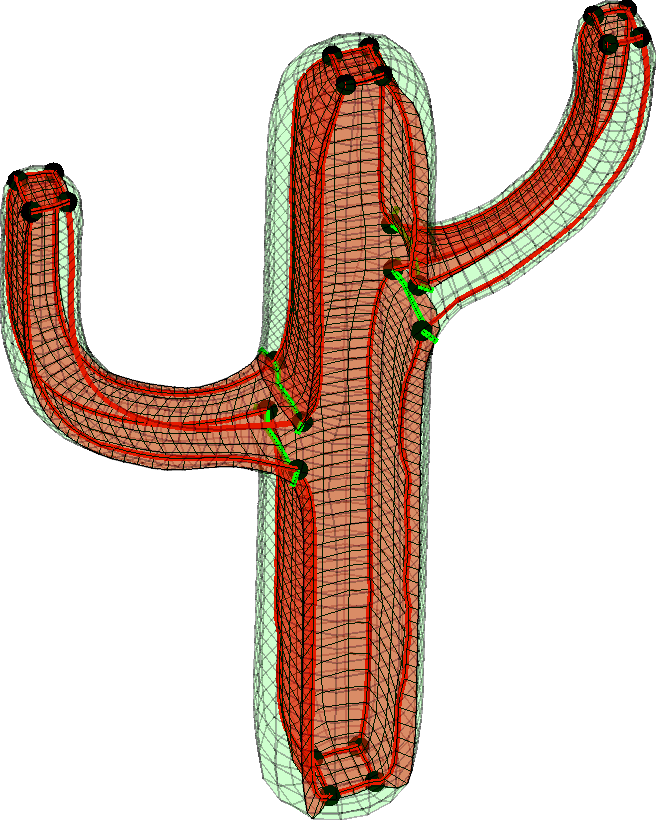}
\includegraphics[width=.115\textwidth]{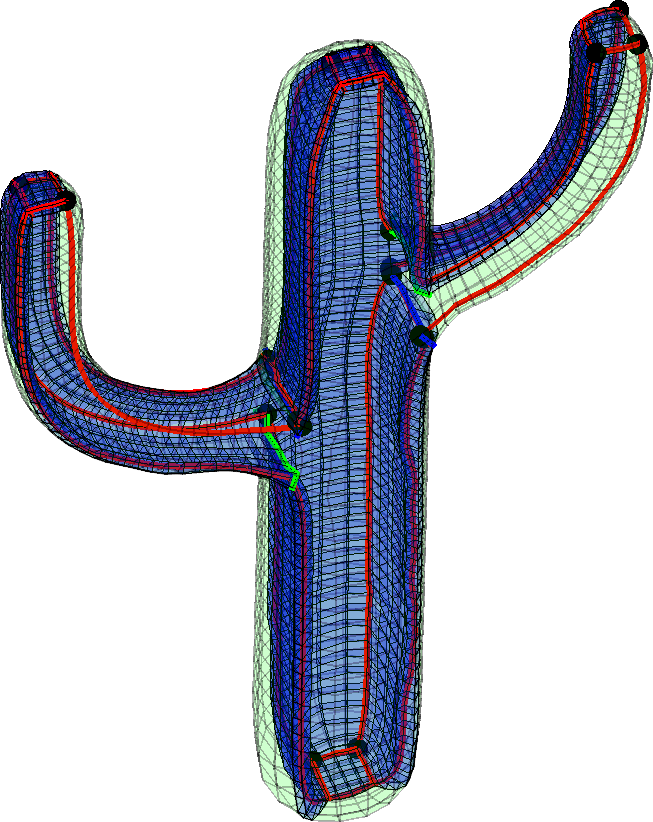}
\includegraphics[width=.115\textwidth]{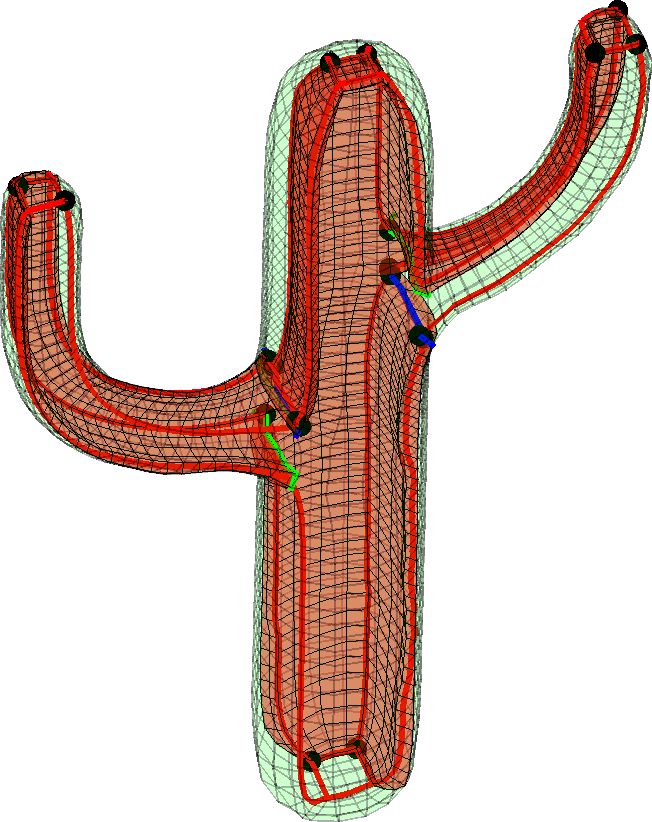}
\includegraphics[width=.115\textwidth]{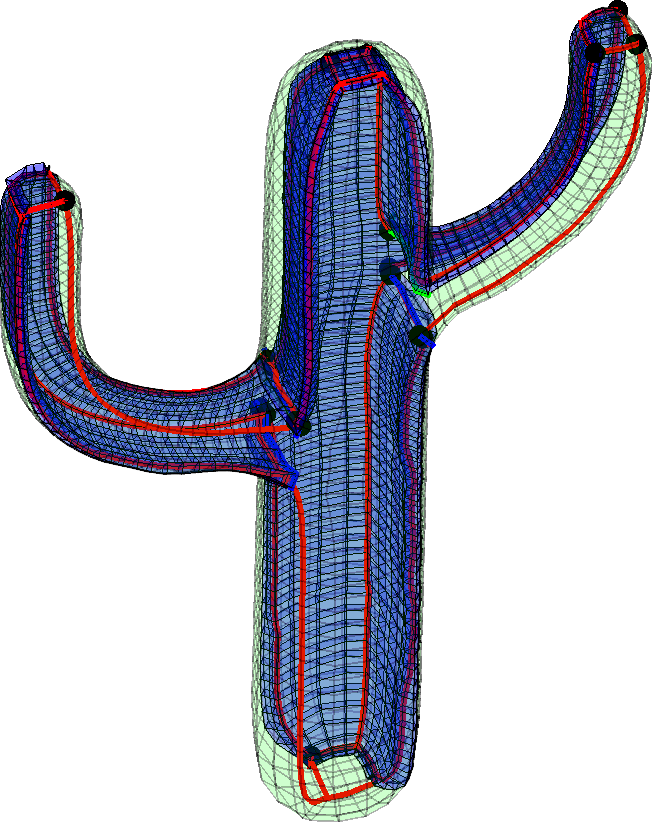}
\includegraphics[width=.115\textwidth]{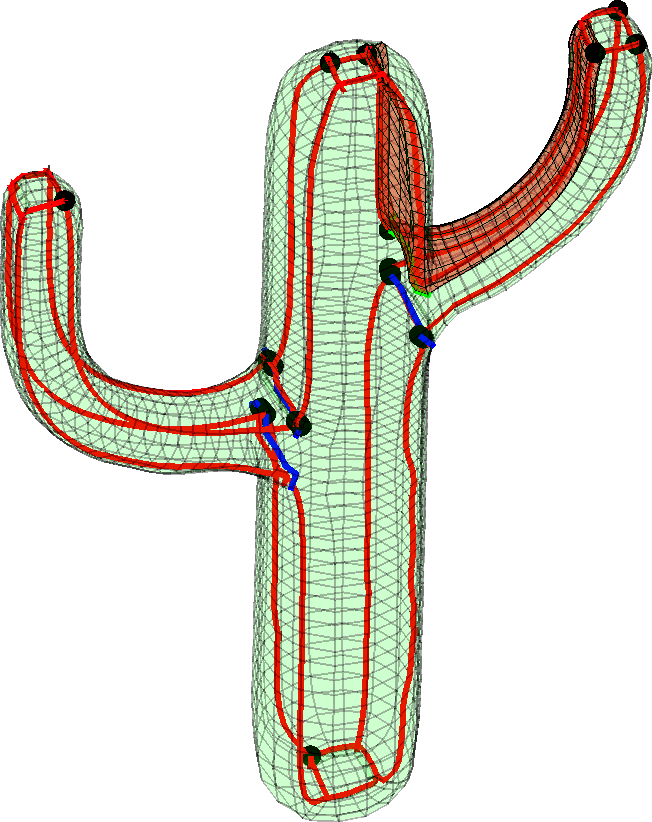}
\includegraphics[width=.115\textwidth]{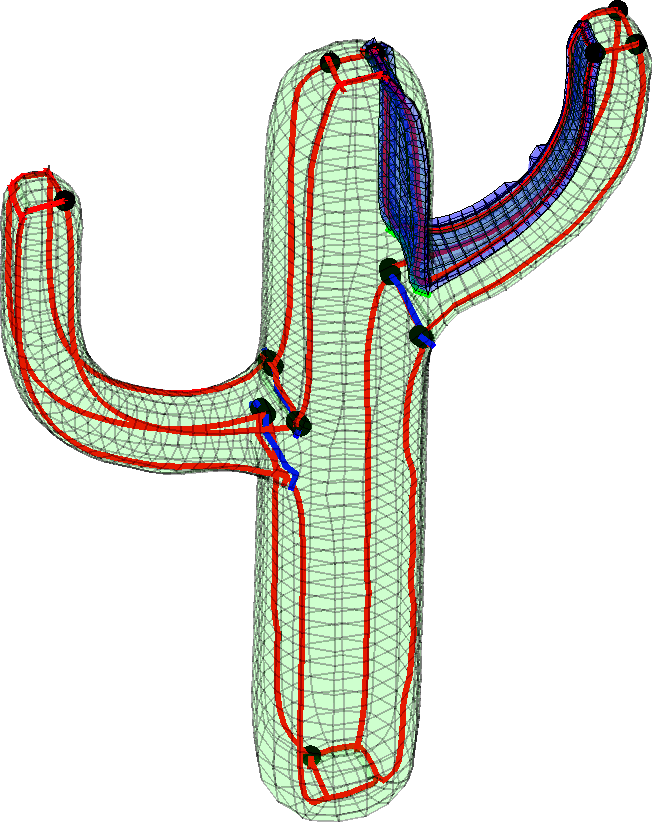}
\includegraphics[width=.115\textwidth]{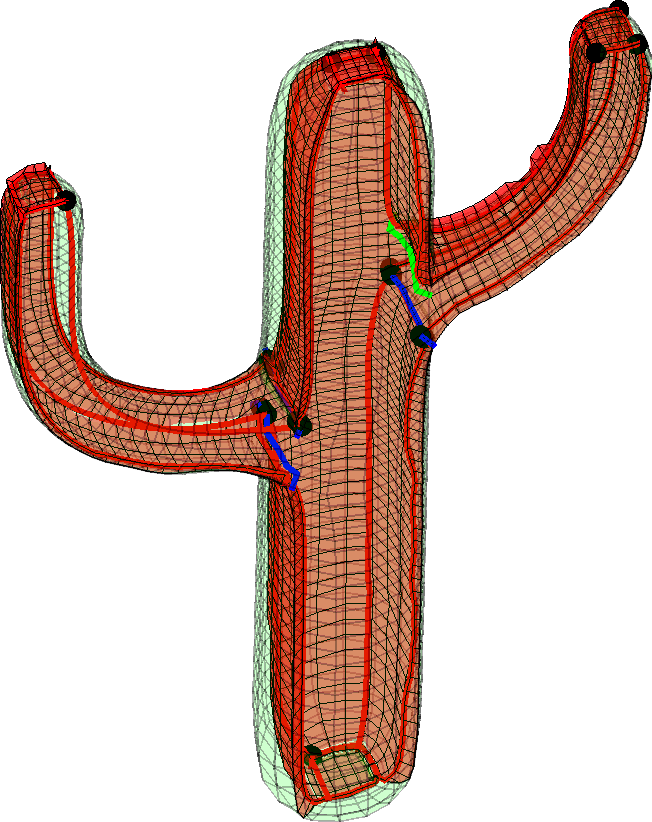}\\
\includegraphics[width=.115\textwidth]{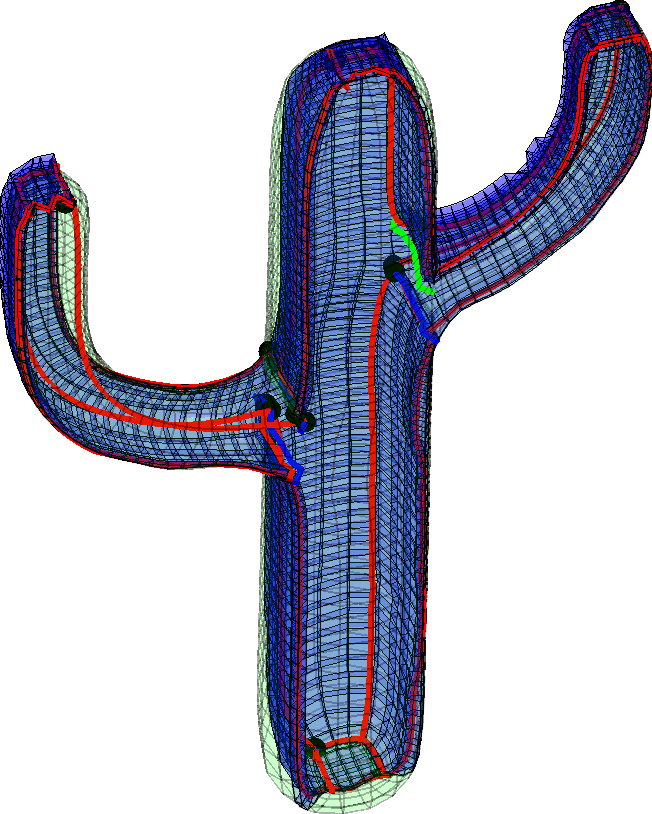}
\includegraphics[width=.115\textwidth]{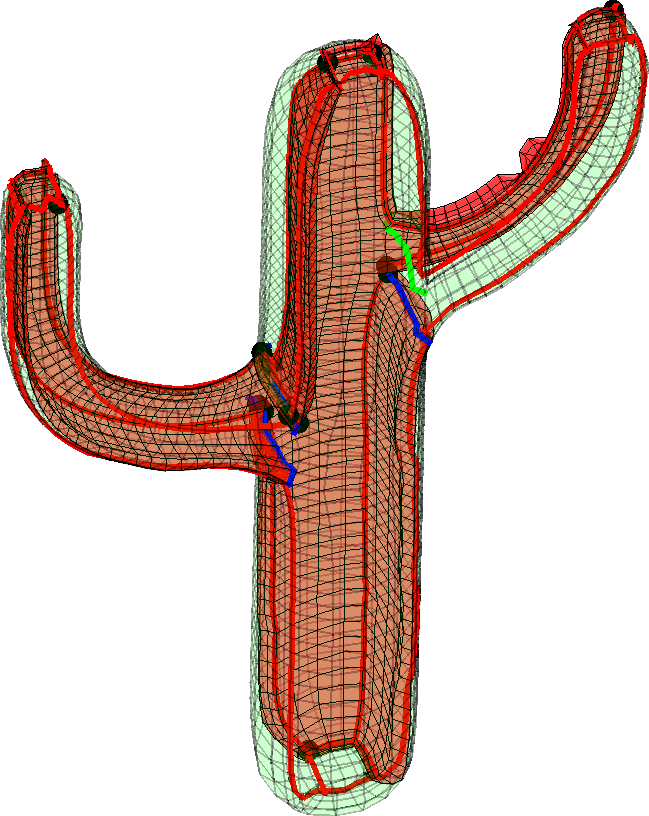}
\includegraphics[width=.115\textwidth]{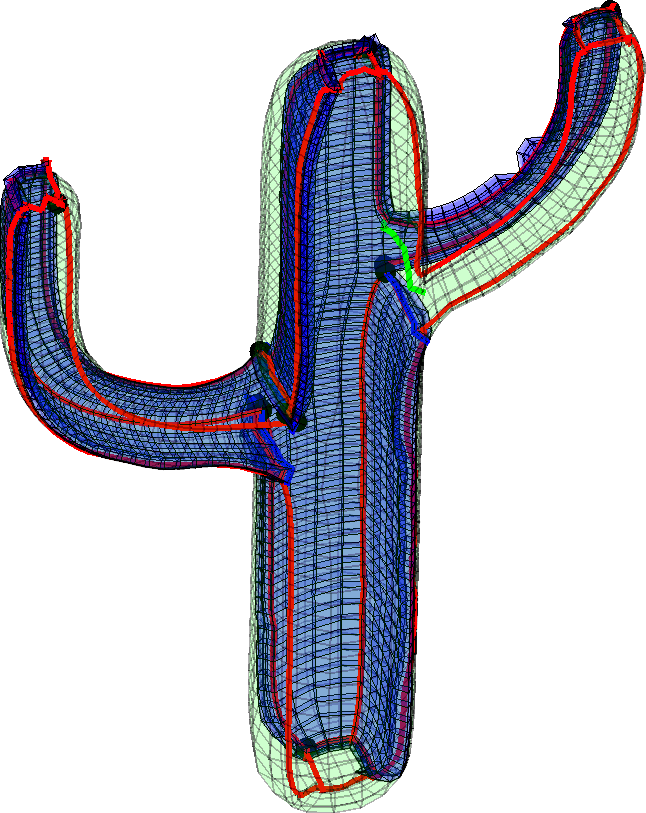}
\includegraphics[width=.115\textwidth]{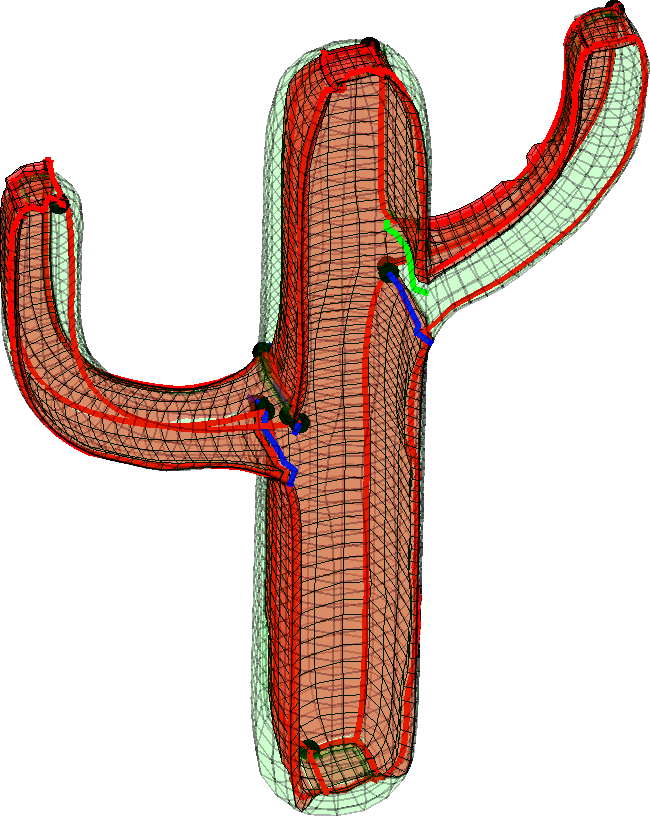}
\includegraphics[width=.115\textwidth]{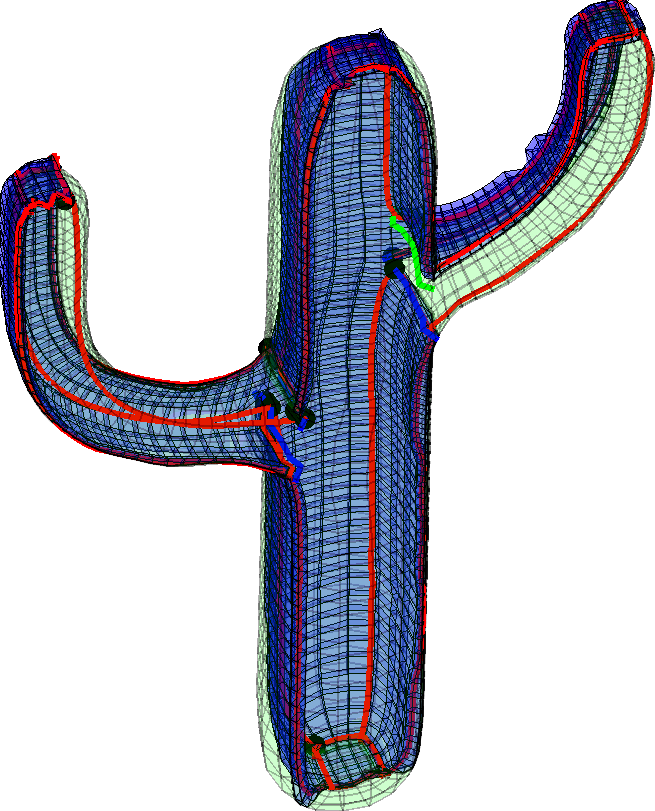}
\includegraphics[width=.115\textwidth]{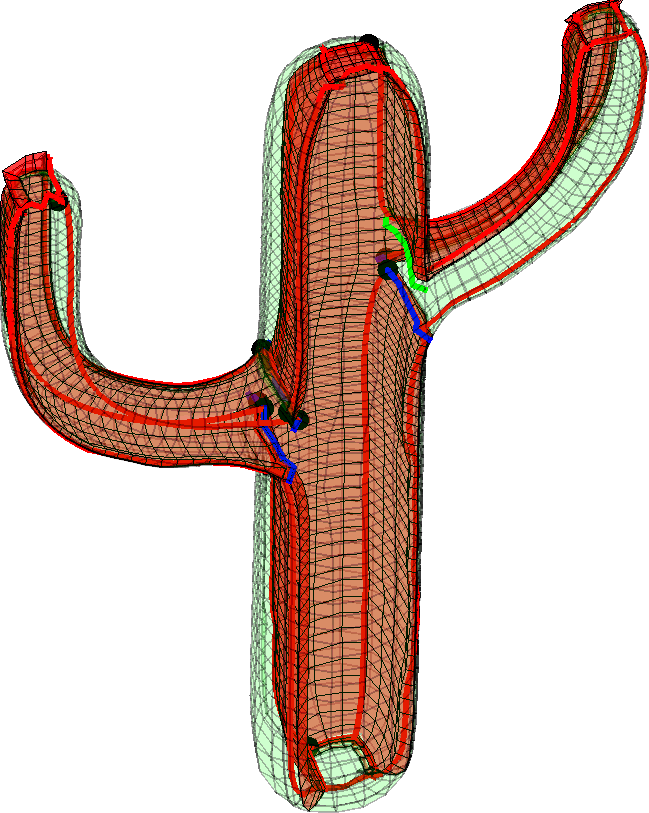}
\includegraphics[width=.115\textwidth]{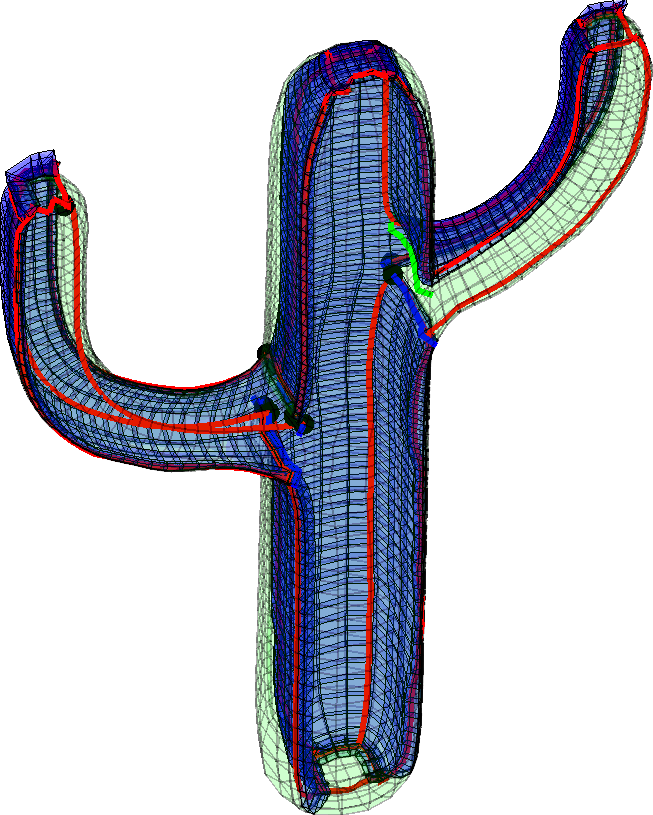}
\includegraphics[width=.115\textwidth]{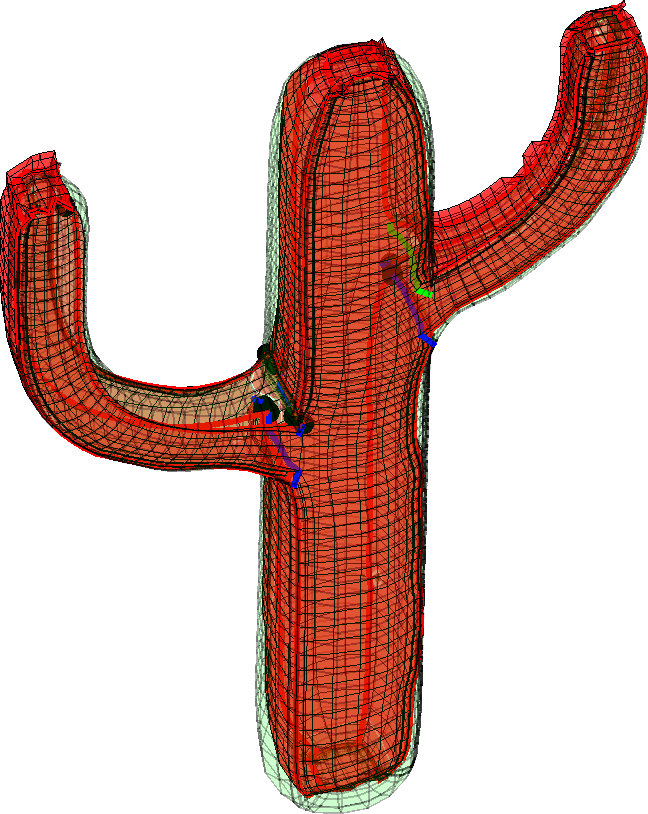}\\
\includegraphics[width=.115\textwidth]{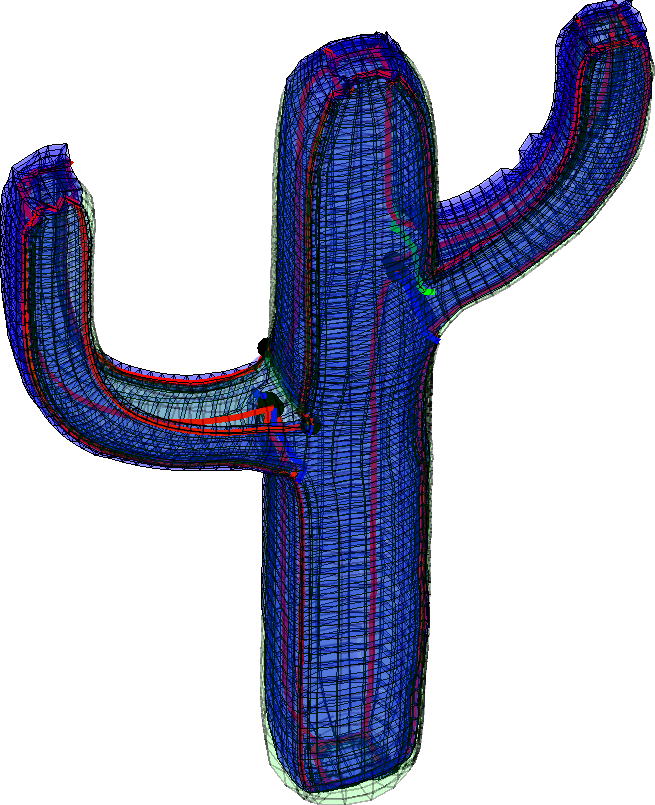}
\includegraphics[width=.115\textwidth]{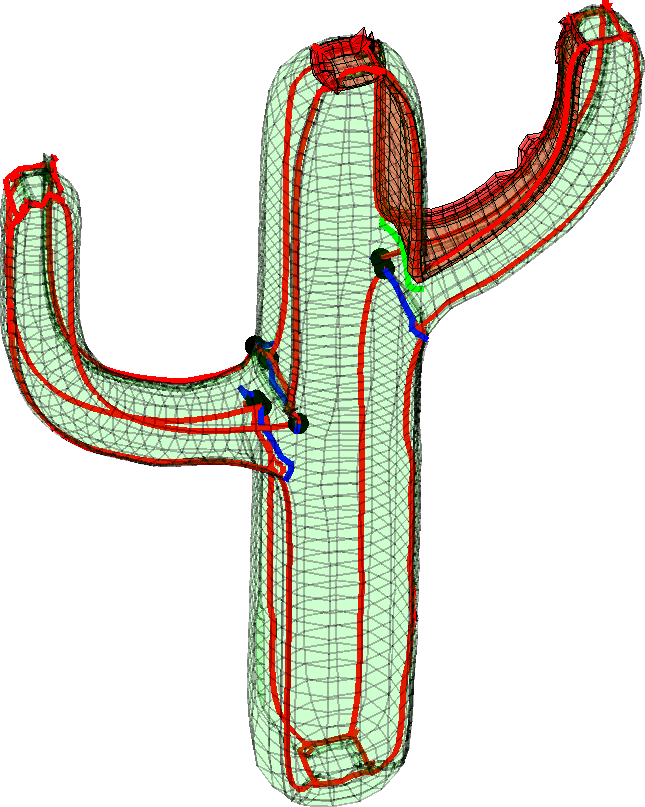}
\includegraphics[width=.115\textwidth]{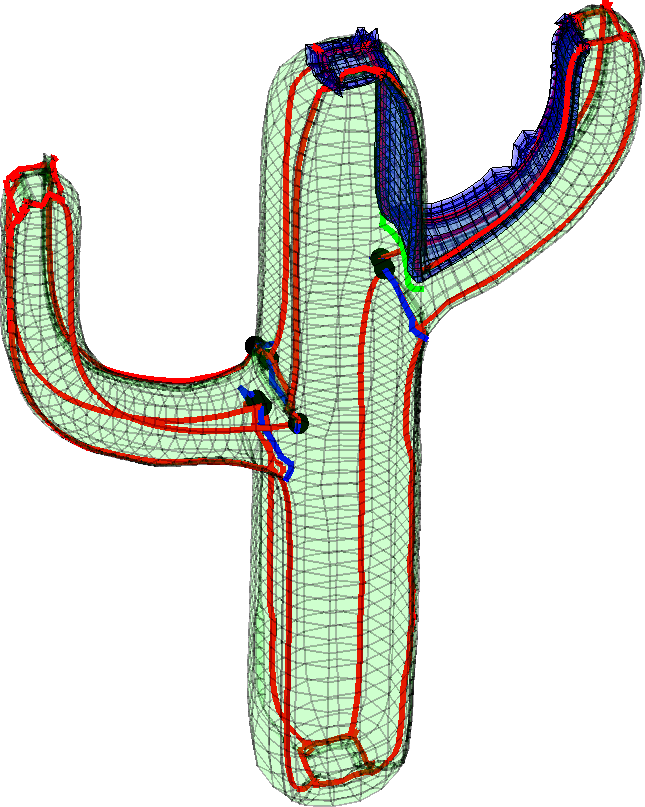}
\includegraphics[width=.115\textwidth]{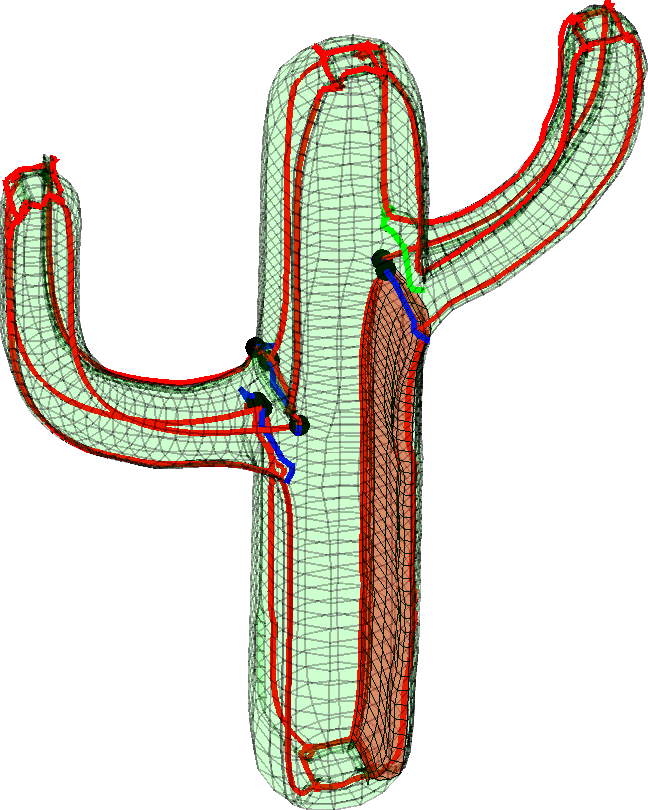}
\includegraphics[width=.115\textwidth]{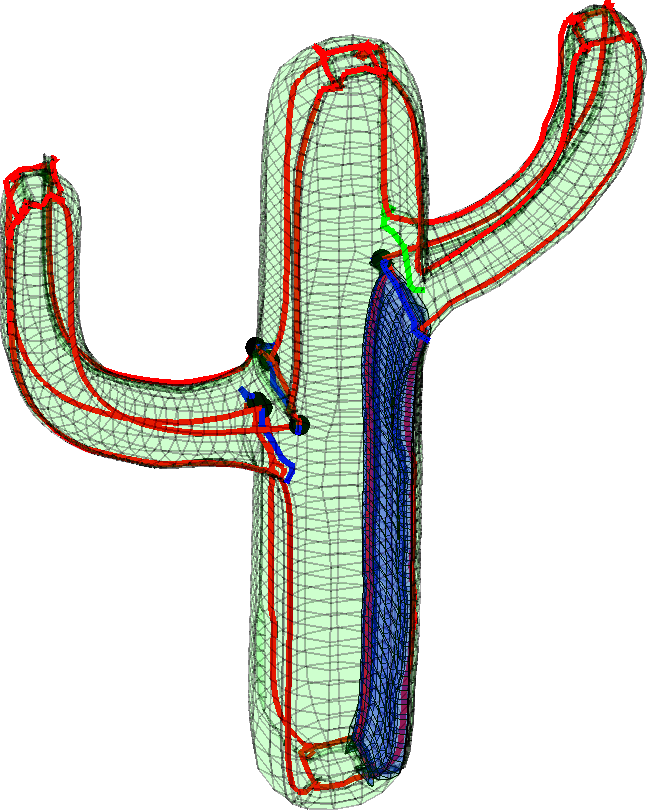}
\includegraphics[width=.115\textwidth]{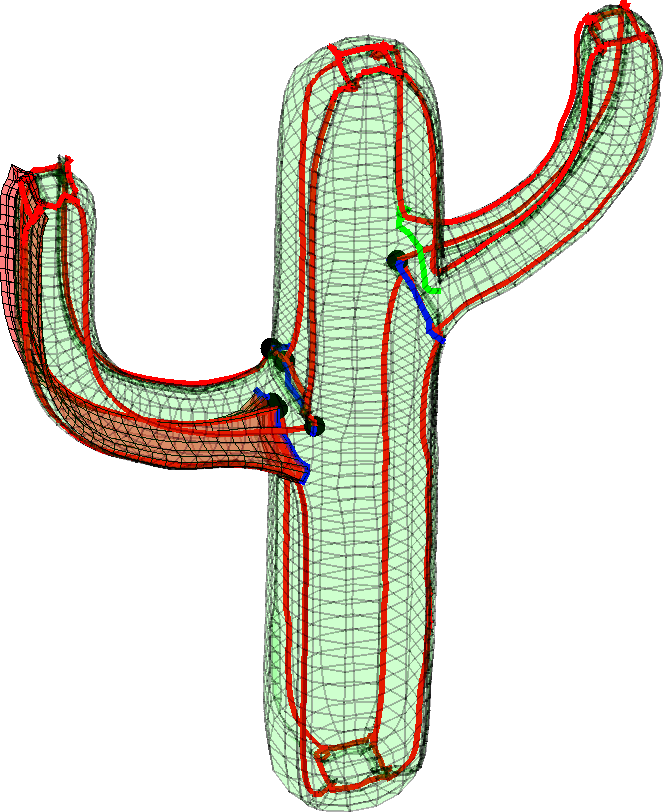}
\includegraphics[width=.115\textwidth]{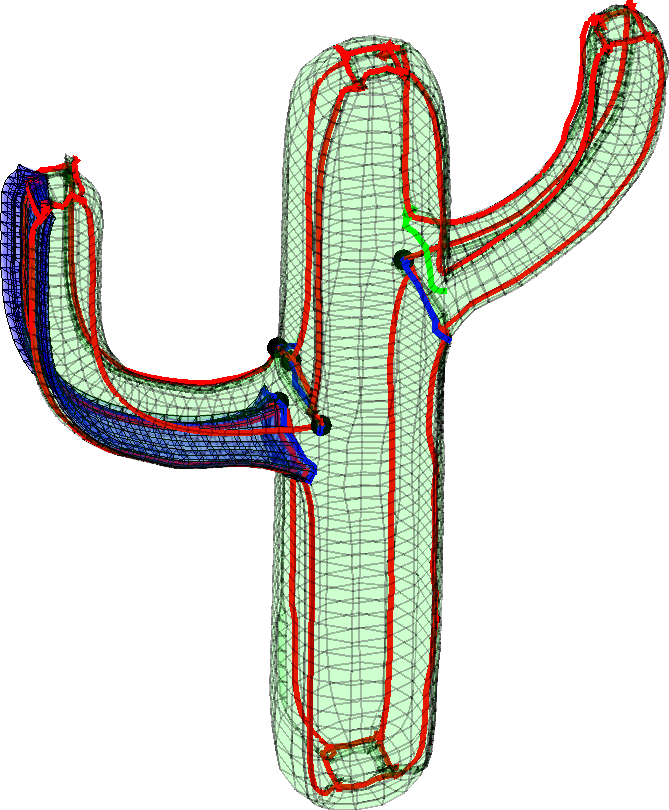}
\includegraphics[width=.115\textwidth]{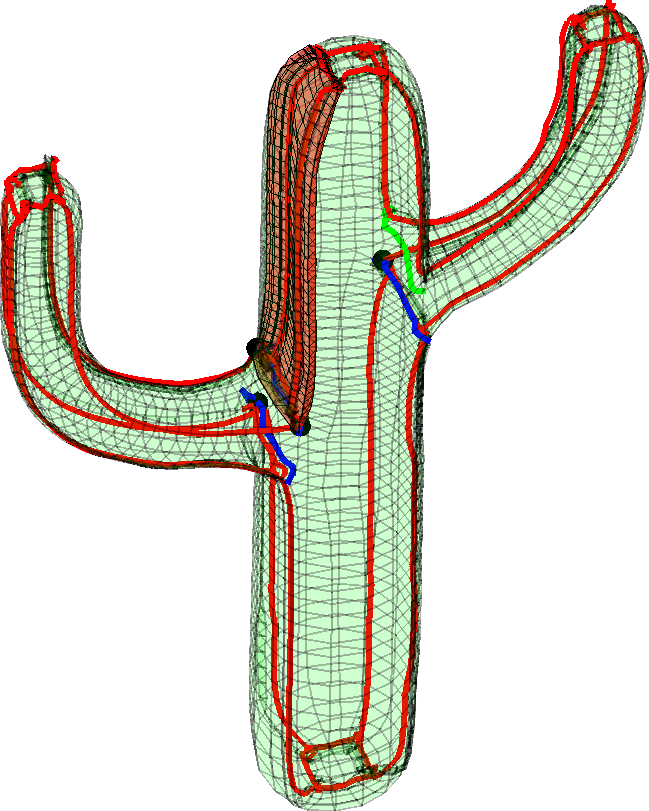}\\
\includegraphics[width=.115\textwidth]{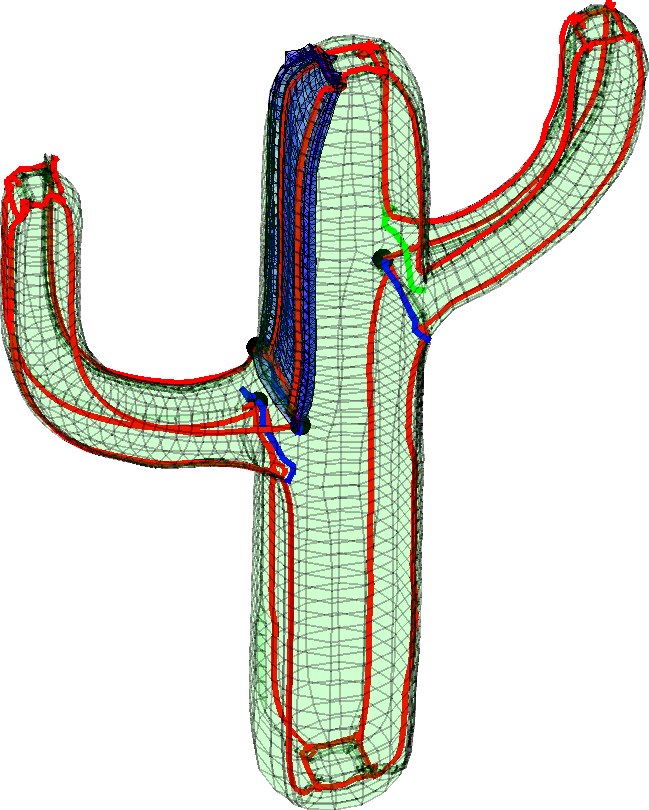}
\includegraphics[width=.115\textwidth]{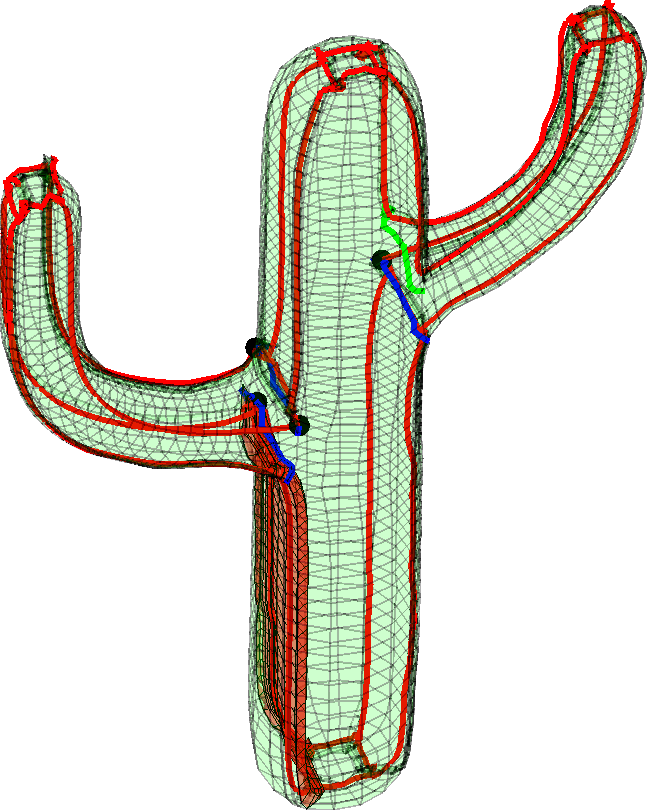}
\includegraphics[width=.115\textwidth]{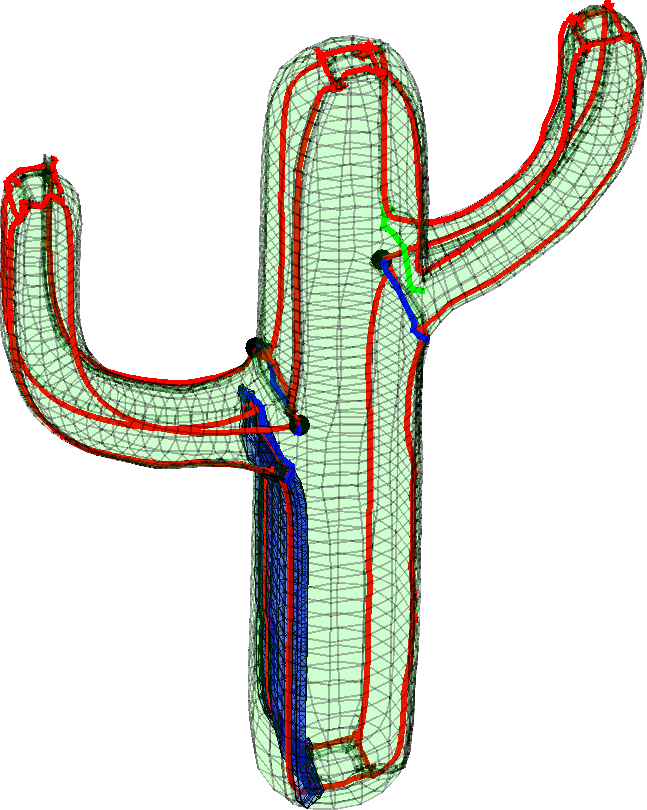}
\includegraphics[width=.115\textwidth]{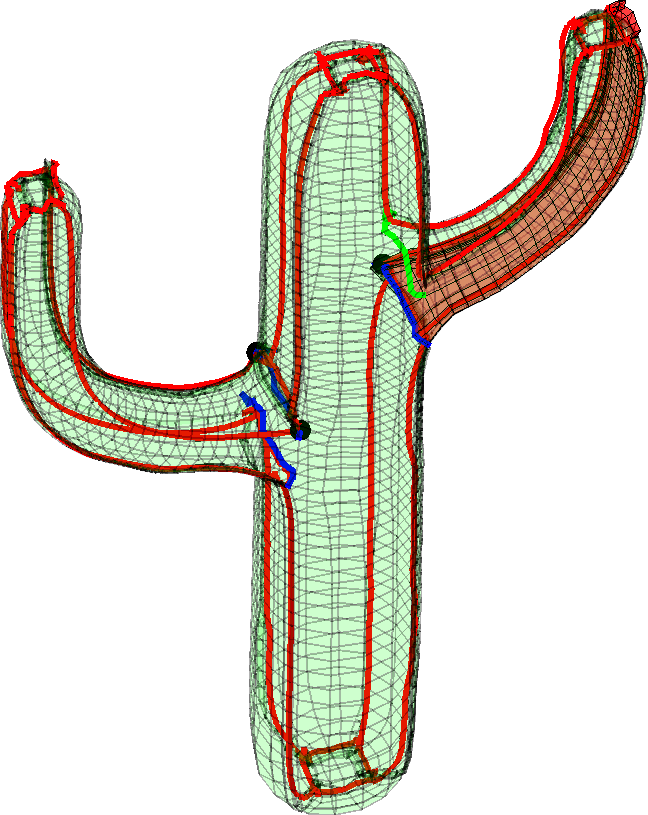}
\includegraphics[width=.115\textwidth]{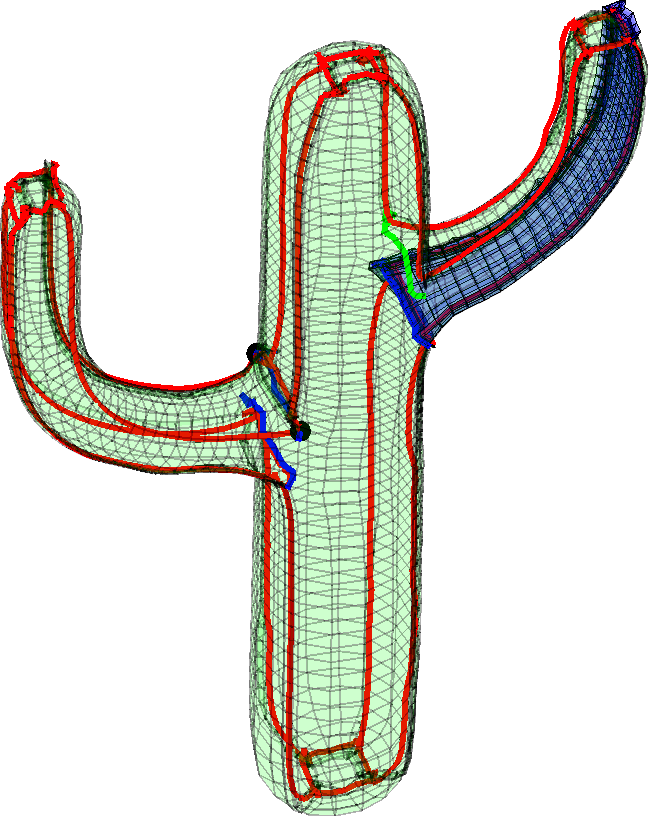}
\includegraphics[width=.115\textwidth]{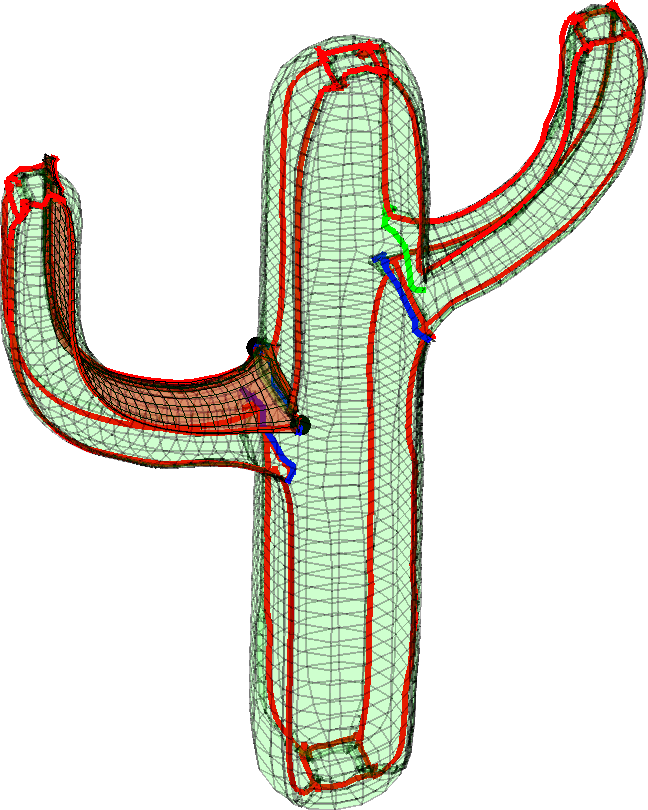}
\includegraphics[width=.115\textwidth]{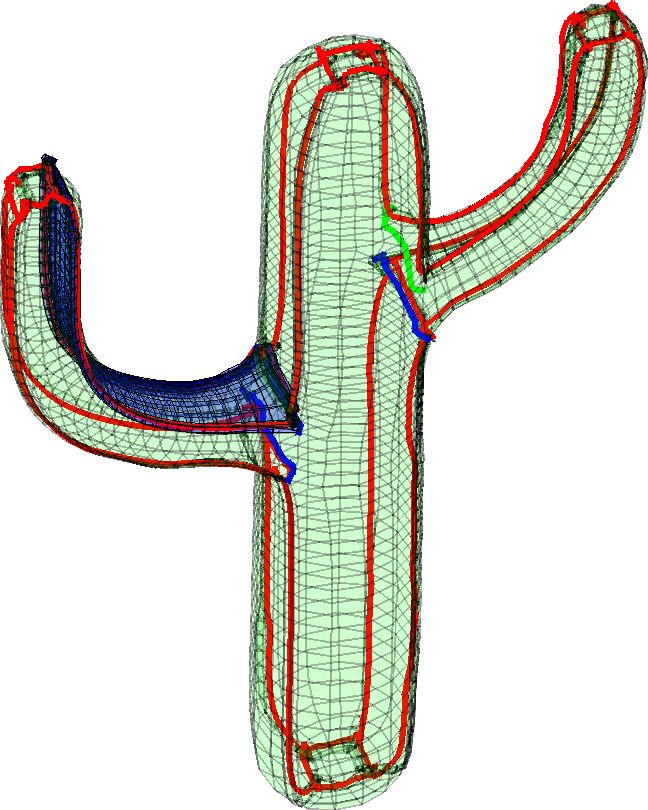}
\includegraphics[width=.115\textwidth]{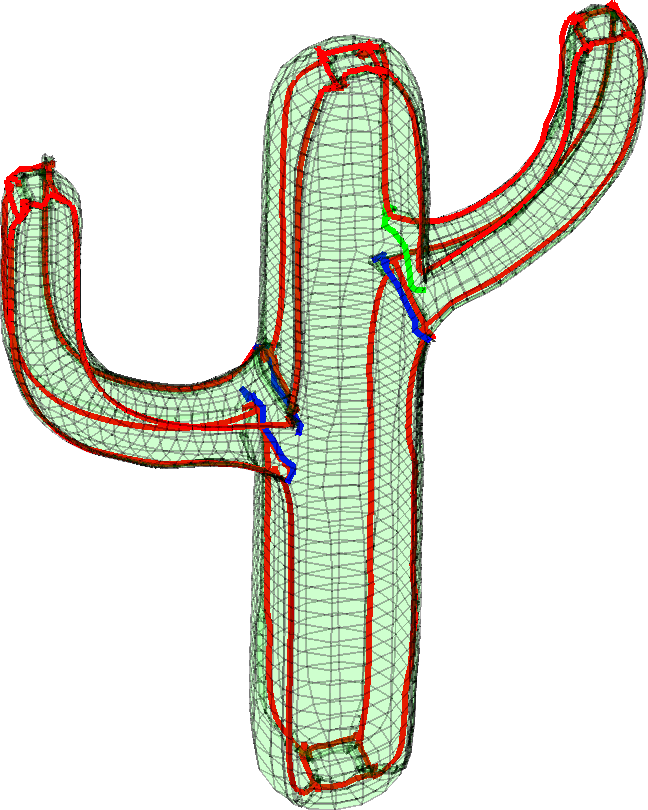}
    \caption{We apply singular decomposition to the cactus mesh from \cite{hexalab}. The number of singular nodes decreases with each sheet inflation ultimately resulting in a singular graph with no nodes at all. While the original singular graph consisted of only valence 3, 4 and 5 nodes, intermediate singular graphs from this sequence contain singular nodes with signature (2,3,0,2). The fully decomposed singular graph has valence 6 curves.}
    \label{fig:cactus}
\end{figure*}

Finally, we apply our decomposition in \autoref{fig:cactus} to the cactus mesh from \cite{hexalab}. While the mesh starts with only singularities of valence 3, 4  and 5, the decomposition results in intermediate singular graphs with nodes of signature (2,3,0,2). The final configuration is seen to contain singular curves of valence 6. Since our goal is only to remove singular nodes, we terminate with valence 6 curves. 
% It is however possible to further split valence 6 curves into pairs of valence 5 curves to further .

\subsection{Scaled Jacobians}
The minimum scaled Jacobian of a hex mesh is a common metric by which to evaluate distortion of the mesh \cite{cubit}. We maximize the minimum scaled Jacobian before and after singular decomposition of each singular node with free boundaries and present the resulting minimum scaled Jacobians in \autoref{tab:tab1}. Unsurprisingly, singular nodes have lower scaled Jacobians than singular curves.

By symmetry, the minimum scaled Jacobian of any hex mesh, regardless of resolution, containing a (4,0,0) node is upper bounded by $\frac{4}{3\sqrt{3}}=.7698$. 
The same bound for a hex mesh containing a (0,0,12) node is
$\frac{\sqrt{2 (5+\sqrt{5})}}{5} = .761$.
These bounds are exactly attained in \autoref{tab:tab1} for the (4,0,0) and (0,0,12) nodes. 
The same upper bound computed for meshes containing valence 3 singular curves is  $\sin(\frac{2\pi}{3})=.866$ and for meshes containing valence 5 singular curves is $\sin(\frac{2\pi}{3})=.951$. 

By performing a sheet inflation to split singular nodes into singular curves, the minimum scaled Jacobian of the (4,0,0) node is increased to $.86$, almost the theoretic upper bound. For (0,4,4) as well, decomposing the singular node into two valence 5 curves brings the minimum scaled Jacobian to almost the theoretic upper bound. Decomposing (0,0,12) node into six valence 5 curves brings significant improvement to the minimum scaled Jacobian, though it is not as close to the theoretic upper bound due to interactions between singular curves.

% .474 -> .768 -> .849
Moving towards full singular graphs, we perform the same scaled Jacobian optimization for a sphere mesh.
% from \cite{symmetricmovingframes}. 
Maximization of its minimum scaled Jacobian results in a value of .768, close to the upper bound for any mesh containing a (4,0,0) node. We apply our decomposition to this mesh and re-optimize its scaled Jacobian resulting in a significant improvement to .849. We run the same optimizations on the padded tetrahedron, \textbf{G1}, and \textbf{G2} resulting in similar increases in the minimum scaled Jacobian. These results are summarized in \autoref{tab:tab1}.

% C .769 .811 
% L .757 .820
% ellipsoid .767 .825

\begin{table}[]
    \centering
    \begin{tabular}{c|c|c|c}
    Mesh&Original&Decomposed&UpperBound\\
         \hline
(4,0,0)     &            0.769      &   0.86    &   0.866 \\
(2,2,2)     &           0.807       & 0.862     &  0.866 \\
(0,4,4)     &           0.896       & 0.943     &  0.951 \\
(1,3,3)     &           0.822       & 0.865     &   0.866 \\
(0,3,6)     &           0.863       & 0.939     &   0.951 \\
(0,2,8)     &            0.82       & 0.937     &   0.951 \\
(2,0,6)     &           0.745       & 0.856     &   0.866 \\
(0,0,12)    &            0.761      &  0.926    &    0.951 \\
Sphere      &         0.768        &  0.849     &   0.866 \\
Padded Tet  &             0.715     &   0.812   &     0.866 \\
\textbf{G1}    &           0.769       & 0.811     &   0.866 \\
\textbf{G2}    &           0.757       & 0.820     &   0.866 \\
Ellipsoid   &            0.767      &  0.825    &    0.866 
    \end{tabular}
    \caption{For various hex meshes, we indicate the maximized minimum scaled Jacobian before and after singular decomposition. The first column indicates the mesh, the second column indicates before singular decomposition, and the third column indicates after. The fourth column indicates a theoretic upper bound on the minimum scaled Jacobian for the decomposed mesh. It essentially indicates the presence of a valence 3 or 5 singular curve. The maximized minimum scaled Jacobian is invariably higher post singular decomposition.}
    \label{tab:tab1}
\end{table}

\begin{comment}
\begin{table*}[]
    \centering
    % \begin{tabular}{c|c|c|c|c|c|c|c|c|c|c|c}
    %      Hex Mesh &(4,0,0)&(2,2,2)&(0,4,4)&(1,3,3)&(0,3,6)&(0,2,8)&(2,0,6)&(0,0,12)&Sphere&Padded-Tet&Padded-L\\
    %      \hline
    %      Original & .769 & .807 & .896 & .822 & .863 & .820 & .745 & .761  & .768&.715&.494\\
    %      Decomposed    & .860 & .862 & .943 & .865 & .939 & .937 & .856 & .926 & .849&.812&.826\\
    %      Upper Bound  & .866 & .866 & .951 & .866 & .951 & .951 & .866 & .951 &.866&.866&.866
    % \end{tabular}
    \begin{tabular}{c|c|c|c|c|c|c|c|c|c|c|c}
         Hex Mesh &(4,0,0)&(2,2,2)&(0,4,4)&(1,3,3)&(0,3,6)&(0,2,8)&(2,0,6)&(0,0,12)&Sphere&Padded-Tet\\
         \hline
         Original & .769 & .807 & .896 & .822 & .863 & .820 & .745 & .761  & .768&.715\\
         Decomposed    & .860 & .862 & .943 & .865 & .939 & .937 & .856 & .926 & .849&.812\\
         Upper Bound  & .866 & .866 & .951 & .866 & .951 & .951 & .866 & .951 &.866&.866
    \end{tabular}
    \caption{For various hex meshes, we indicate the maximized minimum scaled Jacobian before and after singular decomposition. The first row indicates before, and the second row indicates after. The third row indicates a theoretic upper bound on the minimum scaled jacobian for the decomposed mesh. It essentially indicates the presence of a valence 3 or 5 singular curve. The maximized minimum scaled Jacobian is invariably higher post singular decomposition.}
    \label{tab:tab1}
\end{table*}
\end{comment}

\section{Conclusions and Future Work}
This paper presents % Our results show that 
singular nodes as the result of gluing singular curves together at a point and shows
% forming a complex 3D knot of singularities. 
that the reverse can be done via sheet inflation to untangle singular nodes into simple singular curves.
% We have shown that singular nodes of a hex mesh can be untangled into singular curves through sheet inflation operations. 
This removes the 3D complexity of singular nodes leaving meshes with lower distortion. We demonstrate this procedure on a variety of meshes showing in all cases that no singular nodes are left behind.

The main limitation of our work is that the local sheets we prescribe for decomposing a singular node are not guaranteed to propagate globally while avoiding self-intersection. This can result in the inability to decompose a singular graph by removing all of its singular nodes. We expect that a valid sheet inflation can always be found and leave its efficient computation to future work. 

While our method decreases the number of singular nodes in a mesh, its \emph{base complex}\cite{gao2015hexahedral} may increase in size. This tradeoff should be considered by the user as they may have to choose between a larger scaled Jacobian or maintaining a small number of base complex cells.

Our results can be extended to design new ways of modifying the singular graph of a mesh. Instead of only decomposing nodes into curves, one can \emph{rewire} singular curves by merging them at a node with sheet collapse, and decomposing them in a different way from how they were combined. 
For example, consider the (4,0,0) node in \autoref{fig:333555}. It's sphere triangulation is a tetrahedron which contains three distinct cycles of length four. Therefore, it is possible to bring two valence 3 singular curves together to form a (4,0,0) node and split them apart again in three distinct ways. Each one results in a different singular graph, none of which require introducing new singularities.

Many works aim to build minimal degree smooth parameterizations of quad meshes with singularities \cite{karvciauskas2016minimal, karvciauskas2019refinable}. These methods do not clearly generalize to the volumetric case where singular nodes may suffer decreased continuity from methods designed for 2D singularities. A promising approach following our work is then to decompose any given singular graph so that no singular nodes exist. We expect that it is easier to adapt quad mesh singular parameterization methods to singular curves that are just 2D singularities extruded into 3D than it is to adapt parameterization methods for singular nodes. Even if one derived a singular node parameterization method for a specific singular node type, there is no guarantee that it extends to any other node type. This problem is made easier by only needing to consider singular curves after decomposition.

\section*{Acknowledgements} 
This project was launched at the Summer Geometry Initiative (SGI) 2021, supported by National Science Foundation grant DMS-2103933, Army Research Office grant W911NF2110095, and generous donations from corporate partners.
Paul Zhang acknowledges the support of the Department of Energy Computer Science Graduate Fellowship and the Mathworks Fellowship. The authors thank Justin Solomon and David Bommes for many valuable discussions.

\bibliography{demoref}

\begin{thebibliography}{10}
\newcommand{\enquote}[1]{``#1''}
\expandafter\ifx\csname url\endcsname\relax
  \def\url#1{\texttt{#1}}\fi
\expandafter\ifx\csname urlprefix\endcsname\relax\def\urlprefix{URL }\fi

\bibitem{weingarten1994controversy}
Weingarten V.I.
\newblock \enquote{The controversy over hex or tet meshing.}
\newblock \emph{Machine design}, vol.~66, no.~8, 74--76, 1994

\bibitem{cifuentes1992performance}
Cifuentes A., Kalbag A.
\newblock \enquote{A performance study of tetrahedral and hexahedral elements
  in 3-D finite element structural analysis.}
\newblock \emph{Finite Elements in Analysis and Design}, vol.~12, no. 3-4,
  313--318, 1992

\bibitem{Brett1995}
Benzley S., Perry E., Merkley K., Clark B., Sjaardema G.
\newblock \enquote{A Comparison of All Hexagonal and All Tetrahedral Finite
  Element Meshes for Elastic and Elasto-Plastic Analysis.}
\newblock \emph{International Meshing Roundtable}, vol.~17, 01 1995

\bibitem{ray_practical_2016}
Ray N., Sokolov D., L{\'e}vy B.
\newblock \enquote{Practical 3D frame field generation.}
\newblock \emph{ACM Transactions on Graphics}, vol.~35, no.~6, 1--9, Nov. 2016.
\newblock \urlprefix\url{http://dl.acm.org/citation.cfm?doid=2980179.2982408}

\bibitem{Lyon:2016:HRH}
Lyon M., Bommes D., Kobbelt L.
\newblock \enquote{HexEx: Robust Hexahedral Mesh Extraction.}
\newblock \emph{ACM Trans. Graph.}, vol.~35, no.~4, 123:1--123:11, Jul. 2016

\bibitem{cubecover}
Nieser M., Reitebuch U., Polthier K.
\newblock \enquote{{CubeCover}--Parameterization of 3D Volumes.}
\newblock \emph{Computer Graphics Forum}, vol.~30, no.~5, 1397--1406, Aug.
  2011.
\newblock \urlprefix\url{http://doi.wiley.com/10.1111/j.1467-8659.2011.02014.x}

\bibitem{huang_boundary_2011}
Huang J., Tong Y., Wei H., Bao H.
\newblock \enquote{Boundary aligned smooth 3D cross-frame field.}
\newblock \emph{{SIGGRAPH} {Asia}}, p.~1. ACM Press, Hong Kong, China, 2011.
\newblock \urlprefix\url{http://dl.acm.org/citation.cfm?doid=2024156.2024177}

\bibitem{solomon_boundary_2017}
Solomon J., Vaxman A., Bommes D.
\newblock \enquote{Boundary Element Octahedral Fields in Volumes.}
\newblock \emph{ACM Transactions on Graphics}, vol.~36, no.~3, 1--16, May 2017.
\newblock \urlprefix\url{http://dl.acm.org/citation.cfm?doid=3087678.3065254}

\bibitem{li2012}
Li Y., Liu Y., Xu W., Wang W., Guo B.
\newblock \enquote{All-hex meshing using singularity-restricted field.}
\newblock \emph{ACM Trans. Graph.}, vol.~31, no.~6, 177, 2012

\bibitem{Jiang:2013:AHM}
Jiang T., Huang J., Wang Y., Tong Y., Bao H.
\newblock \enquote{Frame Field Singularity Correction for Automatic
  Hexahedralization.}
\newblock \emph{IEEE Trans. on Visualization \& Computer Graphics}, vol.~20,
  no.~8, 1189--1199, Aug. 2014

\bibitem{liu_singularity-constrained_2018}
Liu H., Zhang P., Chien E., Solomon J., Bommes D.
\newblock \enquote{Singularity-constrained octahedral fields for hexahedral
  meshing.}
\newblock \emph{ACM Transactions on Graphics}, vol.~37, no.~4, 1--17, Jul.
  2018.
\newblock \urlprefix\url{http://dl.acm.org/citation.cfm?doid=3197517.3201344}

\bibitem{corman_symmetric_2019}
Corman E., Crane K.
\newblock \enquote{Symmetric Moving Frames.}
\newblock \emph{ACM Trans. Graph.}, vol.~38, no.~4, Jul. 2019.
\newblock \urlprefix\url{https://doi.org/10.1145/3306346.3323029}

\bibitem{gao2017robust}
Gao X., Panozzo D., Wang W., Deng Z., Chen G.
\newblock \enquote{Robust structure simplification for hex re-meshing.}
\newblock \emph{ACM Transactions on Graphics (TOG)}, vol.~36, no.~6, 1--13,
  2017

\bibitem{gao2015hexahedral}
Gao X., Deng Z., Chen G.
\newblock \enquote{Hexahedral mesh re-parameterization from aligned
  base-complex.}
\newblock \emph{ACM Transactions on Graphics (TOG)}, vol.~34, no.~4, 1--10,
  2015

\bibitem{xu2021singularity}
Xu G., Ling R., Zhang Y.J., Xiao Z., Ji Z., Rabczuk T.
\newblock \enquote{Singularity Structure Simplification of Hexahedral Meshes
  via Weighted Ranking.}
\newblock \emph{Computer-Aided Design}, vol. 130, 102946, 2021

\bibitem{campen2019seamless}
Campen M., Shen H., Zhou J., Zorin D.
\newblock \enquote{Seamless parametrization with arbitrary cones for arbitrary
  genus.}
\newblock \emph{ACM Transactions on Graphics (TOG)}, vol.~39, no.~1, 1--19,
  2019

\bibitem{peng2011connectivity}
Peng C.H., Zhang E., Kobayashi Y., Wonka P.
\newblock \enquote{Connectivity editing for quadrilateral meshes.}
\newblock \emph{Proceedings of the 2011 SIGGRAPH Asia conference}, pp. 1--12.
  2011

\bibitem{ledoux2010topological}
Ledoux F., Shepherd J.
\newblock \enquote{Topological modifications of hexahedral meshes via sheet
  operations: a theoretical study.}
\newblock \emph{Engineering with Computers}, vol.~26, no.~4, 433--447, 2010

\bibitem{hexalab}
Bracci M., Tarini M., Pietroni N., Livesu M., Cignoni P.
\newblock \enquote{HexaLab.net: An online viewer for hexahedral meshes.}
\newblock \emph{Computer-Aided Design}, vol. 110, 24--36, 2019.
\newblock
  \urlprefix\url{https://www.sciencedirect.com/science/article/pii/S0010448518304238}

\bibitem{cubit}
Quadros R.
\newblock \enquote{The CUBIT Geometry and Meshing Toolkit.}
\newblock \url{https://cubit.sandia.gov/}, 2021

\bibitem{karvciauskas2016minimal}
Kar{\v{c}}iauskas K., Peters J.
\newblock \enquote{Minimal bi-6 G2 completion of bicubic spline surfaces.}
\newblock \emph{Computer Aided Geometric Design}, vol.~41, 10--22, 2016

\bibitem{karvciauskas2019refinable}
Kar{\v{c}}iauskas K., Peters J.
\newblock \enquote{Refinable smooth surfaces for locally quad-dominant meshes
  with T-gons.}
\newblock \emph{Computers \& graphics}, vol.~82, 193--202, 2019

\end{thebibliography}

\end{document}